%% file: Main.tex
\documentclass[12pt]{article}
\input{Macros.tex}

\title{Good Quantum Locally Testable Codes from Lossless Cubical Complexes}

\author[1]{Itay Cohen\thanks{Supported by ERC starting grant 949499. \\ Email: {\tt itay3@mail.tau.ac.il}.}}
\author[1]{Itai Leigh\thanks{Supported by the European Union (ERC, EACTP, 101142020), and the Danish-Israeli Study Foundation in Memory of Josef and Regine Nachemson.  \\Email: {\tt itai.leigh@mail.huji.ac.il}.}}
\author[2]{Assaf Reiner\thanks{Supported by the European Research Council (ERC) under the European Union’s Horizon 2020 (N. 882751). Email: {\tt assaf.reiner@mail.huji.ac.il}.}}
\author[1]{Amnon Ta-Shma\thanks{Funded by the Israel Science Foundation grant 443/22. \\ Email: {\tt amnon@tauex.tau.ac.il}.}}
\author[3]{\\ Elad Tzalik\thanks{Supported by the Adams Fellowship Program of the Israel Academy of Sciences and Humanities. \\ Email: {\tt elad.tzalik@weizmann.ac.il}.}}

\affil[1]{Department of Computer Science, Tel Aviv University, Israel.}
\affil[2]{Department of Mathematics, Hebrew University of Jerusalem, Israel.}
\affil[3]{Department of Mathematics and Computer Science, Weizmann Institute of Science, Israel. }
\date{}

\begin{document}
\maketitle

\begingroup
\renewcommand{\thefootnote}{}
\footnotetext{Any views, opinions, findings, and conclusions or recommendations expressed in this material are those of the author(s) and do not necessarily reflect the views of the European Union or the European Research Council Executive Agency. Neither the European Union nor any granting authority can be held responsible for them.}
\endgroup

\begin{abstract}
Sipser and Spielman constructed LDPC codes from either bipartite \emph{spectral} expanders or one-sided \emph{lossless} expanders. 
In higher dimensions, \emph{spectral} expansion similarly played a central role in the constructions of asymptotically good classical LTCs and qLDPC codes by Dinur, Evra, Livne, Lubotzky, and Mozes and by Panteleev and Kalachev. 
Alternatively, Lin and Hsieh constructed classical LTCs and qLDPC codes from two-dimensional \emph{lossless} cubical complexes.

In this work we develop the higher-dimensional \emph{lossless} approach. 
We do not construct the required high-dimensional lossless cubical complexes; rather, we investigate what their existence would imply.
We associate with a high-dimensional cubical complex a \emph{level chain complex}, whose chain groups are supported on the level sets of the Boolean cube rather than on its cells.
Our main technical contribution is a clean local-to-global theorem for this structure: suitable one-dimensional lossless expansion in the directional graphs implies small-set coboundary expansion of the global level complex.
As a consequence, sufficiently imbalanced, two-sided lossless four-dimensional cubical complexes give rise to asymptotically good quantum locally testable codes.
We expect the local-to-global principle developed here to have further applications.

% Independently of our work, asymptotically good qLTCs were recently constructed using \emph{spectral} high-dimensional cubical complexes, and \emph{cellular} chain complexes.
% % We view the spectral and lossless approaches as complementary, much as in the classical expander-code setting: each offers advantages that are not naturally captured by the other.
\end{abstract}
\clearpage
\tableofcontents
\clearpage

\input{Introduction}
\input{Cubical-test}
\input{Local}
\input{GTL}

\input{Qltc}

\bibliographystyle{alpha}
\bibliography{Bib}

%\appendix

%\input{HsiehLin}
\end{document}

%% file: Macros.tex
\usepackage{graphicx}
\usepackage{url}
\usepackage[rgb]{xcolor}
\usepackage{verbatim}
\usepackage{amsgen,amsmath,amstext,amsbsy,amsopn,amssymb,amsthm}
\usepackage{fancyhdr}
\usepackage{tikz}
\usepackage{float}
\usepackage{algorithm}
\usepackage{algpseudocode}
\usepackage{subcaption}
\usetikzlibrary{positioning, fit}
\usetikzlibrary{arrows.meta,shapes.geometric}

\usepackage{rotating}

\usepackage{mathtools}
\usepackage{nicefrac}

\usepackage[margin=0.85in]{geometry}

\usepackage{thmtools}
\usepackage[backref,colorlinks=true,linkcolor=blue,citecolor=violet]{hyperref}
\usepackage[nameinlink, noabbrev, capitalize]{cleveref}
\hypersetup{hypertexnames=false} %fixing mistake in Cref linking (Theorem 3.2 \label{tlg} somehow was linked incorrectly)

\usepackage[utf8]{inputenc}
\usepackage[english]{babel}
\IfFileExists{authblk.sty}{%
  \usepackage{authblk}%
}{%
  \newcounter{Maxaffil}%
}
\usepackage{setspace}

\declaretheorem[
  name=Theorem,
  numberwithin=section
]{theorem}

\declaretheorem[
  name=Definition,
  sibling=theorem
]{definition}

\declaretheorem[
  name=Lemma,
  sibling=theorem
]{lemma}

\declaretheorem[
  name=Claim,
  sibling=theorem
]{claim}

\declaretheorem[
  name=Proposition,
  sibling=theorem
]{proposition}

\declaretheorem[
  name=Corollary,
  sibling=theorem
]{corollary}

\declaretheorem[
  name=Remark,
  sibling=theorem
]{remark}

\numberwithin{equation}{section}

\newcounter{Chapcounter}

\newcommand{\chapter}[1] 
{ {\centering          
  \addtocounter{Chapcounter}{1} \Large \underline{\textbf{ \color{blue} Chapter \theChapcounter: ~#1}} }   
  \addcontentsline{toc}{section}{ \color{blue} Chapter:~\theChapcounter~~ #1}    
}

\newcommand{\eps}{\varepsilon}
\newcommand{\set}[1]{\left\{#1\right\}}% parentheses
\newcommand{\prn}[1]{\left(#1\right)}% parentheses
\newcommand{\brk}[1]{\left[#1\right]}% brackets
\newcommand{\norm}[1]{\left\lVert#1\right\rVert}% norm
\newcommand{\abs}[1]{\left\lvert#1\right\rvert}% absolute value

\newcommand{\defeq}{\mathrel{:=}}

\newcommand{\Ftwo}{\mathbb{F}_2}
\newcommand{\meet}[1][]{%
    \if\relax\detokenize{#1}\relax
        \wedge
    \else
        \mathbin{\overset{#1}{\wedge}}%
    \fi
}

\newcommand{\Ord}{\operatorname{Ord}}
\newcommand{\End}{\operatorname{End}}
\newcommand{\Ter}{\operatorname{Ter}}
\newcommand{\Ch}{\operatorname{Ch}}
\newcommand{\Ex}{\operatorname{E}}
\newcommand{\Same}{\operatorname{Same}}
\newcommand{\Sametype}{\operatorname{SameType}}
\newcommand{\Light}{\operatorname{Light}}
\newcommand{\Lightsquare}{\operatorname{LightSquare}}

\newcommand{\type}{\operatorname{type}}

\DeclareMathOperator{\dist}{dist}

\DeclareMathOperator{\im}{im}

\DeclareMathOperator{\Lk}{Lk}
\DeclareMathOperator{\supp}{supp}
\DeclareMathOperator{\Roots}{Rts}

\newcommand{\M}{\mathsf M}

\newcommand{\calC}{\mathcal{C}}
\newcommand{\FF}{\mathbb{F}}

%% file: Introduction.tex
\section{Introduction}
\label{sec:introduction}

Error-correcting codes enable reliable communication and storage in the presence of noise by encoding information into longer, redundant words. A linear code can be described either by a generating matrix or by a parity-check matrix. Although these two descriptions define the same code, they correspond to fundamentally different computational tasks. The generating matrix maps an information message to its encoded codeword. The parity-check matrix, on the other hand, acts on a received, possibly corrupted, word, producing its syndrome and enabling efficient  verification, decoding, and local correction.

These two representations exhibit a fundamental asymmetry. Every column of a generating matrix is itself a codeword, as it is the image of a standard basis vector. Consequently, if the code has linear distance, every nonzero column must have linear Hamming weight, ruling out sparse generating matrices. Sparse parity-check matrices, in contrast, are fully compatible with good code parameters. This observation led Gallager to introduce \emph{low-density parity-check (LDPC) codes}~\cite{Gal62}, whose parity-check matrices have constant row and column weight. Since their introduction, LDPC codes have become one of the central objects in coding theory, combining excellent asymptotic parameters with efficient decoding algorithms~\cite{SS96}.

The sparsity of the parity-check matrix turned out to be important for much more than efficient decoding. Because each parity check involves only constantly many symbols, it enables algorithms that inspect only a small portion of the received word. Beginning with Gallager's iterative decoding algorithms~\cite{Gal62}, this observation evolved into a broad theory of \emph{local} algorithms for error-correcting codes. Depending on the task, one seeks to locally correct individual symbols of a corrupted codeword, locally decode the underlying message, locally test whether a word belongs to or is close to the code, or infer global properties of a word from a small number of queries. These notions have become central throughout coding theory and have found numerous applications in complexity theory, property testing, probabilistically checkable proofs, data structures, and cryptography.

Among these local tasks, local testability occupies a particularly distinguished position. A code is \emph{locally testable} if there exists a randomized algorithm that queries only constantly many coordinates of the input word and accepts every codeword with probability one, while rejecting words with probability proportional to their distance from the code. Local testability is one of the central themes underlying PCP constructions. Goldreich and Sudan isolated this notion in the form of locally testable codes (LTCs)~\cite{GS02}, and it is a key ingredient in Dinur's gap-amplification proof of the PCP theorem~\cite{Din07}. Dinur further showed how to construct constant-query locally testable codes encoding $n$ information bits into $n\operatorname{polylog}(n)$, leaving open whether it is possible to eliminate the polylogarithmic overhead. This question was resolved only recently by the breakthrough constructions of Panteleev and Kalachev~\cite{PK22} and independently Dinur, Evra, Livne, Lubotzky and Mozes~\cite{DELLM26}, who constructed asymptotically good $c^3$-locally testable codes (the $c^3$ indicates all three parameters are constants: rate, relative distance and number of queries). Nevertheless, the existence of linear-length PCPs remains one of the central open problems in complexity theory.

In a seemingly unrelated direction, the quest for scalable quantum computation has motivated the development of quantum error-correcting codes. Since the pioneering works of Shor~\cite{Sho95} and Steane~\cite{Ste96} on quantum error correction, and the threshold theorems of Aharonov and Ben-Or~\cite{AB97} and of Knill, Laflamme, and Zurek~\cite{KLZ96} establishing the possibility of fault-tolerant quantum computation, quantum error correction has been regarded as one of the principal ingredients for building scalable quantum computers. A particularly influential milestone was Kitaev's toric code~\cite{Kit03}, which demonstrated that quantum information can be protected using local stabilizer measurements arranged on a two-dimensional lattice, introducing both topological quantum codes and geometric locality as guiding principles. The toric code, however, has vanishing rate, motivating the search for quantum codes with sparse stabilizer checks and \emph{asymptotically good} parameters. This led to the study of \emph{quantum low-density parity-check (qLDPC) codes}, the quantum analogue of classical LDPC codes. Following a sequence of breakthrough constructions, it is now known that qLDPC codes can simultaneously achieve constant rate, linear distance, and sparse stabilizer checks, resolving a longstanding open problem in quantum coding theory.

Beyond efficient quantum error correction, one may ask for stronger local robustness properties. Quantum locally testable codes (qLTCs), introduced by Aharonov and Eldar~\cite{AE17}, require that every low-energy state is close to the code space, providing a quantum analogue of classical local testability. Such robust local detection has attracted considerable attention because of its potential connections to self-correcting quantum memories, the quantum PCP conjecture, and fault-tolerant architectures with passive error suppression. The characterization of quantum local testability developed by Aharonov, Eldar, and collaborators naturally suggests a higher-dimensional homological framework, as we next explain.

\subsection{The gradual emergence of expansion and homology in coding theory}

Although the classical and quantum theories developed largely independently, a common mathematical picture gradually emerged. Expansion emerged as the combinatorial mechanism that transforms sparse local constraints into global distance and robustness guarantees, while homology emerged as the geometric framework that organizes these constraints and their interactions. 
We next review how such ideas evolved through the successive development of LDPC codes, LTCs, qLDPC codes, and qLTCs.

Gallager's original work already suggested that the combinatorial structure of the Tanner graph plays a decisive role in the performance of an LDPC code. A natural first principle was to require the graph to have large girth, ensuring that local neighborhoods remain tree-like and preventing short cycles from creating spurious correlations during iterative decoding. While this viewpoint proved highly successful in practice, it could not by itself explain the global distance and robustness properties exhibited by non-explicit LDPC codes.

The decisive advance came with the work of Sipser and Spielman~\cite{SS96}, who identified combinatorial expansion as the property underlying good LDPC codes. They showed that if every sufficiently small set of variable nodes has almost the maximum possible number of neighboring check nodes, then the resulting LDPC code has linear distance and admits efficient decoding. This theorem identified a remarkably clean combinatorial mechanism explaining why sparse local constraints can nevertheless enforce strong global properties. At the time, however, such explicit lossless expanders with the required parameters were not known. Sipser and Spielman~\cite{SS96} therefore complemented their combinatorial theorem with an elegant analysis based on spectral expanders, showing that comparable coding-theoretic guarantees can also be derived from spectral expansion assumptions.

The work of Sipser and Spielman~\cite{SS96} illustrates a recurring theme in the development of coding theory where the identification of a clean combinatorial principle often precedes its realization. One first isolates the combinatorial property underlying a coding-theoretic phenomenon, and only later learns how to construct objects satisfying this property explicitly, or to derive comparable conclusions from weaker assumptions using more sophisticated techniques. 
The study of lossless expanders provides a striking illustration of this pattern. Random bipartite graphs establish the existence of optimal lossless expanders by probabilistic arguments, while explicit constructions proved considerably more challenging. Over the following decades, a substantial body of work developed explicit constructions of balanced and unbalanced bipartite lossless expanders, 
eventually realizing the combinatorial objects that had originally motivated the theory~\cite{CRVW02,HLMRZ25}. 

Local testability introduces a phenomenon that does not arise in ordinary LDPC codes. A simple thought experiment illustrates the difficulty. Consider an LDPC code defined by a full-rank parity-check matrix whose Tanner graph is a good expander. Removing a single parity check enlarges the code by a factor of two, while leaving the expansion, and therefore the linear distance, essentially unchanged. Consequently, every new codeword lies far from the original code, yet violates only the deleted parity check. Such a word is globally far from the code but locally appears almost valid, showing that independent parity checks cannot support local testability. This intuition was formalized by Ben-Sasson, Guruswami, Kaufman, Sudan, and Viderman~\cite{BSGKSV09}, who proved that locally testable codes necessarily require substantial redundancy among their checks. 

From an algebraic perspective, this redundancy means that the parity-check matrix is no longer an isolated linear map. Besides describing the code as its kernel, one must also understand the linear relations among its rows. Equivalently, the image of the parity-check matrix must itself satisfy another family of linear constraints. Rather than a single linear map, one is naturally led to a sequence of linear maps
$C_2 \xrightarrow{\partial_2} C_1 \xrightarrow{\partial_1} C_0$,
where the image of each map is contained in the kernel of the next. This is precisely the setting of chain complexes.

This homological viewpoint emerged only gradually. Initially, an LDPC code is naturally described by its Tanner graph. From today's perspective, however, the code associated with a Tanner graph is simply a one-dimensional chain complex. This interpretation is easy to overlook, and perhaps rightly so, because in one dimension the homological language adds little beyond the graph itself. The graph was always present; what emerged only later was the realization that it belongs to a hierarchy of increasingly rich chain complexes. Only after two-dimensional constructions entered coding theory did it become clear that classical LDPC codes were already the first instance of a broader homological picture.

This perspective proved remarkably fruitful. Dinur, Evra, Livne, Lubotzky, and Mozes~\cite{DELLM26} constructed asymptotically good locally testable codes using left-right Cayley complexes, which can be viewed as a special class of two-dimensional cubical complexes. Here, homological expansion in the middle layer of the chain complex provides the redundancy between constraints required for local testability. Around the same time, the breakthrough constructions of Panteleev and Kalachev~\cite{PK22}, together with subsequent developments, demonstrated that closely related two-dimensional structures also suffice to construct asymptotically good quantum LDPC codes.

Following these developments, Lin and Hsieh~\cite{HL22a,HL22b} applied the balanced product construction of Breuckmann and Eberhardt~\cite{BreuckmannEberhardt2021} to lossless expanders, establishing the homological expansion properties underlying their classical LTCs and quantum LDPC codes. A particularly appealing feature of their approach is that in both settings their results share the same combinatorial object, and the same expansion argument. Viewed in the language of lossless square complexes, their results provide the basis for the unified framework we develop here.

The distinction lies in where the information is stored. In the classical construction, the bits occupy one end of the chain complex, the adjacent boundary map defines the parity checks, and the second boundary map provides the checks on checks required for local testability. In the quantum construction, the two ends of the chain complex correspond to the $X$- and $Z$-stabilizer, with the qubits sitting between them.  Correspondingly, the classical construction requires one-sided lossless expansion, whereas controlling both quantum distances naturally leads to two-sided lossless expansion. 

The natural higher-dimensional analogue of a square complex is a cubical complex.
Explicit higher-dimensional cubical complexes with strong spectral expansion are already known and have led to the construction of lossless vertex expanders mentioned before. 
The characterization of quantum local testability developed by Aharonov, Eldar, and collaborators naturally suggests extending the homological picture one step further. Although quantum local testability is defined for arbitrary stabilizer codes, the CSS setting admits a particularly natural realization in which one introduces an additional layer of relations for the $X$-stabilizer checks and another one for the $Z$-stabilizer checks. This naturally points toward higher-dimensional cubical complexes equipped with suitable local expansion properties. Indeed, Dinur, Lin and Vidick~\cite{DLV24} constructed quantum locally testable codes with polylogarithmic overhead this way.

In contrast, higher-dimensional cubical complexes satisfying suitable two-sided lossless expansion are not currently known, even by probabilistic arguments. This situation closely parallels the earlier development of LDPC codes and LTCs. In both cases, a clean combinatorial theorem based on lossless expansion preceded the explicit construction of the corresponding combinatorial object.
The present work continues this pattern in higher dimension. We show that 

\begin{center}
{\bf higher-dimensional cubical complexes with suitable two-sided lossless expansion give rise to asymptotically good quantum locally testable codes}
\end{center}
and leave open the challenge of proving that these combinatorial objects exist and constructing them explicitly.

\subsection{The combinatorial framework}

The basic combinatorial object in this paper is a \emph{decorated cubical complex}. Such a complex in dimension $k$ consists of vertex sets
\[
\{V_I\}_{I\subseteq[k]},
\]
indexed by the subsets of $[k]$, together with undirected 
%direction 
graphs connecting
\[
V_I \qquad\text{and}\qquad V_{I\cup\{i\}},
\]
whenever $i\notin I$. We call each set $V_I$ a block. Thus a decorated cubical complex is obtained from the Boolean cube by replacing each vertex with a block and each edge with a bipartite graph between the corresponding blocks. We refer to the edges between $V_I$ and $V_{I\cup\{i\}}$ as the \emph{direction-$i$ edges}.

A natural chain complex associated with a cubical complex is its
\emph{cellular} chain complex, whose degree-$i$ space is spanned by
the $i$-dimensional cubes. Constructions organized around cells,
with coefficient spaces supplied by local codes, appear
in~\cite{DELLM26,DLV24,GJ26}. In this work, however, we use the
\emph{level} chain complex, extending the two-dimensional framework
of~\cite{PK22,HL22a,HL22b}. We write
\[
    V_i=\bigcup_{|I|=i}V_I,
\]
and call $V_i$ the \emph{level} $i$ of the complex. The level chain complex has $\Ftwo^{V_i}$ as its $i$th vector space and its boundary and coboundary maps are induced by adjacency between consecutive levels. To ensure that successive coboundary maps compose to zero, we require the \emph{weak completion property}, which asserts that for each pair of edges in two distinct directions coming out of a vertex, there exists a unique square in the complex containing them. The weak completion property ensures that successive boundary maps compose to zero: each two-step path is paired with the other path around its unique
square, and the two contributions cancel over $\Ftwo$.

% We consider two notions of cubicality. The \emph{weak cubical property} requires that for every vertex $v\in V_i$, 
% every choice of $r$ incident edges from $v$, one in each of $r$ distinct directions, extends to a unique $r$-cube. The \emph{strong cubical property} additionally requires that every pair of vertices
% $v\in V_i$ and $v'\in V_{i+r}$ connected by a monotone (level increasing) path of length $r$
% are the opposite vertices of a unique $r$-cube. The weak property suffices for our analysis.

The simplest examples of cubical complexes are left-right Cayley complexes. Here there are a (possibly non-Abelian) group $\Gamma$ and four vertex sets
\[
V_\emptyset=V_{\set{1}}=V_{\set{2}}=V_{\set{1,2}}=\Gamma,
\]

\input{figures/left-right-cayley}

In the left-right complex, the direction-$1$ edges are defined by left multiplication by generators from a symmetric set $A$, while the direction-$2$ edges are defined by right multiplication by generators from a symmetric set $B$. For every $g \in V_\emptyset$, every choice of a direction-$1$ edge and a direction-$2$ edge leaving $g$ leads uniquely to
\[
ag\in V_{\set{1}},\qquad
gb\in V_{\set{2}},\qquad
agb\in V_{\set{1,2}},
\]
which form a unique square $\set{agb;ag,gb;g}$, and these are all the squares. This is precisely the weak completion property. We remark that unlike this group example, in a general decorated cubical complex the vertex sets $V_I$ need not have the same cardinality.  

Throughout this paper we assume that every direction graph between
$V_I$ and $V_{I\cup\{i\}}$
is a lossless expander. 
We call such a decorated cubical complex \emph{direction-wise lossless}.
Since every pair of blocks $V_I$, $V_J$ with $I \subseteq J$
is connected
by a monotone sequence of direction graphs, the corresponding lossless expansion estimates compose to give analogous expansion estimates between every such pair of blocks.
It is therefore natural to ask whether the resulting cubical complex itself is an expanding object, and if so, in what way.

A first attempt would be to require every sufficiently small cochain to have a large coboundary. This, however, is impossible. Since
$\delta^2=0$,
every coboundary has zero coboundary. Thus any meaningful notion of expansion must first quotient out these trivial obstructions. We say
\begin{itemize}
\item A cochain in $V_i$ is \emph{minimal} if it has minimum norm among all cochains in its cohomology class, or equivalently, if adding the coboundary of an arbitrary $(i-1)$-cochain cannot reduce its norm.
\item 
A cochain is \emph{locally minimal} 
if no local coboundary operation supported in the neighborhood of 
a single link 
can decrease its norm.
\end{itemize}
Thus minimality is a global notion, whereas local minimality can be verified locally.

We say that a decorated cubical complex is a \emph{small-set level coboundary expander} if every sufficiently small minimal cochain supported on a level below the top has large coboundary. 
The terminology reflects the fact that ordinary vertex expansion takes place between adjacent blocks, whereas small-set level coboundary expansion measures expansion between adjacent levels, after quotienting out the trivial cohomological obstructions. Similarly, we say the cubical complex is a \emph{robust}\footnote{Outside the introduction, we omit the adjective ``robust'': small-set coboundary expansion always refers to expansion of locally minimal cochains.} small-set level coboundary expander if every sufficiently small \emph{locally} minimal cochain has large coboundary.

\subsection{The result}

The central combinatorial statement of this paper is the following theorem:

\begin{theorem}[Informal version of \Cref{prop:small-set-cosystolic-expansion}]
\label{thm:cosys:expansion:informal}
Every direction-wise lossless decorated cubical complex satisfying suitable lossless expansion and imbalance assumptions is a robust small-set level coboundary expander.  
\end{theorem}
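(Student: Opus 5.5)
The plan is to follow the Sipser--Spielman and Lin--Hsieh template: take a small locally minimal cochain and show that most of its mass creates \emph{unique neighbors} in the next level. Those unique neighbors cannot be cancelled, so they survive in the coboundary.

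Let $f\in\Ftwo^{V_i}$ be locally minimal with $|f|\le \mu |V_i|$. Write $f=\sum_{|I|=i} f_I$ according to blocks. For each $I$ and each direction $j\notin I$, consider the direction-$j$ graph from $V_I$ to $V_{I\cup\{j\}}$. Lossless expansion of this graph, applied to $\supp f_I$ (which is small because $f$ is small), gives the following: all but an $\eps$-fraction of the edges leaving $\supp f_I$ in direction $j$ land on vertices of $V_{I\cup\{j\}}$ with a unique neighbor in $\supp f_I$.

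A vertex $w\in V_J$ with $|J|=i+1$ receives contributions from the blocks $V_{J\setminus\{j\}}$ for each $j\in J$, so $(\delta f)(w)$ can vanish only through cancellation between different directions. The first key step is to bound this cross-direction cancellation. Here the imbalance assumption enters: the degrees in the direction graphs are chosen so that the lower blocks $V_{J\setminus\{j\}}$ are not all of comparable weight relative to $V_J$. Composing the lossless estimates along monotone paths (as noted after the definition of direction-wise lossless complexes) then bounds the number of vertices of $V_J$ hit from two different directions by a small fraction of the total outgoing edges.

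The main obstacle is the case where cancellation is structured rather than random. Vertices of $V_J$ may be hit exactly twice, from directions $j$ and $j'$, in a consistent way. This happens precisely when $f$ looks locally like a coboundary around some vertex $u\in V_{i-1}$ in the ``link'' spanned by the two directions. To handle it, I would work in the link of $u\in V_{I\setminus\{j,j'\}}$. By the weak completion property, the square structure identifies pairs of edges with squares, so the local picture is a bipartite product-like graph. I would show that if more than a $c$-fraction of the edges of $f$ near $u$ go to doubly hit vertices, then adding $\delta e_u$ (or a local coboundary in that link) strictly decreases $|f|$, contradicting local minimality. Lossless expansion of the two relevant direction graphs restricted to the link gives the counting: doubly hit top vertices correspond to squares through $u$, and the square's other two vertices force $f$ to be dense on the neighborhood of $u$. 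The delicate part is the charging argument. I would assign each cancelled top vertex to the lowest vertex $u$ of its square and sum over $u$, checking that no square is overcounted more than a constant depending on $k$.

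Finally, combining the two steps, a constant fraction (roughly $1-O(k\eps)-c$) of the edges leaving $\supp f$ end at vertices of $V_{i+1}$ hit an odd number of times, in fact exactly once. Dividing by the maximal upper degree gives $|\delta f|\ge \beta |f|$ with $\beta>0$ depending only on $k$, the degrees and $\eps$. This is the claimed robust small-set level coboundary expansion. The smallness threshold $\mu$ is set so that every restricted set to which lossless expansion is applied remains below the expansion threshold of its direction graph.
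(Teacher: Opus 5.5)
\textbf{There is a genuine gap at the structured-cancellation step.} With it goes your conclusion that a $1-O(k\eps)-c$ fraction of the edges leaving $\supp f$ reach vertices hit exactly once.

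\emph{The flip cannot detect structure at $u$.} For $u\in V_{i-1}$, the only correction supported in $\Lk(u)$ at this level is $\delta e_u$, because $C^{-1}(\Lk(u))$ is spanned by $u$. So ``a local coboundary in that link'' gives nothing beyond the single flip, and the single flip does not suffice. Take a set $P$ of $p<d/2$ of the $d=k-i+1$ upward directions of $u$, and put $f=\bigcup_{j\in P}\Gamma_j(u)$. Up to $O(\eps)$ losses, every vertex of $\Lk_{\set{j,j'}}(u)$ with $j\neq j'\in P$ is hit exactly twice. Hence a $(p-1)/(d-1)$ fraction of the edges leaving $f$ cancel, yet $\norm{f+\delta e_u}=(d-p)\norm{u}>p\norm{u}=\norm{f}$.

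\emph{Where the flip does work, the charging fails.} It is true that if more than about half of the edges near $u$ collide in squares with bottom vertex $u$, then $u$ is more than half covered and the flip helps. But summing these local bounds over $u\in V_{i-1}$ counts every edge leaving $f$ once for each lower neighbour of its tail. That is a number of times governed by the degrees $D_j'$, not $O_k(1)$.

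\emph{Cancellation can be organized below level $i-1$.} There, single flips see nothing. If $f$ is a random half of $\Lk_1(t)$ for some $t\in V_{i-2}$, then:
\begin{itemize}
\item $f$ is essentially locally minimal;
\item every $u$ above $t$ has about half of its upward neighbourhood in $f$, so no flip helps;
\item only about a quarter of the edge incidences leaving $f$ end at vertices hit exactly once, while half end at vertices hit twice.
\end{itemize}
More generally, a random half of $\Lk_m(t)$ with $t\in V_{i-m-1}$ gives an exactly-once fraction of $2^{-(m+1)}$. So the unique-neighbour count you aim for fails unless $c$ is close to $1$, and then it no longer follows from the flip and charging argument. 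Unique-neighbour counting is the wrong currency here.

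\emph{Cross-direction collisions.} Composing lossless estimates along paths does not bound these either. A single vertex $u$ with $a$ and $b$ neighbours in $\supp f$ in directions $s\neq t$ already produces $ab$ collisions. One has to count from the bottom vertex of the square, and that is exactly where the factor $D_s'$ appears and the specific imbalance $\eps_tD_s'\le\eps_\star$ is needed.

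\textbf{What is missing} is a decomposition of $f$ into pieces living in links of vertices at arbitrary lower levels, together with genuine coboundary expansion of those links. The paper gets this as follows.
\begin{itemize}
\item It pulls $\supp f$ down through heaviness thresholds to terminal vertices, producing clusters $H_r(t)\subseteq\Lk(t)$.
\item It shows these clusters overlap little (\cref{ltg}). The excess is bounded recursively (\cref{lem:recursive-excess}) by same-type collisions, controlled by lossless expansion, and by light squares, controlled by the light threshold, the large-degree bound and the imbalance (\cref{lem:light-squares}).
\item Heavy-mass propagation relates everything back to $\norm{H_r}$.
\item Each cluster is minimal in its link. \Cref{thm:link-expansion-local}, obtained by embedding the non-colliding part of the link into a complete multipartite complex and invoking \cite{LMM16}, then gives $\norm{\delta H_r(t)}\gtrsim m_{k,r-1}\norm{H_r(t)}$.
\end{itemize}
Your light-versus-dense split of the bottom vertex of a colliding square is the right first move; it matches the paper's treatment of $\Light_{r+1}$. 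But the dense bottom vertices must be pushed further down the levels and resolved by link coboundary expansion, not by a single flip.
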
 

This implication is far from obvious. The hypothesis concerns only the direction graphs of the complex, whereas the conclusion is a genuinely higher-dimensional expansion property. 

\Cref{thm:cosys:expansion:informal} is the higher-dimensional analogue of the two-dimensional expansion theorem underlying the constructions of Lin  and Hsieh~\cite{HL22a,HL22b}. We establish the corresponding combinatorial statement in arbitrary dimension $k$.

Robust small-set level coboundary expansion has the following coding-theoretic consequences.
In dimension two, robust small-set level coboundary expansion immediately yields LTCs, recovering the square-complex constructions of Lin and Hsieh~\cite{HL22a}. The same argument, applied to the very same square complex but interpreted as a CSS chain complex, yields quantum LDPC codes~\cite{HL22b}. 
More generally, following the argument of Aharonov and Eldar \cite{AE17}, a four-dimensional robust small-set level coboundary expander gives rise to a quantum locally testable code whenever the corresponding chain complex is interpreted as a CSS code.
Thus,
\[
\boxed{\text{robust small-set level coboundary expander}}
\quad\Longrightarrow\quad
\begin{cases}
\text{LTC}, & k=2,\\[2mm]
\text{qLDPC}, & k=2,\\[2mm]
\text{qLTC}, & k=4.
\end{cases}
\]

Combining this implication with Theorem~\ref{thm:cosys:expansion:informal}, we obtain
\[
\begin{gathered}
\boxed{\text{Direction-wise lossless decorated cubical complex + imbalance conditions}}\\[2mm]
\Downarrow\\[2mm]
\boxed{\text{Robust small-set level coboundary expander}}\\[2mm]
\Downarrow\\[2mm]
\boxed{\text{qLTC}}
\end{gathered}
\]
The remaining challenge  is therefore purely combinatorial: construct explicit higher-dimensional direction-wise lossless decorated cubical complexes.

\paragraph{A candidate construction, and an obstruction.}
Explicit constant-degree lossless vertex expanders were constructed by
Hsieh et al.~\cite{HLMRZ25}, but it is unclear how to adapt their construction
to obtain higher-dimensional cubical complexes with lossless directional
graphs. A natural starting point is the family of balanced cubical complexes
underlying their construction, obtained as quotients of products of
$(p_i+1)$-regular Bruhat--Tits trees by congruence subgroups of quaternionic
arithmetic lattices. Their directional graphs inherit the Ramanujan property
from the graphs of Lubotzky, Phillips, and Sarnak~\cite{LPS88}.
Our setting needs imbalance, so we set out to construct the imbalanced analogues of those complexes. The natural input is the family of Ramanujan \emph{bigraphs} recently constructed by Evra, Feigon, Maurischat and Parzanchevski~\cite{EFMP23,EFMP25}, which are congruence quotients of the $(p^3+1,p+1)$-biregular Bruhat-Tits tree of a definite unitary group in three variables. Replacing the quaternionic lattice by the unitary one, and the product of regular trees by a product of biregular ones, one obtains for every $k$ a strongly explicit $k$-dimensional imbalanced decorated cubical complex with complete multipartite links, whose directional graphs are all Ramanujan bigraphs of degrees $(p_i^3+1,p_i+1)$, and whose degrees may be imbalanced as strongly as our imbalance assumption requires. Such a complex satisfies every hypothesis of \cref{thm:cosys:expansion:informal} except one, the lossless expansion of its directional graphs.

The hope that this last hypothesis holds stems from the heuristic that led Kamber and Kaufman~\cite{KK22} to conjecture that the LPS graphs are lossless~\cite[Conjecture 1.11]{KK22}. Pure spectral arguments cannot establish losslessness, as demonstrated by Kahale~\cite{Kah95} in the regular case and by Asherov and Dinur~\cite{AD24} in the biregular setting. However, the known counterexamples of Kamber and Kaufman, namely the Morgenstern graphs, fail due to an obstruction with no counterpart over $\mathbb{Q}$, which is the existence of a proper closed subfield of the local field. Because $\mathbb{Q}_p$ has no proper closed subfields, the same reasoning supports both the losslessness of LPS graphs and that of the EFMP bigraphs, which are likewise defined over $\mathbb{Q}$.

It turns out that the opposite is the case. Adapting the closed orbit method of~\cite{KK22} to the unitary setting, one finds in every EFMP bigraph a nonempty set that does not expand at all. The set arises as the image of a subtree attached to an algebraic subgroup, rather than to a subfield as in~\cite{KK22}. All in all, the EFMP bigraphs, and with them the complexes described above, are not two-sided lossless. These results appear to be of independent interest, and we intend to present them separately rather than develop them here.

The remainder of the introduction is devoted to explaining why the combination of a cubical complex with one-dimensional lossless expansion along the direction graphs suffices to establish this genuinely higher-dimensional expansion theorem.

\subsection{The technique}

The main technical contribution of this work is a dimension-independent global-to-local reduction, building on ideas introduced by~\cite{FK24,EK24,KKL16}. The key idea is that a global cochain should be understood through its local views. The first question is therefore: what is the correct notion of a local view in a decorated cubical complex?

The answer is the notion of a \emph{link}.
%The natural local object in a decorated cubical complex is the link of a face. 
The link of a vertex $x$ consists of $x$ and all vertices in higher levels that are connected to $x$ by a monotone path, together with their incidence structure. It is therefore the smallest upward closed subcomplex containing the vertex $x$. Restricting a cochain to the link of a vertex $x$ captures precisely its local behavior around that vertex $x$.
Consequently, links are the natural local structures into which the input cochain should be decomposed.

Then, given a cochain
\[
S:V_i\rightarrow\Ftwo
\]
of norm at most $\alpha$, we decompose
$S$ into a collection of cochains $L_t:V_i\rightarrow\Ftwo$ that are local in the sense that each of them is supported on an individual link. Our global-to-local theorem
constructs a local decomposition
\[
S \rightsquigarrow \{L_t\}_{t\in T},
\]
such that the global coboundary is lower bounded, up to a small error, by the sum of the local coboundaries, namely,

\begin{theorem}[informal]
Every sufficiently small cochain $S:V_r\to\Ftwo$ in a direction-wise lossless cubical complex satisfying the imbalance assumptions admits cochains $L_t$, each supported in $\Lk(t)$, such that
\[
\bigcup_t\supp(L_t)=\supp(S),
\qquad
\norm{\delta S}\ge\sum_t\norm{\delta L_t}-C_{\eps,k}\norm{S},
\]
where $C_{\eps,k}\to 0$ as the loss parameters vanish.
\end{theorem}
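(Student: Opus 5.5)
The plan is a greedy, layer-by-layer decomposition. We count each vertex of $\delta S$ once against the local coboundaries that see it, and charge the overcounting to unique-neighbor failures in the direction graphs.

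Fix $S:V_r\to\Ftwo$ with $\norm{S}\le\alpha$. For each $x\in\supp(S)$, say $x\in V_I$ with $|I|=r$, the upward neighbours of $x$ in $\delta S$ are reached only through the directions $i\notin I$. For each such direction, $x$ has its edges in the direction-$i$ graph from $V_I$ to $V_{I\cup\{i\}}$.

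First we choose the index set $T$ and the cochains $L_t$ concretely. Take $T$ to be a set of vertices at levels $\le r$ (the ``roots'' of links). For $t\in T$ set $L_t=S|_{\Lk(t)\cap V_r}$, after first shrinking these restrictions so that they partition $\supp(S)$; this guarantees $\bigcup_t\supp(L_t)=\supp(S)$. The simplest choice is $T=\supp(S)$ with $L_t=\mathbf 1_t$. The statement is then a pure unique-neighbour estimate: $\sum_t\norm{\delta \mathbf 1_t}$ is the degree sum, and $\norm{\delta S}$ is at least the number of upper neighbours hit exactly once. Lossless expansion of each direction graph (small sets, applicable because $\norm{S}\le\alpha$ and the imbalance keeps the relevant block densities small) gives that all but an $O(\eps)$ fraction of edges from $\supp(S)\cap V_I$ in direction $i$ land on unique neighbours. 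Summing over $I$ and $i$ yields
\[
\norm{\delta S}\ \ge\ \sum_t\norm{\delta L_t}-2\sum_{I,i}\eps_{I,i}\,D_{I,i}\,|\supp(S)\cap V_I|,
\]
which is of the required form with $C_{\eps,k}=O(k\max\eps_{I,i}D_{I,i}/\text{norm weights})$.

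This trivial decomposition is useless for the later application, though, since singletons are never locally minimal. The real work is to merge singletons into larger link-supported pieces while preserving the inequality. We process directions in order. Whenever an upper vertex $y\in V_{J}$, $|J|=r+1$, receives two or more edges from $\supp(S)$, all its preimages lie in a common link: that of the unique vertex below $y$ obtained by weak completion of the two colliding directions, or, for the same direction, that of $y$'s down-neighbour set. We group them into one $L_t$ with $t$ the lowest such vertex. Cancellations at $y$ are then internal to $\delta L_t$ rather than lost globally.

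We iterate this union-find merging. The key claim, to be proved by induction on the level of $t$, is that the components stay inside single links. Two-step cancellations across different directions close up, by the weak completion property, into squares below, and hence inside the link of the square's bottom vertex. Collisions that cannot be so closed are charged as loss against the lossless expansion of the composed monotone graphs $V_{I'}\to V_J$.

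The main obstacle is this charging. One must show that the total number of collisions not absorbed into a common link is at most $C_{\eps,k}\norm{S}$. Collisions whose common down-vertex lies at level $r-j$ must be controlled through expansion of the composed graphs from level $r-j$ upward, and this is exactly where the imbalance assumption enters: the downward projections of a small $S$ must themselves remain small sets in lower blocks, so that losslessness still applies. I would first try a weighting in which the loss at depth $j$ is bounded by $\eps$ times the product of the degree ratios along the path, and use imbalance to make that product $O(1)$. This gives $C_{\eps,k}=O_k(\eps)$.
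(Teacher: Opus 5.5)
\textbf{Your first step is false.} Lossless expansion of $X_{I,i}$ only limits how often two vertices of the same block $V_I$ share an up-neighbour. A vertex $y\in V_J$ with $|J|=r+1$ can also be hit from $V_{J\setminus\{j\}}$ and from $V_{J\setminus\{j'\}}$ with $j\neq j'$, and no one-dimensional hypothesis constrains that. Such cross-type collisions can be total. Take $v\in V_{r-1}$ and $S=\delta\mathbf 1_v$, which is as small as you like. Then $\delta S=0$ by \cref{chain:complex}, while $\sum_{x\in\supp S}\norm{\delta\mathbf 1_x}=(k-r)\norm{S}$. So singletons force $C_{\eps,k}\ge k-r$ whenever $r\ge1$.

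\textbf{The merging step fails too.} Since the singleton estimate is false, everything rests on merging. By square completion, every cross-type collision at $y$ closes into the square $\{y;x,x';x\meet[y]x'\}$, so your rule merges along all of them. Union-find then takes transitive closures. Consider a chain with $x_1,x_2\lessdot y_1$, $x_2,x_3\lessdot y_2$, and so on, consecutive $x$'s of different types. It is a small set, it becomes one component, and it lies in no single link once it is longer than the bounded size of a link. You cannot instead charge these collisions to lossless expansion of direction graphs or of composed graphs. They are squares supplied by the completion axiom, not failures of expansion. Keeping overlapping pieces such as $\{x_i,x_{i+1}\}$ does not help either: it roughly doubles $\sum_t\norm{\delta L_t}$. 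Separately, same-type preimages of $y$ need not lie in any common link; for $r=0$ there is nothing below them.

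\textbf{The missing idea is a density criterion.} You need a rule that absorbs a configuration into a link when it is dense there, as in $\delta\mathbf 1_v$, and pays for it as error when it is sparse, as in the chain. The paper, following \cite{FK24}, pulls $\supp S$ down by thresholds: $x\in V_I$ is heavy when $\sum_{j\notin I}u_j(x)\ge\gamma_{|I|}$. It then takes $L_t=H_r(t)$ over terminal clusters. These may overlap, and \cref{ltg} loses only $(k-r)\norm{\Ex_r}+2\norm{\Ex_{r+1}}$. \Cref{lem:recursive-excess} shows that excess arises from only two sources. Same-type collisions are paid for by losslessness. Squares whose bottom vertex is light, which is exactly your chain, are counted from below. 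A light bottom vertex has $U_{<t}\le\gamma$, which gives the linearization $u_tU_{<t}\le\gamma(u_t-\eps_t)_++\eps_tU_{<t}$.

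\textbf{Where imbalance enters, and the resulting rate.} The linearization above is where the imbalance $\eps_tD'_s\le\eps_\star$ is used. It is not what keeps lower projections small; that is the job of the constant in \eqref{eq:c}. Finally, heavy-mass propagation costs a factor of order $\gamma^{-1}$ per level when comparing $\norm{H_{i-1}}$ with $\norm{S}$. This forces thresholds $\gamma_{s-1}=\eps_\star+\eps_\star^{1/2^s}$ and an error of $O_k(\eps_\star^{1/2^r})$, not the $O_k(\eps)$ you predict.
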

For the definition of the norm, see \cref{eq:norm}. For a precise statement, see \Cref{ltg}.

Consequently, proving robust level expansion reduces to establishing expansion for the local structures $L_t$. Thus the proof naturally separates into two independent parts: a dimension-independent global-to-local decomposition and a purely local expansion theorem.

Let us now try to isolate the local components. We start with a cochain 
$S:V_r\rightarrow\Ftwo$. A first attempt would be to replace any two (or more) vertices in the support of $S$ that belong to the same link by that link, and then add each of the remaining isolated vertices in the support of $S$ as a link of its own (of size $1$). However, this decomposition is too fine to allow a global-to-local reduction. 

Instead, following \cite{FK24}, we use only those links which have a large intersection with $\supp{S}$.\footnote{The precise definition uses the direction-dependent factors encoded by the norm, which we suppress in the introduction.} We call vertices with such links \emph{heavy}. More accurately, the heavy vertices are organized recursively across the levels of the complex.
The set of heavy vertices on level $r$ is exactly  $H_r=\supp{S}$.
Then, starting from the support of the input cochain we go down the levels, repeatedly declaring a vertex to be heavy whenever a sufficiently large fraction of its neighbors in the layer above it are already heavy. This process terminates at a collection of \emph{terminal} vertices.
These terminal vertices, together with the hierarchy, give rise to the local cochains. Each terminal vertex $t$ gives a local cochain $L_t$, supported on all the vertices in $V_r$ that are above it in the heavy hierarchy.

The main difficulty is that these local cochains are not disjoint, so a single element can contribute to several local chains $L_t$, which may cause overcounting. 
Furthermore, the coboundaries of $L_t$ for different $t$'s may overlap, leading to cancellations over $\Ftwo$ in the next level $V_{r+1}$ that do not
occur when we count the size of the coboundary of each $L_t$ separately. The heart of our global-to-local reduction is to show that these overlaps are negligible.

The proof shows that the support of $\delta L_t$ usually does not intersect with the support of $\delta L_{t'}$ of a different link. To show this, we track the number of exceptions to this rule. When they do intersect on a vertex $v$, we charge this collision to the vertex $v$, and say that it has \emph{excess}.

% Vertices always inherit the excess of their heavy children. In \cref{lem:recursive-excess} we show that there are only two mechanisms by which a vertex can acquire additional excess. The first comes from collisions in a directional expander graph, and the other comes from a square completion to a non-heavy vertex. See \cref{fig:two-dimensional-square-complex} for an illustration of the two mechanisms.

To draw insight regarding the amount of excess from the geometry of the cubical complex we generalize and define excess for every heavy vertex as the number of superfluous terminals (more than $1$) it connects to. Then we start bounding the excess of vertices starting at the terminals and climbing up the layers. A vertex certainly connects to several terminals if its children do. The only other way for a vertex to have excess is by different children of it connecting to different terminals. In \cref{lem:recursive-excess} we show there are only two mechanisms for these uncooperative children. The first comes from failure of expansion in a directional expander graphs, and the other comes from a square completion of two children in different blocks to a non-heavy vertex. We bound each separately: the direction graphs are lossless expanders; and by the definition of heavy, a non-heavy vertex cannot close many squares. See \cref{fig:two-dimensional-square-complex} for an illustration of the two mechanisms.

As these two mechanisms are one- and two-dimensional in nature, they were already addressed, in some form, in \cite{HL22a,HL22b}\footnote{We stress that the proof of small-set coboundary expansion by \cite{HL22a,HL22b} differs from the one we give here, \emph{already in dimension 2}. In their work, the second mechanism bounds \emph{all} squares (light or not), while we apply a light/heavy distinction, and explicitly decompose the cochain. The distinction is required to handle the richer structure of cancellations over $\mathbb{F}_2$ that may occur in the coboundary of high dimensional cochains. }. Direction-wise lossless expansion completely controls the first mechanism, whereas for the second mechanism an additional imbalance condition is required. In \cref{sec:same-type,sec:light:squares,sec:mass-prop}, we show that requiring both expansion and imbalance implies that the total overlap between distinct local structures is small.

Crucially, local minimality \emph{plays no role} in the global-to-local reduction. It enters only in the local expansion theorem, where its purpose is to eliminate the local cohomological obstructions to expansion. Thus links and local minimality arise together: once links are identified as the local structures, local minimality becomes the natural notion of local irreducibility. Accordingly, we define a cochain to be locally minimal if no coboundary operation supported inside a single link decreases its norm. This is precisely the hypothesis under which the local expansion theorem is proved. 
In a sense, this approach mirrors the conceptual picture developed in the previous subsections. There, expansion and homology emerged as the two organizing principles of the theory. Both are inherently global notions. Here, they reappear in local form. Links identify the correct notion of locality, while local minimality is the corresponding notion of minimality on a single link and provides the corresponding local homological notion. A central insight of this work is that in \emph{direction-wise lossless} decorated cubical complexes, the expansion in the links is lifted automatically to the entire complex through the global-to-local reduction. In \cref{sec:local-picture}, we establish the required relaxed form of
coboundary expansion for the links. Combining these results yields the
following theorem:
\begin{theorem}[Robust small-set coboundary expansion at an arbitrary level]
\label{thm:arbitrary-rank-expansion}
Let $X$ be a $k$-dimensional lossless square complex, with parameters $\prn{D_i, D_i',c_i,\eps_i}$. If for every $s<t$, $\eps_tD_s' \leq \sum \eps_i$ then there exist  constants $c,a$ independent of $\abs{X(0)}$ such that $X$ is a robust $\prn{c, a-\eps}$ small-set coboundary expander, for $\eps = O_k\prn{\prn{\sum\eps_i}^{\Omega_k\prn{1}}}$
\end{theorem}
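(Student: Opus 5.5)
The proof combines the global-to-local decomposition (the informal theorem above, made precise in \Cref{ltg}) with the local expansion statement for links from \cref{sec:local-picture}.

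Fix a level $r<k$ and a locally minimal cochain $S:V_r\to\Ftwo$ with $\norm{S}\le c$, where $c$ is small enough that the hypotheses of \Cref{ltg} apply. We do not need to redo the bookkeeping of \cref{sec:same-type,sec:light:squares,sec:mass-prop}; we only need to check that the imbalance hypothesis $\eps_tD_s'\le\sum_i\eps_i$ is what \Cref{ltg} requires. Under that hypothesis both excess mechanisms are controlled: the direction-graph collisions cost $O(\eps_i)$ per step, and the light-square completions cost $O(\eps_tD'_s)$.

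\Cref{ltg} then gives the heavy hierarchy, its terminals $T$, and local cochains $L_t$ supported in $\Lk(t)$. These satisfy
\[
\bigcup_t \supp(L_t)=\supp(S),\qquad \norm{\delta S}\ge \sum_t\norm{\delta L_t}-C_{\eps,k}\norm{S}.
\]

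The key step is to transfer local minimality from $S$ to each $L_t$. A coboundary move $\delta f$ with $f$ supported inside $\Lk(t)$ that decreases $\norm{L_t}$ should also decrease $\norm{S}$, up to an error. The reason is that $L_t$ agrees with $S$ on the part of $\Lk(t)$ that lies above $t$ in the heavy hierarchy. The difficulty is that $S$ may have extra support in $\Lk(t)$ coming from other terminals, so the two cochains do not agree everywhere on the link.

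I would try to handle this in one of two ways. The first option is to define $L_t$ as exactly $S|_{\Lk(t)}$, restricted to heavy descendants, and show that the disagreement is charged to excess, whose total is already bounded by $C_{\eps,k}\norm{S}$. The second option is to use the relaxed local expansion from \cref{sec:local-picture}, which presumably states $\norm{\delta L}\ge a\norm{L}$ for cochains that are locally minimal up to a small defect. In that case the defect can be absorbed into the error term.

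Either way, we obtain $\norm{\delta L_t}\ge a\norm{L_t}-(\text{defect}_t)$, where $\sum_t\text{defect}_t=O(\eps)\norm{S}$.

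Next we sum over $t$. Because the supports of the $L_t$ cover $\supp(S)$, we have $\sum_t\norm{L_t}\ge\norm{S}$; any overcounting only helps the lower bound. Combining,
\[
\norm{\delta S}\ge a\norm{S}-\bigl(C_{\eps,k}+O(\eps)\bigr)\norm{S}.
\]
Since \Cref{ltg} gives $C_{\eps,k}$ polynomial in $\sum_i\eps_i$ with exponent $\Omega_k(1)$, this yields $\norm{\delta S}\ge (a-\eps)\norm{S}$ with $\eps=O_k\bigl((\sum_i\eps_i)^{\Omega_k(1)}\bigr)$. This is robust $(c,a-\eps)$ small-set coboundary expansion.

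The constant $a$ comes from the local theorem and depends only on the link structure, namely the complete multipartite, product-like links. The constant $c$ comes from the smallness threshold of the lossless expansion. Neither depends on $\abs{X(0)}$.

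The main obstacle I expect is the minimality transfer: showing that each $L_t$ is, up to charged error, locally minimal in its own link, so that the local theorem applies. This is where the choice of the norm weights matters, and where overlaps between different $L_t$ inside a shared link must be shown to be negligible.
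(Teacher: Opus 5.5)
Your outline (global-to-local comparison, link expansion, summation over terminals) is the paper's. However, the step you flag as the main obstacle is not one, and the step that carries the real content is missing.

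\textbf{Minimality transfers exactly.} The local piece is $L_t=H_r(t)\subseteq S$, so $\norm{S}=\norm{S\setminus L_t}+\norm{L_t}$. For any $h$ supported in $\Lk(t)$, the triangle inequality gives $\norm{S+\delta\iota_t h}\le\norm{S\setminus L_t}+\norm{L_t+\delta\iota_t h}$. Hence a link correction that lowers $\norm{L_t}$ would lower $\norm{S}$, contradicting local minimality. The extra support of $S$ inside $\Lk(t)$ coming from other clusters plays no role. No defect arises, nothing needs to be charged to excess, and $L_t$ is genuinely minimal in $\Lk(t)$. This is the two-line argument in the proof of \cref{prop:small-set-cosystolic-expansion}.

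\textbf{The gap is your assertion $\sum_t\mathrm{defect}_t=O(\eps)\norm{S}$.} \Cref{thm:link-expansion-local} is not relaxed in its minimality hypothesis. Instead it carries an additive error $O_k(\eps_\star)$ in link-normalized units, independent of $f$. For a terminal $t\in V_I$ this becomes $\norm{t}\,O_k(\eps_\star)=D_I^{-1}O_k(\eps_\star)$ in the ambient norm. A vertex at level $i-1$ is heavy once a $\gamma_{i-1}$-fraction of its upward neighbours is heavy, and the thresholds are tiny. So $\norm{H_{i-1}}$ can be of order $\norm{H_i}/\gamma_{i-1}$, and low-level terminals may carry far more weight than $\norm{S}$. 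Nothing in your argument bounds $\eps_\star\sum_t\norm{t}$.

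\textbf{The missing ingredient is heavy-mass propagation} (\cref{cor:heavy-mass-prop}) with the thresholds of \eqref{eq:threshold-choice}. The paper applies it inside each cluster. Heavy parents of a vertex of $H(t)$ again lie in $H(t)$, so cluster vertices pass their thresholds using cluster parents alone. This yields $\norm{H_r(t)}\ge\frac{i!}{r!}\eps_\star^{2^{-i}-2^{-r}}\norm{t}$, so the additive error is at most $O_k(\eps_\star^{1-2^{-i}+2^{-r}})\norm{H_r(t)}\le O_k(\eps_\star^{2^{-r}})\norm{H_r(t)}$. Using $m_{k-i,r-i-1}\ge m_{k,r-1}$ and summing with $\sum_t\norm{H_r(t)}\ge\norm{S}$ finishes the proof. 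Applying the corollary to all of $H$ and using $\sum_t\norm{t}\le\sum_{i\le r}\norm{H_i}$ would work equally well.

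This is also why you cannot treat the excess bookkeeping and the local step as independent black boxes. The same thresholds must make two things small at once: the light-square cost, which grows with $\gamma$, and the mass blow-up, which grows with $1/\gamma$ and multiplies the local additive error. That balance is what produces the exponent $2^{-r}$.
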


\subsection{Relation to concurrent work}

While this work was being completed, Gay and Jeronimo \cite{GJ26} gave the first construction of asymptotically good quantum locally testable codes. Their construction, like ours, is based on higher-dimensional cubical complexes, but the algebraic structures associated with the complexes and the mechanisms underlying the two approaches are rather different. Gay and Jeronimo work with a cosheaf complex whose cochain groups are built from the cubical faces. In contrast, we work with the level complex of a decorated cubical complex: all cochains are supported on vertices, with their degree determined by their level in the Boolean-cube decoration.  

The two approaches also make a different tradeoff between geometry and local codes. Gay and Jeronimo use arithmetic Ramanujan cubical complexes together with carefully chosen Reed--Solomon codes satisfying product-expansion properties. Thus, spectral expansion of the geometry is supplemented by strong expansion properties of the local codes. In our framework there are no nontrivial local codes: the burden is instead placed entirely on the geometry. We show that sufficiently strong lossless expansion of the one-dimensional directional graphs implies the higher-dimensional small-set level coboundary expansion needed for quantum local testability. Thus \cite{GJ26} establishes the existence of asymptotically good qLTCs, while the present work isolates a different, purely combinatorial route to such codes.

There is a useful historical parallel with the development of expander codes. Sipser and Spielman first identified lossless expansion as a clean combinatorial property through which sparse local constraints can enforce strong global coding properties, even though the explicit constructions available at the time were spectral expanders and their coding results were therefore also developed under spectral hypotheses. Subsequent work on lossless expanders showed that this stronger combinatorial notion is important in its own right, and eventually led to explicit constructions with increasingly strong parameters. We believe that a similar phenomenon may occur here. Lossless graph expansion has repeatedly proved useful beyond what can be obtained from spectral expansion alone, and our results suggest that its higher-dimensional cubical analogue is likewise a natural object to study. We therefore believe that suitably imbalanced cubical complexes with two-sided lossless directional expansion exist. Just as lossless expanders became fundamental objects independently of their original coding-theoretic motivation, we expect lossless cubical complexes to have applications beyond the particular qLTC construction considered here. We therefore view their construction as a natural higher-dimensional expansion problem in its own right.

%%%%%%%%%%%%%

\paragraph{Acknowledgements}

This work grew out of a summer reading group on qLTCs. We thank Aseel Sakhnini, Sergey Komech, and Yali Aharon Holzman for many helpful discussions. We also thank Shai Evra and Alex Lubotzky for illuminating conversations about constructions of lossless cubical complexes.

\paragraph{Use of AI}

We used ChatGPT 5.4-6 and Codex to assist with drafting and editing the manuscript, refining the exposition, and checking mathematical calculations. A preliminary version of \Cref{lem:recursive-excess} was obtained in a human-guided conversation with GPT-5.5, asking to extend our low-dimensional recursion using the same/light mechanism we proposed. All main ideas of this work are human-generated.
All AI-assisted material was reviewed by the authors, who take full responsibility for the content of this work. 

%% file: figures/left-right-cayley.tex
\begin{figure}[!ht]
\centering
\begin{tikzpicture}[
  x=1cm,y=1cm,font=\normalsize,
  block/.style={draw=black!35,fill=black!3,line width=.6pt},
  vertex/.style={circle,fill=black,inner sep=1.8pt},
  first square/.style={draw=blue!70!black,line width=1.15pt},
  second square/.style={draw=red!75!black,line width=1.15pt},
  false square/.style={draw=black!75,line width=1pt,
    dash pattern=on 4pt off 3pt},
  edge label/.style={fill=white,inner sep=2pt,font=\small}
]
  \coordinate (g) at (0,0);
  \coordinate (agb) at (0,6.3);
  \coordinate (ag) at (-4.5,3.15);
  \coordinate (apg) at (-2.45,3.15);
  \coordinate (gb) at (2.45,3.15);
  \coordinate (gbp) at (4.5,3.15);

  % The true square faces are 80 percent transparent (20 percent opaque).
  \fill[blue,fill opacity=.2] (g) -- (ag) -- (agb) -- (gb) -- cycle;
  \fill[red,fill opacity=.2] (g) -- (apg) -- (agb) -- (gbp) -- cycle;

  % Four disjoint copies of G; only the relevant vertices are displayed.
  \draw[block] (g) ellipse [x radius=1.12,y radius=.85];
  \draw[block] (agb) ellipse [x radius=2.15,y radius=1.1];
  \draw[block] (-3.475,3.15) ellipse [x radius=1.9,y radius=1.05];
  \draw[block] (3.475,3.15) ellipse [x radius=1.9,y radius=1.05];
  \node at (0,-1.17) {$V_\emptyset$};
  \node at (0,7.75) {$V_{\{1,2\}}$};
  \node at (-3.475,4.55) {$V_{\{1\}}$};
  \node at (3.475,4.55) {$V_{\{2\}}$};

  % Two genuine square boundaries, with the generator on each edge.
  \draw[first square] (g) -- node[edge label,pos=.5] {$a$} (ag)
    -- node[edge label,pos=.5] {$b$} (agb)
    -- node[edge label,pos=.5] {$a$} (gb)
    -- node[edge label,pos=.5] {$b$} cycle;
  \draw[second square] (g) -- node[edge label,pos=.5] {$a'$} (apg)
    -- node[edge label,pos=.5] {$b'$} (agb)
    -- node[edge label,pos=.5] {$a'$} (gbp)
    -- node[edge label,pos=.5] {$b'$} cycle;

  % An inset contour follows the mixed cycle; it denotes no new edges.
  \draw[false square] (0,.3) -- (-4.15,3.15)
    -- (0,6) -- (4.15,3.15) -- cycle;

  \foreach \v in {g,agb,ag,apg,gb,gbp}
    \node[vertex] at (\v) {};
  \node[anchor=north,inner sep=8pt] at (g) {$g$};
  \node[anchor=south,inner sep=8pt] at (agb) {$agb=a'gb'$};
  \node[anchor=north,inner sep=8pt] at (ag) {$ag$};
  \node[anchor=north,inner sep=8pt] at (apg) {$a'g$};
  \node[anchor=north,inner sep=8pt] at (gb) {$gb$};
  \node[anchor=north,inner sep=8pt] at (gbp) {$gb'$};

\end{tikzpicture}
\par\vspace{.35cm}
\begin{tabular}{@{}c@{\hspace{1cm}}c@{}}
  \makebox[.4\linewidth]{%
    \tikz[baseline=-.5ex]\draw[draw=blue!70!black,line width=1.15pt,
      fill=blue,fill opacity=.2] (0,-.15) rectangle (.7,.15);
    \hspace{.35cm}Square in the complex}
  &
  \makebox[.4\linewidth]{%
    \tikz[baseline=-.5ex]\draw[draw=red!75!black,line width=1.15pt,
      fill=red,fill opacity=.2] (0,-.15) rectangle (.7,.15);
    \hspace{.35cm}Square in the complex}
  \\[.22cm]
  $\set{agb; ag,gb; g}$ & $\set{a'gb'; a'g,gb'; g}$
  \\[.4cm]
  \multicolumn{2}{c}{%
    \tikz[baseline=-.5ex]\draw[draw=black!75,line width=1pt,
      dash pattern=on 4pt off 3pt] (0,0) -- (.7,0);
    \hspace{.35cm}Four-cycle, not a square in the complex $\set{agb; ag,gb'; g}$}
\end{tabular}
\caption{A left-right Cayley complex satisfying the weak completion property but not the strong one. The blue and red squares share the opposite vertices $g$ and $agb=a'gb'$. The black dashed contour follows a four-cycle which is not a square in the complex.}
\label{fig:left-right-weak-strong}
\end{figure}

%% file: Cubical-test.tex
\section{Cubical complexes}\label{sec:cubical}

We define cubical complexes alongside multipartite simplicial complexes, highlighting their similarities and connections. Along the way, we develop the notation and gradually build toward the statements of our main results.

Throughout, all complexes and vertex blocks are finite and nonempty. We write $[k]=\{1,\ldots,k\}$ and take all chains and cochains over $\Ftwo$.

\subsection{Simplicial and cubical complexes}

\begin{definition}[Simplicial complex]
A simplicial complex $Y$ consists of a vertex set $Y(0)$ and a downward-closed collection of nonempty subsets of $Y(0)$. For $i\geq 0$, let
\[
Y(i)=\{\sigma\in Y:|\sigma|=i+1\}
\]
be the set of $i$-dimensional faces.\footnote{We adjoin the empty face, of dimension $-1$, and set $Y(-1)=\{\varnothing\}$.} In particular, $Y(1)$ is the edge set of the underlying graph $(Y(0),Y(1))$. For $i\geq 0$, let
\[
\operatorname{Cl}_Y(i)=
\left\{\sigma\in\binom{Y(0)}{i+1}:\binom{\sigma}{2}\subseteq Y(1)\right\}.
\]
Thus $\operatorname{Cl}_Y(i)$ is the collection of $(i+1)$-vertex cliques in this graph, and $Y(i)\subseteq\operatorname{Cl}_Y(i)$.
\end{definition}

Equivalently, a simplicial complex is specified by its underlying graph and a collection of declared cliques from $\set{\operatorname{Cl}_Y(i)}_{i}$, closed under taking subsets. We now define multipartite complexes. To draw the analogy with decorated cubical complexes appearing later, we call them decorated simplicial complexes.

\begin{definition}[Decorated simplicial complex]
A simplicial complex $Y$ is $k$-decorated if $Y(0)=V_1\sqcup\cdots\sqcup V_k$ and every edge has endpoints in distinct blocks. Consequently, every face contains at most one vertex from each block.
\end{definition}

A decorated simplicial complex is modeled by the $(k-1)$-simplex: the decoration map $Y(0)\to[k]$, which sends every vertex in $V_j$ to $j$, constrains the $i$-faces in $Y$ (sets of size $i+1$) only to those whose image under the decoration map is an $i$-face. In that sense, the decoration gives a global model dictating which faces are allowed.

For cubical complexes, we use the $k$-dimensional Boolean cube $Q_k$ as our global model. Its vertices are the subsets of $[k]$, where $I$ and $J$ are adjacent precisely when they differ by one element $|I\mathbin{\triangle}J|=1$.

\begin{definition}[Cube-decorated graph]
A graph $G$ is $k$-cube-decorated if its vertex set is
partitioned into $2^k$ blocks $V_I$, indexed by
$I\subseteq[k]$, and every edge joins blocks $V_I,V_J$
with $\{I,J\}\in E(Q_k)$.
\end{definition}

For $v\in V_I$, write $\type(v)=I$.
An embedded $i$-cube in a $k$-cube-decorated graph $G$
is a set $Q\subseteq V(G)$ satisfying the following
conditions: (1) the vertices of $Q$ have distinct
types and $\{\type(v):v\in Q\}$ is the vertex set
of an $i$-face of $Q_k$; and (2) any two vertices of
$Q$ with adjacent types in $Q_k$ are adjacent in $G$.
We write $\operatorname{Cu}_G(i)$ for the set of all
embedded $i$-cubes in $G$.
An embedded cube $Q'\subseteq Q$ is called a subcube of $Q$.

\begin{definition}[Decorated cube complex]
A decorated $k$-cube complex $X$ is a $k$-cube-decorated
graph $G=(X(0),X(1))$ with sets
$X(i)\subseteq\operatorname{Cu}_G(i)$, $0\leq i\leq k$,
closed under subcubes, meaning that every $j$-subcube of
a cube in $X(i)$ belongs to $X(j)$.
\end{definition}

As with the cube $Q_k$, we regard $G$ as edge-colored by $[k]$: an edge between
$V_I$ and $V_J$ has color, or direction, $i$ when
$I\mathbin{\triangle}J=\{i\}$. An edge set is \emph{rainbow} if its directions are
pairwise distinct.

We now describe completion properties, which are crucial to this work.
A completion property is an additional assumption guaranteeing that certain
configurations extend to a unique cube in the complex, of the corresponding dimension.

\begin{definition}[Completion properties]
Let $1\leq r\leq k$.
\begin{enumerate}
\item $X$ has \emph{$r$-joint completion} if, for every $1\leq t\leq r$, every vertex together with $t$ incident edges in \emph{distinct} directions is contained in exactly one member of $X(t)$.
\item $X$ has \emph{$r$-path completion} if, for every $1\leq t\leq r$, every path of $t$ edges in \emph{distinct} directions is contained in exactly one member of $X(t)$.
\item $X$ has \emph{$r$-rainbow-tree completion} if, for every $1\leq t\leq r$, every \emph{rainbow} tree in $X(1)$ with $t$ edges is contained in exactly one member of $X(t)$.
\end{enumerate}
In each case, uniqueness is among the cubes in $X(t)$, and not with respect to $\operatorname{Cu}_G(t)$.
\end{definition}

\begin{lemma}\label{lem:completion-equivalence}
For every decorated $k$-cube complex and $1\leq r\leq k$, the completion properties are equivalent.
\[
\begin{aligned}
r\text{-joint completion}
&\quad\Longleftrightarrow\quad r\text{-rainbow-tree completion}
&\quad\Longleftrightarrow\quad r\text{-path completion}.
\end{aligned}
\]
\end{lemma}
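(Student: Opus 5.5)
The plan is to show that \emph{rainbow-tree completion} is equivalent to each of the other two. One direction is immediate: a vertex with $t$ incident edges in distinct directions is a rainbow star, and a path with $t$ edges in distinct directions is a rainbow path. Both are rainbow trees with $t$ edges, so $r$-rainbow-tree completion implies $r$-joint and $r$-path completion.

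For the converse, fix a family $\mathcal F$, either stars or paths, and assume that every member of $\mathcal F$ with $t\le r$ rainbow edges lies in exactly one cube of $X(t)$. We prove by induction on $t\le r$ that every rainbow tree with $t$ edges lies in exactly one cube of $X(t)$. The case $t=1$ is trivial, since a single edge is both a star and a path.

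We first record two elementary facts about an embedded $t$-cube $Q$ containing a rainbow tree $T$ with $t$ edges.
\begin{itemize}
\item The $t$ directions of $T$ are exactly the directions of $Q$. Indeed, the types of $T$ are obtained from one type by flipping $t$ distinct coordinates, and they must fit inside a $t$-face of $Q_k$.
\item Since the vertices of $Q$ have distinct types, every vertex $u\in Q$ has exactly one $Q$-edge in each direction of $Q$.
\end{itemize}
Moreover, for every vertex $u$ of $Q$ and every direction $d$ of $Q$, we can choose a member $F$ of $\mathcal F$ inside $Q$, using all $t$ directions and containing the $d$-edge at $u$. For stars, take the full star at $u$. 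For paths, take a path inside the $(t-1)$-subcube through $u$ avoiding direction $d$ that ends at $u$ and uses each of its directions once (walk monotonically across that subcube), and then append the $d$-edge at $u$.

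Now let $T$ be a rainbow tree with $t$ edges. Pick a leaf $w$, its edge $e=\{u,w\}$ with direction $d$, and set $T'=T\setminus\{w\}$. By induction, $T'$ lies in a unique cube $Q'\in X(t-1)$, whose directions are those of $T'$, so $d$ is not among them. Choose $F'\in\mathcal F$ inside $Q'$ using all $t-1$ directions of $Q'$, and for stars and paths alike arrange that it is incident to $u$ in the way that allows appending $e$. For stars take the star at $u$. For paths take a path in $Q'$ ending at $u$. Then $F=F'\cup\{e\}$ is a member of $\mathcal F$ with $t$ rainbow edges.

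\emph{Existence.} By hypothesis, $F$ lies in a unique $Q\in X(t)$. The subcube of $Q$ through $u$ spanned by the directions of $T'$ belongs to $X(t-1)$ by closure under subcubes, and it contains $F'$. By uniqueness for $F'$ (at level $t-1$, using the hypothesis on $\mathcal F$) this subcube equals $Q'$. Hence $Q\supseteq Q'\cup\{w\}\supseteq T$.

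\emph{Uniqueness.} Let $Q_1,Q_2\in X(t)$ both contain $T$. Each has a $(t-1)$-subcube in $X(t-1)$ containing $T'$, and by the induction hypothesis both of these subcubes equal $Q'$. Thus both $Q_1$ and $Q_2$ contain $Q'$ and the edge $e$, hence both contain $F$, and uniqueness for $F$ gives $Q_1=Q_2$.

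The only delicate point, and the main thing to check carefully, is the bookkeeping. The chosen $F'$ must lie in $Q'$ and attach to $e$ correctly, which holds because $d$ is a new direction. Each identification of a subcube with $Q'$ must invoke uniqueness at the lower level, and the hypothesis for $\mathcal F$ holds for every $t\le r$, as the definition requires, so this is legitimate.
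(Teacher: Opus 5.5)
Your proof is correct and is essentially the paper's argument: the same induction on $t$ that removes a leaf (or an edge), completes the smaller configuration to $Q'\in X(t-1)$, appends the new edge to a star or rainbow path inside $Q'$, identifies the relevant face of the resulting $t$-cube with $Q'$ by uniqueness one level down, and obtains uniqueness the same way. The only difference is organizational: you route both joint and path completion directly into rainbow-tree completion via a single family $\mathcal F$, whereas the paper proves the cycle joint $\Rightarrow$ rainbow $\Rightarrow$ path $\Rightarrow$ joint.
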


\begin{proof}
We prove that the three properties imply one another cyclically. Both nontrivial implications use induction on the number $t\leq r$ of edges, with $t=1$ immediate.

\smallskip
\noindent\textbf{Joint $\Rightarrow$ rainbow.}
Let $T$ be a rainbow tree with $t\geq2$ edges, and delete a leaf $u$ adjacent to $v$. By induction, $T\setminus\set{u}$ has a unique completion $Q'\in X(t-1)$. Complete the $t$-joint formed by $uv$ and the edges of $Q'$ at $v$ to $Q\in X(t)$. Its face through $v$ in the directions of $Q'$ equals $Q'$ by joint completion, so $T\subseteq Q$. Any member of $X(t)$ containing $T$ contains $Q'$ by induction, hence this joint, and therefore equals $Q$.

\smallskip
\noindent\textbf{Rainbow $\Rightarrow$ path.}
Every rainbow path is a rainbow tree.

\smallskip
\noindent\textbf{Path $\Rightarrow$ joint.}
For a $t$-joint centered at $v$, remove an edge $e$ and complete the remaining joint to $Q'\in X(t-1)$ by induction. Append $e$ to a rainbow path of length $t-1$ in $Q'$ ending at $v$, and complete it to $Q\in X(t)$. Its face through $v$ in the directions of $Q'$ equals $Q'$ by path completion, so $Q$ contains the joint. Any member of $X(t)$ containing the joint contains $Q'$ by induction, hence this path, and therefore equals $Q$.
\end{proof}

\begin{definition}[$(k,r)$-cubical complex]
A decorated $(k,r)$-cubical complex is a decorated $k$-cube complex satisfying any of the equivalent $r$-completion properties. We subsequently omit the word decorated. A $(k,2)$-cubical complex is called a \emph{$k$-dimensional square complex}.
\end{definition}

\paragraph{The left--right Cayley complex.}
A standard example of a $(2,2)$-cubical complex is obtained from a finite group $\Gamma$ and symmetric subsets $A,B\subseteq \Gamma$, denoted $X_{A,\Gamma,B}$. For $I\subseteq\{1,2\}$, put $V_I=\Gamma\times\{I\}$ and write $g_I=(g,I)$. Direction one edges correspond to left multiplication by elements of $A$; direction two edges correspond to right multiplication by elements of $B$. The squares in $X_{A,\Gamma,B}(2)$ are precisely the squares
\[
Q(g,a,b)=
\left\{g_\varnothing,(ag)_{\{1\}},(gb)_{\{2\}},(agb)_{\{1,2\}}\right\},
\qquad g\in \Gamma,\ a\in A,\ b\in B.
\]
It satisfies the completion axiom. Indeed, the joint at $g_\varnothing$ with neighbors $(ag)_{\{1\}}$ and $(gb)_{\{2\}}$ has the unique completion $Q(g,a,b)$; joints at the other vertex types are treated similarly. See \cref{fig:left-right-weak-strong} for an illustration.

\begin{remark}
An embedded cube $Q\in \operatorname{Cu}_X(i)$ need not be in $X(i)$. For example, when $A=B=\Gamma$, $\operatorname{Cu}_{X_{A,\Gamma,B}}(2)$ contains $|\Gamma|^4$ four-cycles with one vertex in each block, whereas only the $|\Gamma|^3$ squares $Q(g,a,b)$ are in $X_{A,\Gamma,B}(2)$. Thus our definition does not impose a ``flag'' condition requiring every embedded cube to be declared.
\end{remark}

\begin{remark}
One may also impose stronger uniqueness properties: whenever $x\in V_I$ and $y\in V_J$ are joined by a rainbow path of length $d=|I\mathbin{\triangle}J|$, they belong to a unique declared $d$-cube. This requires uniqueness from the endpoints alone. The LPS-based cubical complexes constructed in \cite[Lemma~3.3 and Theorem~3.5]{HLMRZ25} satisfy this condition, but left--right Cayley complexes need not, as depicted in \cref{fig:left-right-weak-strong}.
\end{remark}

\subsection{The level chain complex}

Let $X$ be a $(k,2)$-cubical complex. Our chains and cochains are functions on the vertices of $X$, rather than on its cubical faces. Taking $\varnothing$ as the origin of the model cube, we write $V_s=\bigsqcup_{|I|=s}V_I$ for the $s$-th level, $0\leq s\leq k$. Unless specified otherwise, subscripts on vertex sets and functions indicate restriction to the corresponding type or level. Thus, for \(A\subseteq X(0)\), we write \(A_I=A\cap V_I\) and \(A_s=A\cap V_s\); similarly, \(f_I=f|_{V_I}\) and \(f_s=f|_{V_s}\). When viewing a scalar-valued function on a subset of vertices as an ambient function, we extend it by zero.

For $i\notin I$, orient every edge between $V_I$ and $V_{I\cup\{i\}}$ toward $V_{I\cup\{i\}}$. Write $x\lessdot y$ when an edge is oriented from $x$ to $y$, and $x\leq y$ when $x=y$ or there is an upward path from $x$ to $y$.
If $b,b'$ are neighbors of $a$ in distinct directions, denote the fourth vertex of their unique square completion by $b\meet[a]b'$. In particular, if $b,b'\lessdot a$, then $b\meet[a]b'\lessdot b,b'$. For readability, we write \(\{a;b,b';c\}\) for a square with \(a\in V_i\), \(b,b'\in V_{i-1}\), and \(c\in V_{i-2}\).

For $0\leq s\leq k$, define $C_s(X)=C^s(X)=\Ftwo^{V_s}$. The boundary map $\partial_s:C_s(X)\to C_{s-1}(X)$ and the coboundary map $\delta_s:C^s(X)\to C^{s+1}(X)$ are given by
\[
(\partial_s f)(u)=\sum_{u\lessdot v}f(v),
\qquad
(\delta_s f)(w)=\sum_{v\lessdot w}f(v),
\]
for $1\leq s\leq k$ and $0\leq s<k$, respectively. All sums are over $\Ftwo$, and $\partial_s=\delta_{s-1}^{T}$. The following lemma shows that square completion gives the identities $\partial^2=\delta^2=0$. The following lemma and its proof are from \cite{HL22a,HL22b}.

\begin{lemma}\label{chain:complex}
For every $(k,2)$-cubical complex $X$ and $1\leq s\leq k-1$,
\[
\partial_s\partial_{s+1}=0,
\qquad
\delta_s\delta_{s-1}=0.
\]
\end{lemma}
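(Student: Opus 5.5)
It suffices to prove $\delta_s\delta_{s-1}=0$. Since $\partial_s=\delta_{s-1}^{T}$, we have $\partial_s\partial_{s+1}=(\delta_s\delta_{s-1})^{T}$, so the boundary identity follows at once.

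Fix $f\in C^{s-1}(X)$ and $w\in V_{s+1}$, where $1\le s\le k-1$. Expanding the definition gives
\[
(\delta_s\delta_{s-1}f)(w)=\sum_{v\lessdot w}\ \sum_{u\lessdot v}f(u).
\]
This is a sum over downward paths $u\lessdot v\lessdot w$ of length two. For a fixed $u\in V_{s-1}$, the coefficient of $f(u)$ is the number of intermediate vertices $v$ with $u\lessdot v\lessdot w$, counted with edge multiplicity if the graph has parallel edges. We will show that the paths from $w$ down to $u$ come in pairs, so every such coefficient is even and the sum vanishes over $\Ftwo$.

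Take such a path with $\type(w)=J$, $\type(v)=J\setminus\{j\}$ and $\type(u)=J\setminus\{i,j\}$, where $i\ne j$. The two edges $u v$ and $v w$ have the distinct directions $i$ and $j$. By $2$-path completion, which by \Cref{lem:completion-equivalence} is equivalent to our standing $2$-joint completion, this path lies in exactly one declared square $Q\in X(2)$. The fourth vertex of $Q$ is $v'\in V_{J\setminus\{i\}}$, with $u\lessdot v'\lessdot w$. Since $v'$ has a different type from $v$, we have $v'\ne v$, so the second path $u\lessdot v'\lessdot w$ is genuinely different.

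Define the involution that sends the path $(u,v,w)$ to $(u,v',w)$. Applying path completion to the new path returns the same square $Q$, because $Q$ contains the new path and completion is unique. Hence applying the map twice gives back the original path, so the map is a fixed-point-free involution on the set of two-step paths from $w$ to $u$. Therefore the number of such paths is even, every coefficient vanishes mod $2$, and $\delta_s\delta_{s-1}=0$.

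The only delicate point is that this pairing really is an involution. That rests on uniqueness of completion among declared squares: a path could a priori lie in several embedded $4$-cycles, and only the uniqueness in $X(2)$ makes the partner well defined and symmetric. If parallel edges are allowed, the same argument should be run on the paths as sequences of edges rather than vertices. Beyond that, the proof is routine.
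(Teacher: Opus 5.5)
Your proof is correct and is essentially the paper's argument: the paper also pairs each intermediate vertex $y$ with the other middle vertex $z\meet[y]x$ of the unique declared square through $z\lessdot y\lessdot x$, notes that the pairing is a fixed-point-free involution because the two middle vertices have different types, and gets the second identity by transposition (it proves the boundary version first, you the coboundary version). Your appeal to \Cref{lem:completion-equivalence} is harmless but unnecessary, since a two-edge path is literally a two-edge joint at its middle vertex.
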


\begin{proof}
Fix $x\in V_{s+1}$ and $z\in V_{s-1}$. The $(z,x)$-entry of $\partial_s\partial_{s+1}$ counts, modulo two, the vertices $y\in V_s$ satisfying $z\lessdot y\lessdot x$. For each such $y$, the unique square completing along this path has another middle vertex
\[
y'=z\meet[y]x.
\]
Observe that completing $z\lessdot y'\lessdot x$ gives the same square, i.e. $z \meet[y'] x = y$, so $y\mapsto y'$ is a fixed-point-free involution, since $\type(y) \neq \type(y')$ . Therefore the number of vertices $y\in V_s$ satisfying $z\lessdot y\lessdot x$ is even, proving $\partial_s\partial_{s+1}=0$. Taking transposes yields $\delta_s\delta_{s-1}=0$.
\end{proof}

For $I\subseteq[k]$ and $i\notin I$, let $X_{I,i}$ be the bipartite graph between $V_I$ and $V_{I\cup\{i\}}$. We assume throughout that $X_{I,i}$ is $(D_i,D_i')$-biregular, with $D_i,D_i'>0$: vertices in $V_I$ have degree $D_i$, and vertices in $V_{I\cup\{i\}}$ have degree $D_i'$. In particular, throughout, degrees depend only on the direction $i$, but not on $I$.

\paragraph{Norms.}\label{sec:norm}
For $I\subseteq[k]$, put $D_I=\prod_{i\in I}D_i$ and $D_I'=\prod_{i\in I}D_i'$, with $D_\varnothing=D_\varnothing'=1$. Assign each vertex $v\in V_I$ weight $\norm{v}_X=\norm{I}_X=D_I^{-1}$. For $f:X(0)\to\mathbb R$, define
\begin{equation}\label{eq:norm}
    \norm{f}_X = \sum_{I\subseteq[k]}\sum_{v\in V_I}\frac{|f(v)|}{D_I}.
\end{equation}
When the ambient complex is clear from context we omit the subscript. For an $\Ftwo$-valued chain or cochain $f$, its norm is the total weight of its support: $\norm{f}=\sum_{v\in\supp f}\norm{v}$. We identify a set with its indicator function, writing $\norm{H}=\norm{\mathbf 1_H}$. With addition over $\Ftwo$, we have $\norm{f+g}\leq\norm{f}+\norm{g}$.
This norm extends to $k$-partite simplicial complexes. A face of type $I$ is assigned the norm $D_I^{-1}$, and the norm of a cochain is the total weight of its support. Throughout, we use the normalized distance. For a cochain $f_r\in C^r(X)$ and $C\subseteq C^r(X)$, write
\[
    \dist_X(f_r, C)
    \defeq
    \min_{c\in C}
    \norm{f_r+c}, 
    \qquad
    \dist_X(f_r) = \dist\prn{f_r, \im (\delta^{r-1})}
\]
Here too, the subscript is omitted when the ambient complex is clear from context.

\begin{remark}\label{rem:hamming-vs-norm}
For every $f:X(0)\to\Ftwo$,
\[
\frac{|\supp f|}{D_{[k]}}
\leq \norm{f}
\leq |\supp f|.
\]
Thus the weighted norm and Hamming weight are equivalent up to a factor depending only on the degrees.
\end{remark}

For $v\in V_I$ and $i\notin I$, the total weight of its neighbors in direction $i$ is $D_i/D_{I\cup\{i\}}=\norm{v}$. Summing over the upward directions gives
\begin{equation}\label{eq:norm-propagation}
\norm{\delta_{|I|}\mathbf 1_{\{v\}}}
=\sum_{v\lessdot w}\norm{w}
=\sum_{i\notin I}\frac{D_i}{D_{I\cup\{i\}}}
=(k-|I|)\norm{v}.
\end{equation}

Finally, choosing $[k]$ as the origin reverses every edge orientation. The resulting \emph{reversed complex} $X^{\mathrm{rev}}$ has the same vertices, edges, and declared cubes, with $V_I^{\mathrm{rev}}=V_{[k]\setminus I}$. The following lemma relates its coboundary maps to the boundary maps of $X$.

\begin{lemma}\label{lem:reversal}
For $0\leq s\leq k$, the identity on functions identifies $C_s(X)$ with $C^{k-s}(X^{\mathrm{rev}})$. Under this identification,
\[
\partial_s^X=\delta_{k-s}^{X^{\mathrm{rev}}}
\qquad(1\leq s\leq k).
\]
Moreover, $D_i^{\mathrm{rev}}=D_i'$ and $(D_i')^{\mathrm{rev}}=D_i$.
\end{lemma}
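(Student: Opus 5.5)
The plan is to unwind the definitions directly. Reversal keeps the vertex set, the edge set and the declared cubes, and only relabels blocks via $V_I^{\mathrm{rev}}=V_{[k]\setminus I}$. Hence the level-$(k-s)$ vertices of $X^{\mathrm{rev}}$ are exactly
\[
\bigsqcup_{|J|=k-s}V^{\mathrm{rev}}_J=\bigsqcup_{|I|=s}V_I=V_s .
\]
It follows that $C^{k-s}(X^{\mathrm{rev}})=\Ftwo^{V_s}=C_s(X)$, and the identity map identifies the two spaces.

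Next I will check that the orientation flips. An edge of $X$ between $V_I$ and $V_{I\cup\{i\}}$ is oriented toward $V_{I\cup\{i\}}$. In $X^{\mathrm{rev}}$ the same two blocks are $V^{\mathrm{rev}}_{[k]\setminus I}$ and $V^{\mathrm{rev}}_{([k]\setminus I)\setminus\{i\}}$, so the edge now points toward the former, which is $V_I$. Writing $\lessdot_{\mathrm{rev}}$ for the reversed relation, this says $u\lessdot_{\mathrm{rev}}v$ if and only if $v\lessdot u$.

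For $f\in C_s(X)$ and $u\in V_{s-1}$ (level $k-s+1$ in $X^{\mathrm{rev}}$), the definitions then give
\[
(\delta^{X^{\mathrm{rev}}}_{k-s}f)(u)=\sum_{v\lessdot_{\mathrm{rev}}u}f(v)=\sum_{u\lessdot v}f(v)=(\partial^X_s f)(u).
\]
This proves $\partial_s^X=\delta_{k-s}^{X^{\mathrm{rev}}}$ for $1\le s\le k$.

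For the degrees, the direction-$i$ graph of $X^{\mathrm{rev}}$ from $V^{\mathrm{rev}}_J$ to $V^{\mathrm{rev}}_{J\cup\{i\}}$ is the graph $X_{I,i}$ with $I=[k]\setminus(J\cup\{i\}) $, read with its sides swapped. Its lower side $V^{\mathrm{rev}}_J=V_{I\cup\{i\}}$ has degree $D_i'$, and its upper side has degree $D_i$. Therefore $D_i^{\mathrm{rev}}=D_i'$ and $(D_i')^{\mathrm{rev}}=D_i$.

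I should also confirm that $X^{\mathrm{rev}}$ is again a $(k,2)$-cubical complex, so that its level complex is defined. This holds because joint completion refers only to vertices, undirected edges, directions and declared cubes, and none of these change under reversal.

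There is no real obstacle here. The only care needed is the index bookkeeping showing that level $s$ of $X$ becomes level $k-s$ of $X^{\mathrm{rev}}$ and that the edge orientation is exactly reversed.
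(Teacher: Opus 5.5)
Your proposal is correct and follows the paper's proof. Both arguments identify level $s$ of $X$ with level $k-s$ of $X^{\mathrm{rev}}$, note that reversal flips every edge orientation so that $\partial^X_s$ and $\delta^{X^{\mathrm{rev}}}_{k-s}$ sum over the same neighbors, and read off the degree identities from the swapped sides of each directional graph; you simply write out the index bookkeeping, and the check that $X^{\mathrm{rev}}$ is again a $(k,2)$-cubical complex, which the paper leaves implicit.
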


\begin{proof}
By definition, $V_s(X)=V_{k-s}(X^{\mathrm{rev}})$, and $x\lessdot_X y$ if and only if $y\lessdot_{X^{\mathrm{rev}}}x$. Thus the corresponding boundary and coboundary maps sum over the same neighbors. Reversal also exchanges the two sides of every directional graph, giving the degree identities.
\end{proof}

This identification need not preserve norms. For $f\in C_s(X)$, the reversed norm is
\[
\norm{f}_{X^{\mathrm{rev}}}
=\sum_{\substack{I\subseteq[k]\\|I|=s}}\sum_{v\in V_I}
\frac{|f(v)|}{D'_{[k]\setminus I}}.
\]

\subsection{Links}

\begin{definition}[Link]\label{def:augmented-link}
Let $v\in V_I$. For $J\subseteq[k]\setminus I$, define
\[
\Lk_J(v)=\{x\in V_{I\cup J}:v\leq x\}.
\]
The link of $v$ is $\Lk(v)=\{x\in X(0):v\leq x\}$, with levels
\[
\Lk_s(v)=\bigsqcup_{\substack{J\subseteq[k]\setminus I\\|J|=s+1}}\Lk_J(v),
\qquad -1\leq s\leq k-|I|-1.
\]
In particular, $\Lk_{-1}(v)=\Lk_{\emptyset}(v)=\set{v}$.
\end{definition}

The chain and cochain groups of the link are $C_s(\Lk(v))=C^s(\Lk(v))=\Ftwo^{\Lk_s(v)}$. Their boundary and coboundary maps use the same incidence formulas as above, with sums restricted to vertices in the link. Thus $\delta_{-1}$ sends a value at $v$ to the constant cochain with that value on $\Lk_0(v)$.

For $h\in C^s(\Lk(v))$, let $\iota_vh\in C^{|I|+s+1}(X)$ be its extension by zero. The link norm is
\begin{equation}\label{eq:link-norm}
\norm{h}_v=
\sum_{\substack{J\subseteq[k]\setminus I\\|J|=s+1}}\sum_{x\in\Lk_J(v)}\frac{|h(x)|}{D_J}.
\end{equation}
Since $D_{I\cup J}=D_ID_J$, we have $\norm{\iota_vh}_X=D_I^{-1}\norm{h}_v$.

\begin{lemma}\label{lem:link-chain-complex}
If $X$ is a $(k,2)$-cubical complex, then the restricted link maps satisfy $\partial^2=0$ and $\delta^2=0$.
\end{lemma}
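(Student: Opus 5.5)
The plan is to imitate the proof of \cref{chain:complex}, working inside the link. Fix $v\in V_I$. The link $\Lk(v)$ is upward closed: if $v\leq x$ and $x\lessdot y$, then $v\leq y$. So the restricted coboundary of $h\in C^s(\Lk(v))$ at a vertex $w\in\Lk_{s+1}(v)$ is $\sum_{u\lessdot w,\,u\in\Lk(v)}h(u)$. The $(z,x)$-entry of $\delta_{s+1}\delta_s$ on the link, for $z\in\Lk_s(v)$ and $x\in\Lk_{s+2}(v)$, therefore counts modulo two the vertices $y$ with $z\lessdot y\lessdot x$ and $y\in\Lk(v)$. The last condition is automatic, since $v\leq z\lessdot y$.

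For $s\geq 0$, $z$ lies at level $|I|+s+1\geq |I|+1$ of $X$. I will use the involution $y\mapsto y'=z\meet[y]x$ from the proof of \cref{chain:complex}, which comes from the unique square completing the two-step path $z\lessdot y\lessdot x$. We have $z\lessdot y'$, and $v\leq z$, so $y'\in\Lk(v)$ as well. The involution is fixed-point free because $\type(y)\neq\type(y')$: the two middle vertices add the two different directions of $\type(x)\setminus\type(z)$. Hence the count is even and the entry vanishes.

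The remaining case is the augmented bottom degree $s=-1$, where $z=v$ and $\delta_{-1}$ sends the value at $v$ to the constant cochain on $\Lk_0(v)$. Here $(\delta_0\delta_{-1})(x)$, for $x\in\Lk_1(v)$, counts the $y\in\Lk_0(v)$ with $y\lessdot x$, that is, the middle vertices of two-step paths $v\lessdot y\lessdot x$. The same involution applies with $z=v$, and it is precisely the one used in \cref{chain:complex} for paths starting at $v$. So this case needs no separate argument beyond checking that the augmentation formula matches the incidence formula with $\Lk_{-1}(v)=\{v\}$.

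Finally, $\partial^2=0$ follows by transposition. The restricted boundary maps are the transposes of the restricted coboundary maps, since both are the incidence matrices between consecutive link levels restricted to link vertices. Hence $\partial_{s+1}\partial_{s+2}=(\delta_{s+1}\delta_s)^T=0$.

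The only point needing care is closure of the link under the square involution. This is exactly where upward-closedness is used: both middle vertices lie above $z$, hence above $v$. Beyond that, no real obstacle is expected.
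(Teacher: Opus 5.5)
Your proposal is correct and follows the paper's proof, which consists of the single observation that the square involution from \cref{chain:complex} preserves the link; your upward-closedness remark ($v\leq z\lessdot y'$ forces $y'\in\Lk(v)$) is exactly the justification for that observation. Your explicit checks of the augmented degree $s=-1$ (where $\Lk_{-1}(v)=\{v\}$ makes $\delta_{-1}$ agree with the incidence formula) and of the transposition giving $\partial^2=0$ fill in details the paper leaves implicit.
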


\begin{proof}
The involution of \cref{chain:complex} preserves the link.
\end{proof}

\subsection{Small-set coboundary expansion}

\begin{definition}[Minimality]\label{def:minimality}
For $1\leq r\leq k$, a cochain $f\in C^r(X)$ is \emph{minimal} if
\[
\norm{f}\leq\norm{f+\delta_{r-1}h}
\qquad\text{for every }h\in C^{r-1}(X).
\]
It is \emph{locally minimal} if, for every $0\leq s\leq r-1$, every $v\in V_s$, and every $h\in C^{r-s-2}(\Lk(v))$,
\[
\norm{f}\leq\norm{f+\delta_{r-1}(\iota_vh)}.
\]
All cochains in \(C^0(X)\) are minimal.
\end{definition}

Every minimal cochain is locally minimal, since each $\iota_vh$ in the definition belongs to $C^{r-1}(X)$.

\begin{definition}[Small-set coboundary expansion]\label{def:small-set-coboundary-expansion}
For $0\leq r\leq k-1$, a cubical complex $X$ is an $(\alpha,\beta)$-small-set coboundary expander at level $r$ if every locally minimal $f\in C^r(X)$ satisfying $\norm{f}\leq\alpha$ also satisfies $\norm{\delta_rf}\geq\beta\norm{f}$.
\end{definition}

\subsection{Lossless expansion}
For a graph $H$, write $\Gamma(v)$ for the neighborhood of $v$ and $\Gamma(A)=\bigcup_{v\in A}\Gamma(v)$. For $v\in V_I$ and $i\in[k]$, let $\Gamma_i(v)$ denote the neighbors of $v$ in direction $i$. Put $\Gamma_i(A)=\bigcup_{v\in A}\Gamma_i(v)$. Thus $\Gamma_i(v)\subseteq V_{I\mathbin{\triangle}\{i\}}$.

\begin{definition}[One-sided lossless bigraph]
Let $H=(L,R,E)$ be a $(D,D')$-biregular bigraph. For $c,\varepsilon\in\prn{0,1}$, we say that $H$ is \emph{one-sided $(c,\varepsilon)$-lossless} from $L$ to $R$ if every $A\subseteq L$ with $|A|\leq c|L|$ satisfies
\[
|\Gamma(A)|\geq(1-\varepsilon)D|A|.
\]
\end{definition}

\begin{definition}[Lossless cubical complex]\label{def:lossless-cubical:complex}
A cubical complex $X$ is \emph{lossless} with parameters $(D_i,D_i',c_i,\varepsilon_i)_{i=1}^k$ if, for every $I\subseteq[k]$ and $i\notin I$, the directional graph $X_{I,i}$ is one-sided $(c_i,\varepsilon_i)$-lossless from $V_I$ to $V_{I\cup\{i\}}$. It is \emph{two-sided lossless} if $X^{\mathrm{rev}}$ is also lossless, with $D_i,D_i'$ interchanged and the same $c_i,\varepsilon_i$.
\end{definition}

For a bigraph $H=(L,R,E)$, let $e(A,B)$ denote the number of edges between $A\subseteq L$ and $B\subseteq R$. For a vertex $v$ and a vertex set $S$, put $\deg_S(v)=|\Gamma(v)\cap S|$. Denote $t_+=\max\{t,0\}$. 
\begin{lemma}\label{lem:lossless-estimates}
Let $H=(L,R,E)$ be a one-sided $(c,\varepsilon)$-lossless $(D,D')$-biregular bigraph.
\begin{enumerate}
\item\label{lossless:excess} \textnormal{\textbf{Excess bound.}} If $A\subseteq L$ satisfies $|A|\leq c|L|$, then
\[
\sum_{y\in R}(\deg_A(y)-1)_+
=D|A|-|\Gamma(A)|
\leq\varepsilon D|A|.
\]
\item\label{lossless:edge} \textnormal{\textbf{Lossless edge bound.}} If $A\subseteq L$ satisfies $|A|\leq c|L|$, then, for every $B\subseteq R$,
\[
e(A,B)\leq |B|+\varepsilon D|A|.
\]
\item\label{lossless:large:degree} \textnormal{\textbf{Large-degree bound.}} For $B\subseteq R$, suppose that at most $c|L|$ vertices $x\in L$ satisfy $\deg_B(x)>\varepsilon D$. Then
\[
\sum_{x\in L}\left(\frac{\deg_B(x)}D-\varepsilon\right)_+
\leq\frac{|B|}D.
\]
\end{enumerate}
\end{lemma}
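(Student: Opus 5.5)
We prove the three estimates in order. Each is a short double-counting argument from the definition of one-sided $(c,\varepsilon)$-losslessness, and each later item reuses the earlier ones.

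\textbf{Excess bound.} Count the edges leaving $A$. By biregularity $e(A,R)=D|A|$. Grouping these edges by their endpoint in $R$ gives $D|A|=\sum_{y\in\Gamma(A)}\deg_A(y)$. Since $\deg_A(y)\geq1$ exactly on $\Gamma(A)$ and vanishes elsewhere, we have $\sum_{y\in R}(\deg_A(y)-1)_+=\sum_{y\in\Gamma(A)}(\deg_A(y)-1)=D|A|-|\Gamma(A)|$. For $|A|\leq c|L|$, losslessness gives $|\Gamma(A)|\geq(1-\varepsilon)D|A|$, which yields the stated bound $\varepsilon D|A|$.

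\textbf{Lossless edge bound.} Write $e(A,B)=\sum_{y\in B}\deg_A(y)$. Using the pointwise inequality $t\leq \mathbf 1_{t\geq1}+(t-1)_+$ for nonnegative integers $t$, we get
\[
e(A,B)\leq\sum_{y\in B}\mathbf 1_{\deg_A(y)\geq1}+\sum_{y\in B}(\deg_A(y)-1)_+\leq |B|+\sum_{y\in R}(\deg_A(y)-1)_+ .
\]
Applying item~1 bounds the last sum by $\varepsilon D|A|$.

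\textbf{Large-degree bound.} Let $A=\{x\in L:\deg_B(x)>\varepsilon D\}$. Only vertices of $A$ contribute to the left-hand side, and by hypothesis $|A|\leq c|L|$. Then
\[
\sum_{x\in L}\Big(\tfrac{\deg_B(x)}{D}-\varepsilon\Big)_+=\frac{1}{D}\sum_{x\in A}\deg_B(x)-\varepsilon|A|=\frac{e(A,B)}{D}-\varepsilon|A|.
\]
Item~2 gives $e(A,B)\leq |B|+\varepsilon D|A|$, so the right-hand side is at most $|B|/D$.

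The only step needing care is item~3. The restriction to vertices of large degree is exactly what makes the set $A$ small enough to invoke losslessness, and the cancellation of the $\varepsilon|A|$ terms works only because the threshold matches the loss parameter $\varepsilon$. Everything else is routine double counting.
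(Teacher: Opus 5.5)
Your proposal is correct and follows essentially the same route as the paper's proof: the identical double count for the excess bound, the same ``one edge per vertex of $B$ plus the excess'' split for the edge bound (your pointwise inequality is in fact an equality for nonnegative integers), and the same choice $A=\{x\in L:\deg_B(x)>\varepsilon D\}$ followed by item~2 for the large-degree bound. Nothing is missing.
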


\begin{proof}
\noindent\textbf{Excess bound (\ref*{lossless:excess}).}
Fix $A\subseteq L$ with $|A|\leq c|L|$. Every vertex in $\Gamma(A)$ has at least one neighbor in $A$, and vertices outside $\Gamma(A)$ have none. Consequently, by regularity and lossless expansion,
\[
\sum_{y\in R}(\deg_A(y)-1)_+
=\sum_{y\in\Gamma(A)}(\deg_A(y)-1)
=D|A|-|\Gamma(A)|
\leq\varepsilon D|A|.
\]

\smallskip
\noindent\textbf{Lossless edge bound (\ref*{lossless:edge}).}
Fix $A\subseteq L$ with $|A|\leq c|L|$ and $B\subseteq R$. At each vertex of $B\cap\Gamma(A)$, count one incident edge from $A$ separately from the remaining edges. This gives
\[
e(A,B)
=|B\cap\Gamma(A)|+\sum_{y\in B}(\deg_A(y)-1)_+
\leq |B|+\sum_{y\in R}(\deg_A(y)-1)_+
\leq |B|+\varepsilon D|A|.
\]
The last inequality follows from part~(\ref{lossless:excess}).

\smallskip
\noindent\textbf{Large-degree bound (\ref*{lossless:large:degree}).}
Fix $B\subseteq R$ as in the statement, and put $A=\{x\in L:\deg_B(x)>\varepsilon D\}$, so $|A|\leq c|L|$. The summand $(\deg_B(x)/D-\varepsilon)_+$ is positive precisely when $x\in A$. Therefore,
\[
\sum_{x\in L}\left(\frac{\deg_B(x)}D-\varepsilon\right)_+
=\sum_{x\in A}\left(\frac{\deg_B(x)}D-\varepsilon\right)
=\frac{e(A,B)}D-\varepsilon|A|.
\]
The bound $|A|\leq c|L|$ allows us to apply part~(\ref{lossless:edge}), obtaining
\[
e(A,B)\leq |B|+\varepsilon D|A|.
\]
Substituting this bound into the preceding identity gives the required upper bound $|B|/D$.
\end{proof}

%% file: Local.tex
\section{Expansion of the links}
\label{sec:local-picture}
In this section we show that links in lossless cubical complexes can be approximated by complete multipartite simplicial complexes. Recall that we define the norm on such complexes similarly to the norm on cubical complexes. If the parts have sizes $D_i$ for $i\in[d]$, then a face of type $I$ has weight $D_I^{-1}$, and the norm of a cochain is the sum of the weights of the faces in its support. Under this norm, complete $d$-partite complexes are known to expand minimal cochains at every level $-1\leq r\leq d-2$, with a constant factor $m_{d,r}$ depending on $d$ and $r$ but not on the part sizes:
\begin{equation}
    \label{eq:m}
m_{d,r}:=
\left(\binom d{r+1}\binom{d+r+1}{r+2}\right)^{-1}.
\end{equation}

\begin{theorem}[{\cite[Corollary~3.1]{LMM16}}]\label{thm:lmm-multipartite}
Let $\M=K(D_1,\ldots,D_d)$ be the complete $d$-partite complex with block sizes $D_1,\ldots,D_d$. For $-1\leq r\leq d-2$, every $f\in C^r(\M)$ satisfies
\[
\norm{\delta_rf}\geq m_{d,r}\dist(f).
\]
\end{theorem}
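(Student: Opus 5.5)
This statement is cited from \cite[Corollary~3.1]{LMM16}, so within the paper we would simply invoke it. To reprove it, we would use the standard cone (contraction) method for coboundary expansion of complete multipartite complexes, in the weighted norm where a face of type $J$ has weight $D_J^{-1}$. Equivalently, this weight is proportional to the probability of the face under the product distribution: choose one vertex uniformly from each part and take the subset of type $J$. The plan is to express $\dist(f)$ as an average of distances to explicit cone coboundaries, and to charge each such distance to the coboundary $\delta_r f$.

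\textbf{Step 1 (cones).} Fix a top face $\tau=(v_1,\dots,v_d)$, one vertex per part, and let $J\subseteq[d]$ range over sets of size $r+1$ that will index the apex directions. For a fixed choice of apex pattern we define a cone operator $h_\tau$ sending $r$-cochains to $(r-1)$-cochains. For an $(r-1)$-face $\sigma$, we set $(h_\tau f)(\sigma)=f(\sigma\cup\{v_j\})$, where $j$ is a canonically chosen part not met by $\sigma$, for instance the smallest such index. Since the cone is built from a single vertex per part, the usual homotopy identity becomes
\[
f+\delta_{r-1}h_\tau f=h'_\tau\delta_r f,
\]
up to correction terms that are themselves coboundary values. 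More precisely, for each $r$-face $\rho$, the value $(f+\delta h_\tau f)(\rho)$ is a sum of $\delta_r f$ over at most $\binom{d+r+1}{r+2}$-many $(r+1)$-faces. These faces are obtained by iteratively adding vertices of $\tau$ to $\rho$ and swapping, and we would track them through an iterated-cone bookkeeping over at most $r+2$ coordinates.

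\textbf{Step 2 (averaging).} Averaging over a uniformly random $\tau$ gives
\[
\dist(f)\le \mathbb E_\tau\norm{f+\delta h_\tau f}\le \sum_{\rho}\norm{\rho}\,\Pr_\tau\bigl[\text{some relevant }(r+1)\text{-face }\sigma\ni\rho\text{ has }\delta f(\sigma)=1\bigr].
\]
Then we count, for each $(r+1)$-face $\sigma$ in $\supp\delta_r f$, the total weighted mass of pairs $(\rho,\tau)$ charged to it. Because the measure is a product measure, the weights conspire. The weight of $\rho$ times the probability that $\tau$ agrees with $\sigma\setminus\rho$ on the relevant coordinates equals the weight of $\sigma$. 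The number of pairs (type of $\rho$, position of the swap) is bounded by $\binom{d}{r+1}\binom{d+r+1}{r+2}$. This yields $\dist(f)\le m_{d,r}^{-1}\norm{\delta_r f}$, which is exactly the constant in \eqref{eq:m}.

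\textbf{Main obstacle.} The main obstacle is Step 1's combinatorial identity in the multipartite setting. A face $\rho$ may already contain a vertex in a part $j$ different from $v_j$, so simple coning at a single apex is unavailable. One therefore needs the iterated ``replace one vertex by $v_j$'' telescoping, and must verify that every intermediate term is a coboundary value on an honest face of $\M$. This is automatic here because $\M$ is complete multipartite, which is the reason the bound holds independently of the part sizes.
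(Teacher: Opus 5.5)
Citing \cite[Corollary~3.1]{LMM16} is exactly what the paper does; it gives no proof of its own. Your overall reproof strategy is the right one: a cone attached to a top face $\tau$, then averaging over $\tau$ using the symmetry of $\M$. However, Step~1 fails as written. Write $V_1,\dots,V_d$ for the parts of $\M$.

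\textbf{The counterexample.} Take $d=3$, $r=1$, $j(\sigma)$ the smallest missing part, and $f=\delta_0 g$ for some $g\in C^0(\M)$. Then $(h_\tau f)(u)=g(u)+g(v_{j(u)})$, and hence
\[
(f+\delta_0h_\tau f)(\{a,b\})=g(v_{j(a)})+g(v_{j(b)}).
\]
This equals $g(v_1)+g(v_2)$ on every edge meeting $V_1$. It is therefore nonzero when $g(v_1)\neq g(v_2)$, although $\delta_1f=0$. The same happens for any choice rule, since $j(u)$ can never be the part of $u$, so $u\mapsto v_{j(u)}$ is never constant. In short, $\sigma\mapsto\sigma\cup\{v_{j(\sigma)}\}$ is not a cone.

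\textbf{Consequences.} $f+\delta h_\tau f$ is not determined by $\delta_rf$, so your ``more precisely'' claim is false. No bookkeeping of correction terms can rescue this operator. Two further points:
\begin{enumerate}
\item The count $\binom{d+r+1}{r+2}$ is asserted to reproduce \eqref{eq:m} rather than derived.
\item The charging in Step~2 assumes every charged face contains $\rho$, which fails for genuine cones.
\end{enumerate}

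\textbf{The missing ingredient.} You need a chain-valued cone: a map $\kappa_\tau$ from $s$-faces to $(s+1)$-chains with $\kappa_\tau(\emptyset)=\{v_1\}$ and $\partial\kappa_\tau(\sigma)=\sigma+\kappa_\tau(\partial\sigma)$. Setting $(h_\tau f)(\sigma)=\sum_{\omega\in\kappa_\tau(\sigma)}f(\omega)$ then gives exactly
\[
(f+\delta h_\tau f)(\rho)=\sum_{\omega\in\kappa_\tau(\rho)}(\delta f)(\omega),
\]
with no correction terms.

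\textbf{Constructing the cone.} In $\M$ the join structure supplies one:
\begin{enumerate}
\item If $\sigma$ misses $V_1$, put $\kappa_\tau(\sigma)=\sigma\cup\{v_1\}$.
\item For $x\in V_1$, put $\kappa_\tau(\{x\}\sqcup\sigma')=\sum_{\omega\in\kappa'(\sigma')}\bigl(\{x\}\cup\omega+\{v_1\}\cup\omega\bigr)$, where $\kappa'$ is the analogous cone for $K(D_2,\dots,D_d)$ with apex $(v_2,\dots,v_d)$.
\end{enumerate}
Induction on $d$ verifies the cone identity on faces of dimension at most $d-2$. It also gives $\abs{\kappa_\tau(\rho)}\le2^{r+1}$ for $r$-faces.

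\textbf{The averaging.} Use invariance under $\prod_j\mathrm{Sym}(V_j)$, namely $\kappa_{g\tau}(g\rho)=g\kappa_\tau(\rho)$. Fix a top face $\tau_0$, and let $\M(T)$ be the set of faces of type $T$. For $\omega$ of type $T$ this gives
\[
\sum_\rho\norm{\rho}\Pr_\tau[\omega\in\kappa_\tau(\rho)]=\norm{\omega}\sum_\rho\norm{\rho}\abs{\kappa_{\tau_0}(\rho)\cap\M(T)}\le\binom{d}{r+1}2^{r+1}\norm{\omega}.
\]
Hence
\[
\dist(f)\le\mathbb{E}_\tau\norm{f+\delta h_\tau f}\le\binom{d}{r+1}2^{r+1}\norm{\delta_rf}.
\]
This implies the statement, since $2^{r+1}\le\binom{d+r+1}{r+2}$ whenever $d\ge r+2$.
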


We prove that lossless square complexes inherit this expansion in
their links, up to an additive error depending on the loss. 
\begin{theorem}
\label{thm:link-expansion-local}
Let $X$ be a $k$-dimensional lossless square complex with parameters
$\prn{D_i,D_i',c_i,\eps_i}_{i=1}^k$.
Fix $I\subseteq[k]$ with $|I|<k$ and a vertex $v\in V_I$, and put
$d=k-|I|$. Then, for every $-1\leq r\leq d-2$ and every cochain
$f:\Lk_r(v)\to\Ftwo$,
\begin{equation}
    \label{eq:link-expansion}
    \norm{\delta_r f}_v
    \ge
    m_{d,r}
    \dist(f)
    -O_k\prn{\sum_{j=1}^k\eps_j}.
\end{equation}
\end{theorem}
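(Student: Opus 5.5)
The plan is to compare the link $\Lk(v)$ with the complete $d$-partite complex $\M=K(D_j:j\notin I)$, transfer \Cref{thm:lmm-multipartite} along a nearly bijective map, and absorb every defect into the additive error $O_k(\sum_j\eps_j)$. All relevant norms are bounded by constants depending only on $k$: the total link-norm of $\Lk_J(v)$ is at most $|\Lk_J(v)|/D_J\le1$. So it suffices to show that the defect sets have link-norm $O_k(\sum\eps_j)$.

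\textbf{Step 1: the comparison map.} A face $\sigma$ of $\M$ of type $J\subseteq[k]\setminus I$ is a choice of one neighbor of $v$ in each direction $j\in J$, i.e.\ a $|J|$-joint at $v$. By $|J|$-joint completion (through \Cref{lem:completion-equivalence}; for $|J|>2$ one uses the iterated completion implicit in the definition of a lossless square complex) it spans a unique cube. Define $\varphi(\sigma)\in\Lk_J(v)$ to be the vertex of that cube of type $I\cup J$.

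\textbf{Step 2: $\varphi$ is surjective and nearly injective.} Surjectivity holds because any $x\ge v$ lies at the end of a rainbow upward path from $v$, and path completion produces a cube through that path, whose edges at $v$ give a preimage. Near-injectivity comes from losslessness. Since $|\Lk_J(v)|\le D_J$ is a constant, the condition $|A|\le c_j|V_{I\cup J}|$ holds once the complex is large. Lossless expansion of $X_{I\cup J,j}$ then gives $|\Lk_{J\cup\{j\}}(v)|\ge(1-\eps_j)D_j|\Lk_J(v)|$. Iterating, $|\Lk_J(v)|\ge(1-\sum_j\eps_j)D_J$. Hence the set $B_J\subseteq\Lk_J(v)$ of vertices with more than one preimage has $|B_J|/D_J=O_k(\sum\eps_j)$, as does the set of faces of $\M$ mapping into $B_J$.

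\textbf{Step 3: $\varphi$ almost commutes with $\delta$.} Call a vertex $x$ of the link bad if $x\in B$ or some down-neighbor of $x$ in the link lies in $B$. Because the degrees are bounded by $D'$-factors depending only on $k$, the bad set still has norm $O_k(\sum\eps_j)$. At a good $x=\varphi(\tau)$ the down-neighbors of $x$ in $\Lk(v)$ correspond bijectively to the codimension-one subfaces of $\tau$, via the subcubes of the cube of $\tau$. Hence $(\delta_\M\varphi^*f)(\tau)=(\delta f)(x)$ there. This gives
\[
\norm{\delta_r f}_v\ \ge\ \norm{\delta_\M\varphi^*f}_\M-O_k\Big(\sum\eps_j\Big)\ \ge\ m_{d,r}\dist_\M(\varphi^*f)-O_k\Big(\sum\eps_j\Big).
\]

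\textbf{Step 4: compare distances.} Take $h$ on $\M$ with $\norm{\varphi^*f+\delta_\M h}_\M=\dist_\M(\varphi^*f)$. Define $h'$ on the link by $h'(y)=h(\sigma)$ for an arbitrarily chosen preimage $\sigma$ of $y$. Then $\varphi^*h'=h$ off the preimages of $B$, and by Step 3, $\varphi^*(f+\delta h')$ agrees with $\varphi^*f+\delta_\M h$ off the preimage of the bad set. Thus
\[
\dist(f)\le\norm{f+\delta h'}_v\le\dist_\M(\varphi^*f)+O_k\Big(\sum\eps_j\Big),
\]
using that $\varphi$ is surjective with norm distortion only on bad vertices. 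The case $r=-1$ is the same with $\delta_{-1}$ constant. Combining Steps 3 and 4 gives \eqref{eq:link-expansion}.

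The main obstacle is Step 3. One must verify that at good vertices the link incidence exactly matches the simplicial incidence, which relies on uniqueness of completions within the link. One must also check that enlarging $B$ by its upward neighborhood costs only a $k$-dependent factor in norm: each vertex of $\Lk_J$ has at most $D_j$ up-neighbors in direction $j$, each of weight $1/(D_JD_j)$, so this holds. I would first establish the $2$-dimensional case ($d=2$) explicitly to fix the bookkeeping.
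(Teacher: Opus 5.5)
\textbf{Gap: the map in Step 1 is not well defined.} Your overall architecture matches the paper's. You build a type-preserving correspondence between $\Lk(v)$ and $\M=K(D_j:j\notin I)$ that is bijective and incidence-preserving off a set of norm $O_k(\sum_j\eps_j)$. You then transfer \Cref{thm:lmm-multipartite} by pushing $f$ forward and pulling $h$ back, as in your Steps 3--4. Those two steps are fine once such a correspondence exists.

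The problem is that a $k$-dimensional square complex is only a $(k,2)$-cubical complex. Only $2$-joint completion is assumed, $X(t)$ for $t\ge 3$ may be empty, and losslessness adds nothing of this kind. So there is no ``iterated completion implicit in the definition''. Concretely, take a $3$-joint $(w_1,w_2,w_3)$ at $v$. Completing pairs at $v$ gives $x_{12},x_{13},x_{23}$. But completing $(x_{12},x_{13})$ at $w_1$, $(x_{12},x_{23})$ at $w_2$, and $(x_{13},x_{23})$ at $w_3$ yields three top-type vertices that need not coincide. This has three consequences:
\begin{enumerate}
\item $\varphi(\sigma)$ and ``the cube of $\tau$'' are undefined.
\item The surjectivity argument in Step 2 (``path completion produces a cube'') fails for paths of length at least $3$.
\item The key claim of Step 3 is unsupported. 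You assert that the down-neighbours of a good $x=\varphi(\tau)$ inside $\Lk(v)$ are exactly the images of the codimension-one subfaces of $\tau$. Nothing in a square complex prevents $x$ from having a down-neighbour in the link that is unrelated to the sub-joints of $\tau$.
\end{enumerate}

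\textbf{What is missing.} Losslessness restores the cube structure locally, but only at non-colliding vertices, and the comparison map should go in the other direction. The paper proceeds as follows.
\begin{enumerate}
\item Fix an ordering $\pi$ of $J$. There are exactly $D_J$ monotone paths from $v$ with that ordering.
\item Switch the middle vertex of two consecutive steps to the other middle vertex of their square. This shows that the number of $\pi$-ordered paths ending at $x$ does not depend on $\pi$.
\item Call $x$ bad if this number exceeds one. Your counting from Step 2 then gives $\norm{B_J}_v\le\sum_j\eps_j$, and the bad set $B$ is upward closed.
\item For good $x$, path switching plus uniqueness of the ordered path shows three things: $[v,x]$ has exactly one vertex of each type $J'\subseteq J$; it is ordered as a Boolean cube; and each of its $4$-cycles is a square of $X(2)$ (one more use of square completion).
\item Send $x$ to the set of its level-one ancestors. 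This is a type-preserving poset isomorphism from the good set onto its image in $\M$. Injectivity follows by taking a minimal type where two candidates differ and invoking uniqueness of square completion. Since $\norm{\mathbf 1_{\M(J)}}=1$, the image misses norm at most $2\sum_j\eps_j$ in each type.
\end{enumerate}
With this map, your Steps 3--4 go through essentially verbatim, extending by zero on the bad sets.

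Under full $k$-joint completion your own map would be well defined and your argument would work as written. However, that is a stronger hypothesis than the theorem assumes.
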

To simplify notations, we first treat the link of a vertex $v\in V_\emptyset$ and reduce to this case in the end.

\subsection{Ordered paths and collisions}

Let $v\in V_\emptyset$ and let $I\subseteq[k]$, with $|I|=i$.  Consider the
set of monotone paths from $v$ to $\Lk_I(v)$,
\[
    \mathcal P
    \defeq
    \set{(x_0,\ldots,x_i):
        x_0=v,\ x_i\in\Lk_I(v),\
        x_{j-1}\lessdot x_j\text{ for every }j\in[i]}.
\]
\par\noindent
Every path in $\mathcal P$ uses each direction in $I$ exactly once, though
the order can vary.  Define
\begin{align*}
    \Ord &: \mathcal P\longrightarrow
        \set{\pi:[i]\twoheadrightarrow I},
    &\set{\Ord((x_0,\ldots,x_i))(j)}
        &\defeq\type(x_j)\setminus\type(x_{j-1}), \\
    \End &: \mathcal P\longrightarrow\Lk_I(v),
    &\End((x_0,\ldots,x_i))
        &\defeq x_i.
\end{align*}

That is, $\Ord$ records the order in which the directions appear along the path, and $\End$ stands for the other endpoint of the path.  For an ordering $\pi$ of $I$, let
\begin{align*}
    \mathcal P_\pi
    &\defeq
    \set{P\in\mathcal P:\Ord(P)=\pi},
    &
    \End_\pi
    &\defeq
    \End|_{\mathcal P_\pi}.
\end{align*}
At the $j$th step of a path in $\mathcal P_\pi$ there are
$D_{\pi(j)}$ choices, so
\begin{equation}
    \label{eq:number-ordered-paths}
    \abs{\mathcal P_\pi}=\prod_{j=1}^{i} D_{\pi(j)}=D_I.
\end{equation}

\begin{lemma}[Switching ordered paths]
\label{lem:ordered-path-switching}
For every $x\in\Lk_I(v)$, the fiber size
$\abs{\End_\pi^{-1}(x)}$ is independent of the ordering $\pi$.
\end{lemma}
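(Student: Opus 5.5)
It suffices to show that the fiber size is unchanged when $\pi$ is replaced by $\pi\circ\tau$, where $\tau$ is an adjacent transposition $(j\ j{+}1)$. Every ordering of $I$ is obtained from any other by a sequence of such swaps, so invariance under each swap gives the lemma. For the swap I would build an explicit bijection
\[
\Phi:\End_\pi^{-1}(x)\longrightarrow\End_{\pi\circ\tau}^{-1}(x).
\]

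Fix $P=(x_0,\ldots,x_i)\in\mathcal P_\pi$ with $x_i=x$. The vertices $x_{j-1}\lessdot x_j\lessdot x_{j+1}$ form a path of two edges in the distinct directions $\pi(j)$ and $\pi(j+1)$. By $2$-path completion, which holds because $X$ is a $(k,2)$-cubical complex and \cref{lem:completion-equivalence} applies, this path lies in a unique square. Let $x_j'=x_{j-1}\meet[x_j]x_{j+1}$ be the other middle vertex of that square. Then $x_{j-1}\lessdot x_j'\lessdot x_{j+1}$, where the first step is in direction $\pi(j+1)$ and the second in direction $\pi(j)$. Define $\Phi(P)$ to be $P$ with $x_j$ replaced by $x_j'$.

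I then check three things.
\begin{itemize}
\item $\Phi(P)$ is monotone and starts at $v$.
\item $\Phi(P)$ has order $\pi\circ\tau$ and the same endpoint $x$, since only the interior vertex $x_j$ changed (note $1\le j<j+1\le i$).
\item $\Phi(P)$ stays inside the link of $v$. This is automatic because $v\le x_{j-1}\lessdot x_j'$.
\end{itemize}

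The map $\Psi$ defined in the same way for $\pi\circ\tau$ inverts $\Phi$. This is exactly the involution used in the proof of \cref{chain:complex}: the square containing $x_{j-1}\lessdot x_j'\lessdot x_{j+1}$ is the same square, so completing again returns $x_j$. Hence $\Phi$ is a bijection between the two fibers, and the fiber sizes agree.

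The main point needing care is uniqueness in the involution. The square through the new path must be the same declared square, rather than some other declared square that happens to contain it. This follows because path completion gives exactly one member of $X(2)$ containing the new path, and the original square already contains it. Everything else is bookkeeping.
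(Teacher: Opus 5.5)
Your proposal is correct and follows the paper's proof: you replace the middle vertex $x_j$ of two consecutive steps by the other middle vertex of their unique square, note that repeating the switch returns the original path, so this is a bijection between the fibers for $\pi$ and $\pi\circ\sigma_{j,j+1}$, and conclude because adjacent transpositions generate the symmetric group. The only cosmetic difference is that you invoke path completion, whereas the paper treats $x_{j-1}\lessdot x_j\lessdot x_{j+1}$ as a two-edge joint at $x_j$; the two are equivalent by \cref{lem:completion-equivalence}.
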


\begin{figure}[H]
\centering
\begin{tikzpicture}[
    every node/.style={font=\small},
    rankone/.style={draw=blue!45!black,fill=blue!5,rounded corners=11pt,
        minimum width=2.6cm,minimum height=1.35cm},
    ranktwo/.style={draw=blue!45!black,fill=blue!7,rounded corners=13pt,
        minimum width=3.2cm,minimum height=1.65cm},
    rankthree/.style={draw=blue!45!black,fill=blue!10,rounded corners=16pt,
        minimum width=5.2cm,minimum height=1.9cm},
    cube edge/.style={gray!75,densely dotted,line width=.45pt},
    path vertex/.style={circle,fill=blue!70!black,inner sep=1.7pt},
    alternate vertex/.style={circle,fill=orange!85!black,inner sep=1.7pt},
    shared vertex/.style={circle,fill=blue!70!black,draw=orange!85!black,
        line width=.9pt,inner sep=1.9pt},
    other vertex/.style={circle,fill=gray!55,inner sep=1.3pt}
]
    \coordinate (v) at (0,0);

    % Filling the four boundary edges marks the square as a two-cell.
    \filldraw[
        fill=yellow!35,
        fill opacity=.42,
        draw=yellow!55!orange,
        draw opacity=.75,
        line width=.6pt
    ]
        (v)--(-6.25,1.7)--(-5.55,4.35)--(-.35,1.7)--cycle;

    \node[rankone] (X1) at (-6,2) {};
    \node[rankone] (X2) at (0,2) {};
    \node[rankone] (X3) at (6,2) {};
    \node[ranktwo] (X12) at (-6,4.8) {};
    \node[ranktwo] (X13) at (0,4.8) {};
    \node[ranktwo] (X23) at (6,4.8) {};
    \node[rankthree] (X123) at (0,7.8) {};

    % The dotted skeleton records inclusion of types.
    \foreach \x in {X1,X2,X3}
        \draw[cube edge] (v)--(\x.south);
    \foreach \x/\y in {X1/X12,X1/X13,X2/X12,X2/X23,X3/X13,X3/X23}
        \draw[cube edge] (\x.north)--(\y.south);
    \foreach \x in {X12,X13,X23}
        \draw[cube edge] (\x.north)--(X123.south);

    % The paths differ at level one and coincide from level two onward.
    \draw[blue!70!black,solid,line width=1.15pt]
        (v)--(-6.25,1.7)--(-5.55,4.35)--(-.8,7.25);
    \draw[orange!85!black,dashed,line width=1.15pt]
        (v)--(-.35,1.7)--(-5.55,4.35)--(-.8,7.25);

    \node[path vertex] at (-6.25,1.7) {};
    \node[alternate vertex] at (-.35,1.7) {};
    \node[shared vertex] at (-5.55,4.35) {};
    \node[shared vertex] at (-.8,7.25) {};
    \node[shared vertex] at (v) {};

    % Gray dots show the other vertices in each part.
    \foreach \p in {
        (-5.85,1.7),
        (-5.5,1.7),
        % (0,1.7),
        (.35,1.7),
        (5.7,1.7),
        (6,1.7),
        (6.3,1.7),
        (-6.65,4.35),
        (-6.15,4.35),
        (-6.4,4.68),
        (-5.9,4.7),
        (-.55,4.35),
        (0,4.32),
        (.55,4.35),
        (-.28,4.7),
        (.3,4.68),
        (5.45,4.35),
        (6,4.32),
        % (6.55,4.35),
        (5.72,4.7),
        (6.3,4.68),
        (-.3,7.22),
        (.25,7.25),
        (.8,7.22),
        (-1.1,7.58),
        (-.55,7.58),
        (0,7.58),
        (.55,7.58),
        (1.1,7.58)
        }
        \node[other vertex] at \p {};

    \node[below=2pt] at (v) {$x_0=v$};
    \node[blue!70!black,below left=1pt]
        at (-6.25,1.8) {$x_1$};
    \node[orange!85!black,below left=1pt]
        at (-.35,1.9) {$x_1'$};
    \node[right=5pt]
        at (-5.55,4.35) {$x_2$};
    \node[below right=1pt]
        at (-.8,7.25) {$x_3$};
    \node at (-5.6,2.25) {$\Lk_{\{1\}}(v)$};
    \node at (0,2.25) {$\Lk_{\{2\}}(v)$};
    \node at (5.6,2.25) {$\Lk_{\{3\}}(v)$};
    \node at (-6.35,5.1) {$\Lk_{\{1,2\}}(v)$};
    \node at (0,5.1) {$\Lk_{\{1,3\}}(v)$};
    \node at (6.35,5.1) {$\Lk_{\{2,3\}}(v)$};
    \node at (0,8.15) {$\Lk_{\{1,2,3\}}(v)$};
    \node[blue!70!black,above left=1pt] at (-3.1,.84) {$P$};
    \node[orange!85!black,right=4pt] at (-.18,.85) {$P'$};
    \node[
        fill=yellow!12,
        fill opacity=.85,
        text opacity=1,
        rounded corners=2pt,
        inner sep=1.5pt,
        font=\small,
        text=gray!65!black
    ] at (-3,2) {$\{x_2;\,x_1,x_1';\,x_0\}$};

\end{tikzpicture}
\caption{The link above $v$ in a three-dimensional cubical complex. The solid path
$P$ has order $(1,2,3)$, while the dashed path $P'$ has order $(2,1,3)$.
The pale shaded quadrilateral marks the square $2$-cell witnessing
the adjacent switch.}
\label{fig:ordered-path-switch}
\end{figure}

\begin{proof}
Given $P=(x_0,\ldots,x_i)\in\End_\pi^{-1}(x)$, fix $1\le j<i$.  The segment
$x_{j-1}\lessdot x_j\lessdot x_{j+1}$ is a two-edge joint at $x_j$.
The square-completion axiom supplies a unique square with other
intermediate vertex
$x_j'=x_{j-1}\meet[x_j]x_{j+1}$.  Since
$v\le x_{j-1}\lessdot x_j'$, this vertex also belongs to $\Lk(v)$.  Replacing
$x_j$ by $x_j'$ gives the path
\[
    (x_0,\ldots,x_{j-1},x_j',x_{j+1},\ldots,x_i),
\]
which has the same endpoint and swaps the $j$th and $(j+1)$st directions, as
illustrated in \cref{fig:ordered-path-switch}.  Repeating the switch recovers
the original path.  It therefore gives a bijection between
$\End_\pi^{-1}(x)$ and
$\End_{\pi\circ\sigma_{j,j+1}}^{-1}(x)$, where
$\sigma_{j,j+1}$ is the adjacent transposition.  Since adjacent
transpositions generate the symmetric group, all fibers have the same
size.
\end{proof}

\begin{corollary}
\label{cor:ordered-endpoint-surjective}
For every ordering $\pi$ of $I$, the endpoint map $\End_\pi$ is surjective.
\end{corollary}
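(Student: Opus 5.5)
The plan is to reduce surjectivity to nonemptiness of a single fiber, using \cref{lem:ordered-path-switching}. Fix $x\in\Lk_I(v)$. By \cref{def:augmented-link}, $v\le x$, so there is an upward path from $v$ to $x$. Since each upward edge adds exactly one direction to the type, and the type goes from $\varnothing$ to $I$, this path has exactly $|I|=i$ edges. It also uses each direction of $I$ exactly once. Hence it is an element $P\in\mathcal P$ with $\End(P)=x$. Setting $\pi_0=\Ord(P)$, we get $\abs{\End_{\pi_0}^{-1}(x)}\ge 1$.

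Next I will invoke \cref{lem:ordered-path-switching}. The fiber size $\abs{\End_\pi^{-1}(x)}$ does not depend on $\pi$, so for every ordering $\pi$ of $I$ we have $\abs{\End_\pi^{-1}(x)}=\abs{\End_{\pi_0}^{-1}(x)}\ge1$. Thus $x$ lies in the image of $\End_\pi$. Since $x$ was arbitrary, $\End_\pi$ is surjective.

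The only point needing care is the first step: that $v\le x$ yields a path of length exactly $i$ ending at $x$ with each intermediate vertex in $\Lk(v)$. This is immediate, since every vertex on an upward path from $v$ is itself $\ge v$, and the edge count is determined by $|\type(x)\setminus\type(v)|$. The case $i=0$ is trivial, since then $\Lk_\varnothing(v)=\set{v}$ and the empty path maps onto it. I do not expect any real obstacle.
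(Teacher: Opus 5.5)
Your proposal is correct and is the paper's argument: take some monotone path $P$ from $v$ to $x$, note that the fiber $\End_{\Ord(P)}^{-1}(x)$ is nonempty, and use \cref{lem:ordered-path-switching} to conclude that every fiber over $x$ is nonempty. The details you add, that such a path has exactly $|I|$ edges, uses each direction of $I$ once, and that the case $I=\varnothing$ is trivial, are correct and only make explicit what the paper leaves implicit.
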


\begin{proof}
Let $x\in\Lk_I(v)$.  Some monotone path $P$ ends at $x$; if
$\rho=\Ord(P)$, then $\abs{\End_\rho^{-1}(x)}\ge1$.  By
\cref{lem:ordered-path-switching}, the same holds for every ordering $\pi$.
\end{proof}

\begin{corollary}
\label{cor:link-ub}
For every $I\subseteq[k]$ and every ordering $\pi$ of $I$,
\[
    \norm{\Lk_I(v)}\le\frac{1}{D_I}\abs{\mathcal P_\pi}=1.
\]
\end{corollary}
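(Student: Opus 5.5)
By \cref{cor:ordered-endpoint-surjective}, the map $\End_\pi\colon\mathcal P_\pi\to\Lk_I(v)$ is surjective for any fixed ordering $\pi$ of $I$. The plan is to combine this with the path count \eqref{eq:number-ordered-paths} and read off the norm directly.

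First, every vertex $x\in\Lk_I(v)$ has type $I$, since $v\in V_\emptyset$. Hence each such $x$ has weight $D_I^{-1}$, and therefore
\[
\norm{\Lk_I(v)}=\frac{\abs{\Lk_I(v)}}{D_I}.
\]
Second, surjectivity of $\End_\pi$ gives $\abs{\Lk_I(v)}\le\abs{\mathcal P_\pi}$. Third, \eqref{eq:number-ordered-paths} gives $\abs{\mathcal P_\pi}=D_I$. Chaining the three facts yields
\[
\norm{\Lk_I(v)}\le\frac{\abs{\mathcal P_\pi}}{D_I}=1.
\]

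There is no real obstacle here. The only point to check is the count \eqref{eq:number-ordered-paths}: at step $j$, a vertex of type $\pi(\{1,\dots,j-1\})$ has exactly $D_{\pi(j)}$ upward neighbors in direction $\pi(j)$, by biregularity of the directional graphs. Each of these neighbors stays inside $\Lk(v)$, because the link is upward closed. So the count is exactly $D_I$.
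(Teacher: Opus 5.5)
Your proof is correct and follows the paper's intended argument: surjectivity of $\End_\pi$ (\cref{cor:ordered-endpoint-surjective}) gives $\abs{\Lk_I(v)}\le\abs{\mathcal P_\pi}$, and the path count \eqref{eq:number-ordered-paths} gives $\abs{\mathcal P_\pi}=D_I$. You also correctly use that every vertex of $\Lk_I(v)$ has weight $D_I^{-1}$ (since $v\in V_\emptyset$), which is what yields the bound.
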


Lossless expansion gives the reverse inequality up to the loss.  Fix an
ordering $\pi\prn{j}=s_j$ and put
$I_t\defeq\set{s_1,\ldots,s_t}$, with $I_0=\emptyset$.  By
\cref{cor:ordered-endpoint-surjective}, the endpoints of all length-$t$
paths with ordering $\pi|_{[t]}$ are precisely $\Lk_{I_t}(v)$, and hence
\[
    \Lk_{I_t}(v)
    =
    \Gamma_{s_t}\prn{\Lk_{I_{t-1}}(v)}.
\]
Successive lossless expansion gives
\begin{equation}
    \label{eq:link-part-lower-bound}
    \norm{\Lk_I(v)}
    \ge
    \prod_{j\in I}(1-\eps_j)
    \ge
    1-\sum_{j\in I}\eps_j.
\end{equation}

For each type $I$, fix a reference ordering and define, for
$x\in\Lk_I(v)$,
\[
    \mu_I(x)
    \defeq
    \abs{\End_\pi^{-1}(x)}-1,
    \qquad
    B_I
    \defeq
    \set{x\in\Lk_I(v):\mu_I(x)>0}.
\]
The function $\mu_I$ measures the endpoint excess and $B_I$ is the set of
bad, or colliding, endpoints.  By \eqref{eq:number-ordered-paths} and
\cref{cor:link-ub} \begin{equation}
    \label{eq:endpoint-excess}
    \begin{aligned}
    \norm{\mu_I}
    =
    \frac{1}{D_I}
    \sum_{x\in\Lk_I(v)}
        \prn{\abs{\End_\pi^{-1}(x)}-1} 
    =
    \frac{\abs{\mathcal P_\pi}-\abs{\Lk_I(v)}}{D_I}
    =
    1-\norm{\Lk_I(v)}
    \end{aligned}
\end{equation}
Since the indicator of $B_I$ is pointwise at most $\mu_I$, and using \eqref{eq:link-part-lower-bound}, we get
\begin{equation}
    \label{eq:actual-bad-type-bound}
    \norm{B_I}\le\norm{\mu_I}\le\sum_{j\in I}\eps_j.
\end{equation}

Put $B\defeq\bigcup_I B_I$.  The set $B$ is
upward closed.  Indeed, two distinct paths with the same ordering ending at
$x$ remain distinct after appending any fixed monotone path from $x$ to an
ancestor $y$.  By \cref{lem:ordered-path-switching}, a duplicated fiber for
one ordering is duplicated for every ordering, including the reference
ordering used to define the bad set containing $y$.  Thus $x\in B$ and
$x\le y$ imply $y\in B$.  Equivalently, defining the non-colliding set by $G\defeq\Lk(v)\setminus B$, we obtain that $G$ is
downward closed within $\Lk(v)$.
\subsection{Embedding the non-colliding part}

Let $\M$ be the complete $k$-partite simplicial complex whose $j$th part is
$\Lk_{\set{j}}(v)=\Gamma_j(v)$.

\begin{claim}
\label{clm:good-interval}
For every $x_I\in G$, the interval $[v,x_I]$\footnote{As usual
for a poset, $[v,x_I]=\set{y:v\le y\le x_I}$.}
satisfies the following:
\begin{enumerate}
\item[(1)] For every $J\subseteq I$, there is a unique vertex
$x_J\in[v,x_I]\cap\Lk_J(v)$.
\item[(2)] For every $J,K\subseteq I$, we have $x_J\le x_K$
if and only if $J\subseteq K$. In particular, the interval is
an $|I|$-dimensional Boolean cube.
\item[(3)] Every four-cycle in this interval is a square in $X(2)$.
\end{enumerate}
\end{claim}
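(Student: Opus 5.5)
The plan is to work entirely with monotone paths from $v$ to $x_I$, and to use the fact that $x_I\in G$ means that for every ordering $\pi$ of $I$ the fiber $\End_\pi^{-1}(x_I)$ has size exactly one. This follows from \cref{lem:ordered-path-switching} together with $\mu_I(x_I)=0$ and \cref{cor:ordered-endpoint-surjective}. So for each ordering $\pi$ there is a unique path $P_\pi$ from $v$ to $x_I$ with order $\pi$. Moreover, $G$ is downward closed, so every $y\in[v,x_I]$ lies in $G$ and enjoys the same uniqueness.

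For (1), existence is immediate. Take $J\subseteq I$ and an ordering $\pi$ that lists $J$ first. The vertex of $P_\pi$ at step $|J|$ lies in $\Lk_J(v)$ and below $x_I$.

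For uniqueness, suppose $y,y'\in\Lk_J(v)\cap[v,x_I]$. Concatenate a monotone path $v\to y$ (order listing $J$, by surjectivity) with a monotone path $y\to x_I$. The second segment uses the directions $I\setminus J$, and I reorder it by square switches inside $\Lk(y)$ so that it has a fixed order $\rho$ on $I\setminus J$. Doing the same for $y'$, with the same order on $J$ and the same $\rho$, gives two paths from $v$ to $x_I$ with the same ordering. By uniqueness of the fiber these coincide, so $y=y'$. The switching step inside $\Lk(y)$ is exactly the proof of \cref{lem:ordered-path-switching} applied with $y$ as the base vertex.

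For (2), the direction $x_J\le x_K\Rightarrow J\subseteq K$ is trivial, since types only grow along upward paths. For the converse, let $J\subseteq K$ and choose $\pi$ listing $J$, then $K\setminus J$, then the rest. On $P_\pi$ the vertices at steps $|J|$ and $|K|$ lie in $\Lk_J(v)$ and $\Lk_K(v)$ and below $x_I$, so by (1) they are $x_J$ and $x_K$, and $x_J\le x_K$. Hence the map $J\mapsto x_J$ is an order isomorphism from the Boolean lattice onto the interval, and edges correspond to covering relations.

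For (3), a four-cycle in the interval has the form $x_J,x_{J\cup a},x_{J\cup b},x_{J\cup\{a,b\}}$. Choose $\pi$ listing $J$, then $a$, then $b$, then the rest. The square completion of the joint $x_J\lessdot x_{J\cup a}\lessdot x_{J\cup\{a,b\}}$ has another middle vertex $y$ of type $J\cup b$. This $y$ lies in the interval, because $y\le x_{J\cup\{a,b\}}\le x_I$. By (1), $y=x_{J\cup b}$, so the four-cycle is that declared square.

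The main obstacle is the uniqueness part of (1), specifically making sure that path uniqueness at $x_I$ genuinely transfers to uniqueness of intermediate vertices. Concatenating and then switching orders must not alter the prefix up to $y$. This holds because the switches are performed only on the suffix.
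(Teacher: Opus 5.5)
Your proposal is correct and follows the paper's proof essentially step for step: uniqueness in (1) by switching separately before and after $y$ to land in the singleton fiber $\End_\pi^{-1}(x_I)$, (2) via an ordering listing $J$, then $K\setminus J$, then the rest, and (3) via square completion combined with the uniqueness from (1). The only small omission is in (3): membership of the new middle vertex $y$ in $[v,x_I]$ also requires $v\le x_J\lessdot y$, which is immediate.
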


\begin{proof}
\noindent\emph{(1)}
Fix $J\subseteq I$ and an ordering $\pi$ of $I$ that lists $J$ first i.e. $\pi|_{[|J|]}$ is an ordering of $J$.
By \cref{cor:ordered-endpoint-surjective}, there is a path $P$
of order $\pi$ from $v$ to $x_I$.
Let $x_J$ be its vertex after $|J|$ steps.
Any $y\in[v,x_I]\cap\Lk_J(v)$ lies on a monotone path from $v$
to $x_I$. The first $\abs{J}$ directions in this path are $J$. Applying the square-switch argument of
\cref{lem:ordered-path-switching} separately before and after $y$
gives a path of order $\pi$ through $y$.
Since $x_I\in G$, this path must be $P$, so $y=x_J$.

\smallskip
\noindent\emph{(2)}
Suppose $J\subseteq K\subseteq I$. Choose an ordering of $I$
that lists first $J$, then $K\setminus J$, and finally $I\setminus K$.
By \cref{cor:ordered-endpoint-surjective}, there is a path of this order
from $v$ to $x_I$. The vertices reached after $|J|$ and $|K|$ steps
have types $J$ and $K$, respectively. Both lie in $[v,x_I]$, so
by (1) they are $x_J$ and $x_K$. The part of the path from $x_J$
to $x_K$ gives $x_J\le x_K$.
Conversely, if $x_J\le x_K$, an upward path from $x_J$ to $x_K$
can only add directions to the type, so $J\subseteq K$.
By (1), the vertices of the interval are precisely
$\set{x_J:J\subseteq I}$. The equivalence just proved identifies
their order with inclusion of subsets of $I$, giving an
$|I|$-dimensional Boolean cube.

\smallskip
\noindent\emph{(3)}
By (2), every four-cycle in $[v,x_I]$ is a two-dimensional face
of the Boolean cube. Thus, for some $J\subseteq I$ and distinct
$j_1,j_2\in I\setminus J$, its vertices can be labeled
\[
\begin{aligned}
a&=x_{J\cup\set{j_1,j_2}},
& b&=x_{J\cup\set{j_1}},\\
b'&=x_{J\cup\set{j_2}},
& c&=x_J.
\end{aligned}
\]
Complete the path $c\lessdot b\lessdot a$ to a square in $X(2)$,
and let $b''$ be its fourth vertex.
The path through $b$ uses directions $j_1,j_2$, so the other path
$c\lessdot b''\lessdot a$ uses directions $j_2,j_1$.
Hence $b''$ has type $J\cup\set{j_2}$, the same as $b'$.
Also, $v\le c\lessdot b''\lessdot a\le x_I$, so
$b''\in[v,x_I]$. By the uniqueness in (1), $b''=b'$, and hence
$\set{a,b,b',c}\in X(2)$.
\end{proof}

For $x_I\in G$, define
\[
    \varphi(x_I)
    \defeq
    \set{x_{\set{j}}\in\Lk_{\set{j}}(v):
          x_{\set{j}}\le x_I,\ j\in I}.
\]
By \cref{clm:good-interval}, this selects exactly one vertex from each part,
indexed by the elements of $I$, so $\varphi(x_I)$ is a face of $\M$.

\begin{figure}[H]
\centering
\begin{tikzpicture}[scale=.9,every node/.style={font=\small},>=Latex]
    \coordinate (e) at (0,0);
    \coordinate (a) at (1.6,0);
    \coordinate (b) at (.6,.75);
    \coordinate (c) at (0,1.6);
    \coordinate (ab) at (2.2,.75);
    \coordinate (ac) at (1.6,1.6);
    \coordinate (bc) at (.6,2.35);
    \coordinate (abc) at (2.2,2.35);
    \foreach \u/\w in {
        e/a,e/b,e/c,a/ab,a/ac,b/ab,b/bc,c/ac,c/bc,
        ab/abc,ac/abc,bc/abc}
        \draw[gray!75] (\u)--(\w);
    \foreach \p in {e,a,b,c,ab,ac,bc,abc}
        \fill[blue!65!black] (\p) circle (2pt);
    \node[below left] at (e) {$v$};
    \node[below] at (a) {$x_1$};
    \node[left] at (b) {$x_2$};
    \node[left] at (c) {$x_3$};
    \node[right] at (ab) {$x_{12}$};
    \node[right] at (ac) {$x_{13}$};
    \node[left] at (bc) {$x_{23}$};
    \node[above] at (abc) {$x_{123}$};
    \draw[->,thick] (3,1.2)--node[above] {$\varphi$}(5,1.2);
    \begin{scope}[xshift=6cm]
        \coordinate (p1) at (0,0);
        \coordinate (p2) at (2.5,0);
        \coordinate (p3) at (1.25,2.15);
        \fill[blue!8] (p1)--(p2)--(p3)--cycle;
        \draw[blue!65!black,thick] (p1)--(p2)--(p3)--cycle;
        \foreach \p in {p1,p2,p3}
            \fill[blue!65!black] (\p) circle (2pt);
        \node[below left] at (p1) {$x_1$};
        \node[below right] at (p2) {$x_2$};
        \node[above] at (p3) {$x_3$};
        \node[below] at (1.25,0) {$x_{12}$};
        \node[left] at (.58,1.08) {$x_{13}$};
        \node[right] at (1.92,1.08) {$x_{23}$};
        \node at (1.25,.85) {$x_{123}$};
    \end{scope}
\end{tikzpicture}
\caption{For a non-colliding vertex $x_{123}$, the interval below it is a
Boolean cube.  The map $\varphi$ identifies it with the face poset of the
simplex on its three level-one vertices, with $v$ corresponding to the empty
face.}
\label{fig:good-cube-embedding}
\end{figure}

\begin{lemma}
\label{lem:poset-homomorphism}
The map $\varphi:G\to\M$ is a type-preserving poset isomorphism
between $G$ and its image.
\end{lemma}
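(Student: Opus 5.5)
The plan is to verify, in order, that $\varphi$ preserves types, preserves order, is injective, and reflects order. Type preservation is immediate. For $x_I\in G$, \cref{clm:good-interval}(1) gives exactly one vertex $x_{\set{j}}\le x_I$ in each $\Lk_{\set{j}}(v)$ with $j\in I$. Moreover, no vertex of type $\set{j}$ with $j\notin I$ lies below $x_I$, since types only grow along upward paths. Hence $\varphi(x_I)$ is a face of $\M$ of type $I$, i.e.\ it meets exactly the parts indexed by $I$. Note that $\varphi(v)=\emptyset$.

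\textbf{Order preservation.} Let $y\le x$ with $x\in G$. Because $G$ is downward closed, $y\in G$. Every $y_{\set{j}}\le y$ satisfies $y_{\set{j}}\le x$. By uniqueness in \cref{clm:good-interval}(1) applied to $x$, the vertex $y_{\set{j}}$ is then the unique type-$\set{j}$ vertex of $[v,x]$. Therefore $\varphi(y)\subseteq\varphi(x)$.

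\textbf{Injectivity.} Suppose $\varphi(x)=\varphi(x')$ for $x,x'\in G$. Type preservation forces both to have the same type $I$, and I will show by induction on $|I|$ that $x=x'$. The case $|I|\le1$ is trivial, since $\varphi(x)=\{x\}$ when $|I|=1$.

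For $|I|\ge2$, pick $j\in I$ and let $y,y'$ be the unique vertices of type $I\setminus\set{j}$ in $[v,x]$ and $[v,x']$, respectively. By order preservation, $\varphi(y)\subseteq\varphi(x)$ and it has type $I\setminus\set{j}$. Since faces of $\M$ contain one vertex per part, $\varphi(y)=\varphi(x)\setminus\Lk_{\set{j}}(v)=\varphi(y')$, and induction gives $y=y'$.

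By the same argument with a second index $j'\ne j$, the type-$I\setminus\set{j'}$ vertices $z\le x$ and $z'\le x'$ coincide. Both $x$ and $x'$ then lie above the same $y$ and $z$, which in turn lie above the same vertex $w$ of type $I\setminus\set{j,j'}$. By \cref{clm:good-interval}(3), $\{x;y,z;w\}$ and $\{x';y,z;w\}$ are squares in $X(2)$. Both contain the joint $w\lessdot y,\ w\lessdot z$, so the $2$-joint completion property forces $x=x'$. This step, collapsing two candidate tops of the same sub-configuration via uniqueness of square completion, is the main point. I expect it to be the only delicate step, since it is where the weak completion property (rather than only non-collision) is used.

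\textbf{Order reflection.} Suppose $x,x'\in G$ satisfy $\varphi(y)\subseteq\varphi(x)$, where $y$ has type $J$ and $x$ has type $I\supseteq J$. Let $x_J$ be the unique type-$J$ vertex of $[v,x]$. Then $x_J\in G$, and by order preservation $\varphi(x_J)$ is the type-$J$ subface of $\varphi(x)$, which equals $\varphi(y)$. Injectivity gives $y=x_J\le x$.

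Thus $\varphi$ is a type-preserving bijection onto its image that preserves and reflects order, i.e.\ a poset isomorphism.
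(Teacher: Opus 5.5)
Your proof is correct and takes essentially the same route as the paper. Type preservation is the same, the key injectivity step is the same (two members of $X(2)$ completing the same joint at the type-$(I\setminus\{j,j'\})$ vertex), and order reflection is the same (via the unique type-$J$ vertex of $[v,x]$ together with injectivity).

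The differences are cosmetic. You run injectivity as an induction on $|I|$, using order preservation to get $\varphi(y)=\varphi(y')$, where the paper takes a minimal $J\subseteq I$ with $x_J\neq x_J'$. Your terse claim that $y$ and $z$ lie above a common $w$ is justified by uniqueness in \cref{clm:good-interval}(1) applied to $y\in G$. Likewise, $J\subseteq I$ in the reflection step should be derived from type preservation rather than assumed.
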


\begin{proof}
We need to show that $\varphi$ preserves types, is injective on $G$,
and preserves the order in both directions.

\smallskip
\noindent\emph{(1) Type preservation.}
If $x_I\in G$ has type $I$, then $\varphi(x_I)$ contains exactly
one vertex from each part indexed by an element of $I$ and no other vertices.
Thus $\varphi(x_I)$ also has type $I$.

\smallskip
\noindent\emph{(2) Injectivity on $G$.}
Suppose that two distinct vertices have the same image.
By (1), they have the same type $I$; write them as $x_I,x_I'\in G$.
For every $J\subseteq I$, let $x_J$ and $x_J'$ be the unique
type-$J$ vertices in $[v,x_I]$ and $[v,x_I']$, respectively,
given by \cref{clm:good-interval}.
Choose $J$ of minimum cardinality for which $x_J\ne x_J'$.
The two intervals agree at $v$ and, since the images are equal,
at every singleton type, so $|J|\ge2$.
Choose distinct $j,j'\in J$ and put
\[
\begin{aligned}
b&\defeq x_{J\setminus\set{j}}
           =x_{J\setminus\set{j}}',\\
b'&\defeq x_{J\setminus\set{j'}}
           =x_{J\setminus\set{j'}}',\\
c&\defeq x_{J\setminus\set{j,j'}}
           =x_{J\setminus\set{j,j'}}',
\end{aligned}
\]
where the equalities follow from the choice of $J$.
By \cref{clm:good-interval}, both $\set{x_J;b,b';c}$ and
$\set{x_J';b,b';c}$ are squares in $X(2)$.
They complete the same joint formed by the edges $cb$ and $cb'$,
so uniqueness of square completion gives $x_J=x_J'$, a contradiction.
Thus $\varphi$ is injective on $G$.

\smallskip
\noindent\emph{(3) Order preservation in both directions.}
If $x_J\le x_I$, every level-one vertex below $x_J$ also lies
below $x_I$, so $\varphi(x_J)\subseteq\varphi(x_I)$.
Conversely, suppose $\varphi(x_J)\subseteq\varphi(x_I)$.
By (1), $J\subseteq I$.
Let $x_J'\in[v,x_I]$ be the unique type-$J$ vertex given by
\cref{clm:good-interval}. Since $G$ is downward closed, $x_J'\in G$.
By the forward implication and (1), both $\varphi(x_J)$ and
$\varphi(x_J')$ are contained in $\varphi(x_I)$ and have type $J$.
Since $\varphi(x_I)$ contains at most one vertex in each part,
these two faces are equal. By (2), $x_J=x_J'\le x_I$.
Therefore $\varphi$ is a type-preserving poset isomorphism onto
its image, as depicted in \cref{fig:good-cube-embedding}.
\end{proof}

\subsection{Transfer of multipartite expansion}

\begin{theorem}
\label{thm:local-expansion}
Let $X$ be a $k$-dimensional square complex and
let $v\in V_\emptyset$. Fix $0\le r\le k-2$. If
$\norm{\Lk_{\brk{k}}(v)}\ge 1-\eta$, then every cochain
$f_r:\Lk_r(v)\to\Ftwo$ satisfies
\begin{equation}
    \label{eq:local-expansion-explicit}
    \norm{\delta f_r}
    \ge
    m_{k,r}
    \dist(f_r)
    -O_k(\eta).
\end{equation}
\end{theorem}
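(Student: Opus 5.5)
The plan is to transport $f_r$ into the complete $k$-partite complex $\M$ via the embedding $\varphi:G\to\M$ of \cref{lem:poset-homomorphism}, apply \cref{thm:lmm-multipartite} there, and pull the conclusion back. Everything lost in the transfer should be supported on the colliding set $B$ or on the part of $\M$ missed by $\varphi(G)$, and both have norm $O_k(\eta)$. We also need the hypothesis $\norm{\Lk_{[k]}(v)}\ge 1-\eta$ to control every level. Since $B$ is upward closed, $B_I\neq\emptyset$ forces bad vertices of type $[k]$. Quantitatively, I would argue as in \eqref{eq:endpoint-excess}: extend every path ending at $x\in\Lk_I(v)$ by a fixed ordering of $[k]\setminus I$. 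This gives $D_{[k]\setminus I}$ extensions, each with a distinct endpoint multiplicity, so $1-\norm{\Lk_I(v)}\le 1-\norm{\Lk_{[k]}(v)}\le\eta$. Hence $\norm{B_I}\le\eta$ for every $I$.

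\textbf{Bounding the missed part of $\M$.} The faces of type $I$ in $\M$ have total norm exactly $1$, since there are $D_I$ of them, each of weight $D_I^{-1}$. By injectivity and type preservation, $\norm{\varphi(G_I)}_{\M}=\norm{G_I}\ge 1-2\eta$. So the complement $N_I$ of $\varphi(G)$ in $\M(I)$ has norm $O(\eta)$.

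\textbf{The pullback.} Define $g=\varphi_*(f_r|_G)$ on $\M(r)$, extended by zero. Since $\varphi$ is a poset isomorphism onto its image and $G$ is downward closed, for a face $\sigma=\varphi(y)$ with $y\in G$ the lower neighbours of $y$ correspond exactly to the codimension-one faces of $\sigma$. Therefore $\delta_{\M} g$ agrees with $\varphi_*(\delta f_r)$ on $\varphi(G_{r+1})$ and can differ only on $N_{r+1}$. This gives
\[
\norm{\delta f_r}\ge\norm{\delta_\M g}-O(\eta)\ge m_{k,r}\dist_\M(g)-O(\eta).
\]

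\textbf{Transferring the distance (main obstacle).} It remains to show $\dist(f_r)\le\dist_\M(g)+O_k(\eta)$. Take $h\in C^{r-1}(\M)$ with $\norm{g+\delta_\M h}=\dist_\M(g)$, and pull back $h'=\varphi^*h$ on $G_{r-1}$, extended by zero on $B$. On $G_r$, the value of $\delta h'$ equals $\delta_\M h$ at the image face, again because $G$ is downward closed and $\varphi$ is an order isomorphism. Hence $f_r+\delta h'$ agrees with $\varphi^*(g+\delta_\M h)$ on $G_r$. Its norm is therefore at most $\dist_\M(g)+\norm{B_r}\le\dist_\M(g)+O_k(\eta)$, where $\norm{B_r}$ sums over the $\binom{k}{r+1}$ types. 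The delicate part is checking that boundary effects near $B$ cannot leak into $G$. The key fact is that every lower neighbour of a vertex in $G$ lies in $G$, so vertices of $B_{r-1}$ never influence $\delta h'$ on $G_r$. For $r=0$ the cochain $h$ is a scalar on $v$, and the same argument applies trivially.

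Combining the three steps yields \eqref{eq:local-expansion-explicit}, with constants depending only on $k$ through the number of types and $m_{k,r}\le1$.
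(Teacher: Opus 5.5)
Your proposal is correct and follows essentially the same route as the paper. You propagate the top-level fullness to every type, bound $\norm{B_I}\le\eta$ and the faces of $\M$ missed by $\varphi$ by $2\eta$, and push $f_r$ forward to $\M$. You then compare coboundaries on $\varphi(G_{r+1})$ and pull back a correction from $\M$ to get $\dist(f_r)\le\dist_{\M}(g)+\norm{B_r}$. The only differences are cosmetic. You do the level propagation in one shot by extending paths, where the paper iterates $\abs{\Lk_I(v)}\le D_j\abs{\Lk_{I\setminus\{j\}}(v)}$. Also, the fact that every codimension-one face of $\varphi(y)$ is hit by a child of $y$ needs \cref{clm:good-interval}(1), not just the order isomorphism and downward closedness you cite.
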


\begin{proof}
We start by propagating the norm assumption on the top part of the link to
all its parts.  Fix $I\subseteq[k]$, $j\in I$, and put
$J\defeq I\setminus\set{j}$.  Choose an ordering of $I$ whose first
$|J|$ directions are the elements of $J$.  By
\cref{cor:ordered-endpoint-surjective},
\[
    \Lk_I(v)=\Gamma_j\prn{\Lk_J(v)}.
\]
Since every vertex has $D_j$ upward neighbours in direction $j$,
\[
    \abs{\Lk_I(v)}
    \le D_j\abs{\Lk_J(v)}.
\]
Dividing by $D_I=D_J\,D_j$, we get that
$\norm{\Lk_I(v)}\le\norm{\Lk_J(v)}$.  Iterating from $I=[k]$ down to an
arbitrary type gives
\begin{equation}
    \label{eq:top-fullness-propagates}
    \norm{\Lk_I(v)}
    \ge
    \norm{\Lk_{[k]}(v)}
    \ge 1-\eta.
\end{equation}

For $I\subseteq[k]$, let $B_I$ and $G_I$ be the bad and good vertices of
type $I$, respectively, and set
\[
    \widehat G_I\defeq\varphi(G_I)\subseteq\M(I),
    \qquad
    \widehat B_I\defeq\M(I)\setminus\widehat G_I.
\]
The exact endpoint count in \eqref{eq:endpoint-excess} gives
\[
    \norm{B_I}\le\norm{\mu_I}
    =
    1-\norm{\Lk_I(v)}
    \le \eta.
\]
The map $\varphi$ is type preserving, so it preserves the norm on $G_I$.
Thus $\norm{G_I}=\norm{\varphi(G_I)}$, and since
$\norm{\mathbf 1_{\M(I)}}=1$, we also have 
\begin{equation*}
    \norm{\widehat B_I}
    =
    1-\norm{\widehat{G}_I} 
    =
    \prn{1-\norm{\mathbf 1_{\Lk_I(v)}}}+\norm{B_I} 
    =
    1-\norm{\Lk_I(v)}+\norm{B_I}
    \le 2\eta.
\end{equation*}
For $-1\le s\le k-1$, let $B_s,G_s,\widehat B_s,\widehat G_s$ be the
corresponding unions over all types of size $s+1$. Thus
$B_s,G_s\subseteq\Lk_s(v)$ and $\widehat B_s,\widehat G_s\subseteq\M(s)$,
and we obtain
\begin{equation}
    \label{eq:bad-level-bounds}
    \begin{aligned}
    \norm{B_s}
    \le
    \binom{k}{s+1}\eta,
    \qquad
    \norm{\widehat B_s}
    \le
    2\binom{k}{s+1}\eta.
    \end{aligned}
\end{equation}

Let $\widehat h_{r-1}:\M(r-1)\to\Ftwo$. Define the pushforward
$\widehat f_r:\M(r)\to\Ftwo$ of $f_r$ and the pullback
$h_{r-1}:\Lk_{r-1}(v)\to\Ftwo$ of $\widehat h_{r-1}$ by
\[
    \widehat f_r(y)
    \defeq
    \begin{cases}
        f_r\prn{\varphi^{-1}(y)}, & y\in\widehat G_r, \\
        0, & y\in\widehat B_r.
    \end{cases}, 
    \qquad
    h_{r-1}(x)
    \defeq
    \begin{cases}
        \widehat h_{r-1}\prn{\varphi(x)}, & x\in G_{r-1}, \\
        0, & x\in B_{r-1}.
    \end{cases}
\]
We now compare $\dist_{\M}\widehat f_r$ with $\dist_{\Lk(v)}(f_r)$. If $x\in G_r$, then all children of $x$
in $\Lk(v)$ are good, and
\cref{clm:good-interval,lem:poset-homomorphism} show that $\varphi$ bijects
them with the children of $\varphi(x)$.  Thus, for $x\in G$ and for $\widehat{\delta}$ the coboundary map on $\M$,
\[
    \widehat{\delta}\widehat h_{r-1}\prn{\varphi(x)}
    =
    \delta h_{r-1}(x).
\]
Consequently,
\begin{align*}
    \norm{\widehat f_r+\widehat{\delta}\widehat h_{r-1}}
    &\ge
    \norm{\left.
        \prn{\widehat f_r+\widehat{\delta}\widehat h_{r-1}}
        \right|_{\widehat G_r}} \\
    &=
    \norm{\left.
        \prn{f_r+\delta h_{r-1}}
        \right|_{G_r}} \\
    &\ge
    \norm{f_r+\delta h_{r-1}}-\norm{B_r} \\
    &\ge
    \dist(f_r)-\norm{B_r}.
\end{align*}
Taking the minimum over $\widehat h_{r-1}$ gives
\begin{equation}
    \label{eq:formal-approximate-minimality}
    \dist_{\M}(\widehat f_r)
    \ge
    \dist(f_r)-\norm{B_r}.
\end{equation}
Therefore \cref{thm:lmm-multipartite} implies
\begin{equation}
    \label{eq:formal-coboundary-lower-bound}
    \norm{\widehat{\delta}\widehat f_r}
    \ge
    m_{k,r}\prn{\dist(f_r)-\norm{B_r}}.
\end{equation}

Likewise, if $x\in G_{r+1}$, then its children in $\Lk(v)$ are good and
correspond under $\varphi$ to the children of $\varphi(x)$, so
\[
    \widehat{\delta}\widehat f_r\prn{\varphi(x)}
    =
    \delta f_r(x).
\]
Hence,
\[
\norm{\delta f_r}
    \ge
    \norm{\left.\prn{\delta f_r}\right|_{G_{r+1}}} 
    =
    \norm{\left.\prn{\widehat{\delta}\widehat f_r}
        \right|_{\widehat G_{r+1}}} 
    \geq
    \norm{\widehat{\delta}\widehat f_r}
        -\norm{\widehat B_{r+1}} 
\] 
Combining this with \eqref{eq:formal-coboundary-lower-bound} and then using
\eqref{eq:bad-level-bounds}, we get
\begin{align*}
    \norm{\delta f_r}
    &\ge
    m_{k,r}\dist(f_r)
        -m_{k,r}\norm{B_r}
        -\norm{\widehat B_{r+1}} \\
    &\ge
    m_{k,r}\dist(f_r)
        -\prn{
            m_{k,r}\binom{k}{r+1}
            +2\binom{k}{r+2}
        }\eta \\
    &=
    m_{k,r}\dist(f_r)-O_k(\eta),
\end{align*}
as claimed.

\end{proof}

\begin{proof}[Proof of \cref{thm:link-expansion-local}]
Let $I\subseteq \brk{k}$ and let $v\in V_I$. Put $d=k-|I|$.
Consider the sub-complex $X^I$ whose relative type-$J$ part is
\[
    V_J^I\defeq V_{I\cup J},
    \qquad J\subseteq {\brk{k}\setminus I}.
\]
This is a $d$-dimensional square complex, and $v$ lies in its relative
empty part. The link of $v$ and its levels are unchanged, and the norm
in the relative slice, restricted to this link, agrees with $\norm{\cdot}_v$.

Suppose first that $0\le r\le d-2$. Applying
\cref{thm:local-expansion} to $X^I$, with dimension $d$ and link level $r$,
together with \eqref{eq:link-part-lower-bound} gives
\begin{equation}
    \label{eq:relative-link-expansion}
    \norm{\delta f}_v
    \ge
    m_{d,r}
    \dist_{\Lk(v)}(f)
    -O_k\prn{\sum_{j\in\brk{k}\setminus I}\eps_j}.
\end{equation}
Since the sum over $\brk{k}\setminus I$ is at most
$\sum_{j=1}^k\eps_j$ this yields \eqref{eq:link-expansion}.

For link level $-1$, the cochain $f$ is just a value at $v$, giving
\[
    \norm{\delta f}_v=d\norm{f}_v \geq m_{d,-1}\norm{f}_v.
\]
\end{proof}

%% file: GTL.tex
\section{Small-set coboundary expansion from square complexes}

\label{sec:small-set-expansion}

We establish small-set coboundary expansion at every level below the top, arising from imbalanced lossless square complexes. Throughout this section $X$ is a $k$-dimensional square complex with parameters $(D_i, D_i', c_i, \eps_i)_{i=1}^k$. The proof follows the heavy-branch strategy of \cite{FK24}.

Throughout this section, we assume that every set $A\subseteq V_J$ to which lossless expansion is applied in direction $j\notin J$ satisfies $\norm{A}\leq c_j\norm{V_J}$. The choice of $c$ in \eqref{eq:c} ensures this whenever the initial cochain $H_r$ satisfies $\norm{H_r}\leq c\norm{V_r}$.

\input{GTL.Decomposition}
\input{GTL.Excess}
\input{GTL.BadEdges}
\input{GTL.LightSquares}
\input{GTL.MassProp}
\input{GTL.parameters}

%% file: GTL.Decomposition.tex
\subsection{Heavy and terminal vertices}
\label{sec:ltg:decomposition}
Fix a level \(0\le r\le k-1\) and a cochain,
\[
    H_r=\bigsqcup_{\abs{I}=r}H_I\subseteq V_r.
\]
We regard \(H_r\) as the initial set of \emph{heavy} vertices. 
Following \cite{FK24}, we recursively pull \(H_r\) down through the complex. 
Suppose that the heavy vertices at level \(i\) have already been defined.    
We declare \(x\) at level $i-1$ heavy, if it has many heavy parents. Concretely,
for $x\in V_I$ and $j\notin I$, we denote the \emph{direction $j$ heavy degree} and \emph{normalized heavy degree} of \(x\) by
\begin{equation}\label{eq:heavy-degree}
    d_{j, H_{I\cup{\set{j}}}}(x)
    \defeq
    \abs{\set{y\in H_{I\cup\set{j}}:x\lessdot y}}, \qquad
    u_{j,H_{I\cup\set{j}}}(x)
    \defeq
    \frac{d_{j,H_{I\cup\set{j}}}(x)}{D_j}.
\end{equation}
When $I$ and $H_{I\cup\set{j}}$ are clear from context, we will omit $H_{I\cup\set{j}}$ from the notation.
The construction uses thresholds \(0<\gamma_i<k-i\) for \(0\le i<r\), which will be fixed later. For $\abs{I}<r$, a vertex \(x\in V_I\) is declared heavy if its total heavy normalized degree is at least the threshold of its level, $\gamma_{\abs{I}}$.
\begin{equation}
H_I
\defeq
\set{
    x\in V_I:
    \sum_{j\notin I}u_j(x)\ge\gamma_{\abs{I}}}, \qquad H_i\defeq\bigsqcup_{\abs{I}=i}H_I.
\end{equation}
We also define heavy vertices in the level \emph{above} the initial cochain,
\begin{equation}
    H_{r+1}
    \defeq
    \set{v\in V_{r+1}:a\lessdot v\text{ for some }a\in H_r},
\end{equation}
Notice that \(H_{r+1}\) is not obtained from another threshold test - it is
simply the full upward neighborhood of \(H_r\). Thus every
\(a\in H_r\) has all its upward neighbors in \(H_{r+1}\), and hence
for every $j\notin\type(a)$,
$d_{j}(a)=D_j$ and $\sum_{j \notin\type(a)} u_j(a)=k-r$.
Let
\[
H\defeq\bigcup_{i=0}^{r+1}H_i
\]
be the set of heavy vertices. All other vertices are called \emph{light}, and are denoted by $L \defeq X(0)\setminus H$. We view the heavy vertices as an oriented graph whose edges point downward to the level below. For \(v\in H_i\), define its \emph{heavy children} by
\[
\Ch(v)
\defeq
\begin{cases}
    \set{u\in H_{i-1}:u\lessdot v}, & i\ge1,\\
        \emptyset, & i=0.
    \end{cases}
\]

\begin{definition}[Terminal vertices]
A heavy vertex \(v\in H\) is \emph{terminal} if
\(\Ch(v)=\emptyset\).
We write 
\[
\Ter \defeq
\set{v\in H:\Ch(v)=\emptyset}.
\]
\end{definition}

Observe that by definition, every vertex in \(H_{r+1}\) has a heavy child, and so it cannot be terminal and  
\begin{align*}
\Ter\subseteq\bigcup_{i=0}^r H_i.    
\end{align*}
For every heavy vertex \(v\in H\), recursively define its set of
terminal descendants by
\begin{equation}
    \Ter(v)
    \defeq
    \begin{cases}
    \set{v}, & v\in \Ter,\\   \displaystyle\bigcup_{u\in\Ch(v)}\Ter(u), & 
     v\notin \Ter.
    \end{cases}
\end{equation}
In words, we start at \(v\), follow every possible downward heavy path, and
record the terminal vertices at which those paths stop. From the other perspective, for a terminal vertex $t\in \Ter$ we let its \emph{terminal cluster}
$H(t)$ be 
\begin{equation}
H(t)
\defeq
\set{a\in H:t\in\Ter(a)}.
\label{terminal:cluster}
\end{equation}
Equivalently, \(a\in H(t)\) precisely when \(t\) is a descendant of \(a\)
along a downward path of heavy vertices.
The clusters need not be disjoint: a vertex \(a\) belongs to a cluster for
each terminal in \(\Ter(a)\), hence to exactly $|\Ter(a)|$ terminal clusters.
The quantity $|\Ter(a)|-1$ measures \emph{excess} coverage by terminal clusters, we call it the \emph{excess} of $a$ and we denote it by $\Ex(a)$, i.e.,  
\begin{equation}   \Ex(a)\defeq\abs{\Ter(a)}-1.
\label{terminal:excess}
\end{equation}
We observe that every heavy vertex has at least one terminal descendant, so that $\Ex(v)$ is nonnegative. \(\Ex(v)=0\) exactly when \(v\) has a unique terminal descendant; every additional terminal endpoint contributes one unit of excess. 
We measure the total ambiguity at level \(i\) by
\[
    \norm{\Ex_i} = \sum_{a\in H_i} \Ex(a)\norm{a}.
\]
These quantities play a central role in what follows.
We refer the reader to \cref{fig:two-dimensional-square-complex} for an illustration of the setup described in this subsection.
It turns out that in the small-set range, when the initial set \(H_r\) is
small enough for lossless vertex expansion to apply, almost all vertices in \(H\) are assigned to a unique terminal vertex, and the norm of the excess function is small. Proving this is a main goal of this section.

\subsection{The global to local reduction}
\label{sec:ltg:reduction}
To prove small-set coboundary expansion we need to show that $\norm{\delta H_r}$ is large relative to $\norm{H_r}$. We decompose the global function $\delta H_r$ into a sum of local functions, associated with terminal vertices. 

We now compare $\norm{\delta H_r}$ to $\sum_{t\in \Ter}
\norm{\delta H_r(t)}$. Notice that if
\begin{enumerate}
\item
Every vertex in $H_r$ has only one terminal (which, in turn, implies that vertices in $H_r$ that have different terminals have disjoint heavy descendants at all levels below $r$), and,
\item 
Vertices in $H_r$ that have different terminals have different ancestors
at level $r+1$
\end{enumerate}
then   
$\norm{\delta H_r}=\sum_{t\in \Ter}\norm{\delta H_r(t)}$.
Thus,
there are two reasons why
$\norm{\delta H_r}$
might not equal
$\sum_{t\in \Ter}
\norm{\delta H_r(t)}$.
First, a vertex of \(H_r\)
may belong to several terminal clusters.  Second, even after this
overcounting is corrected in level \(r\), the cluster coboundaries may overlap
and cancel in level \(r+1\).  
We prove:

\begin{proposition}
[Local-to-global comparison]
\label{ltg}
\begin{equation}
    \norm{\delta H_r}
    \ge
    \sum_{t\in \Ter}\norm{\delta H_r(t)}
    -
    (k-r)\norm{\Ex_r}
    -
    2\norm{\Ex_{r+1}}.
\end{equation}
\end{proposition}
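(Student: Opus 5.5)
The plan is a pointwise comparison on level $r+1$, after which we integrate against the norm. Here $H_r(t)\defeq H_r\cap H(t)$ is the part of the terminal cluster lying on level $r$, and $\delta H_r(t)$ is the coboundary of its indicator. Every $\delta H_r(t)$, as well as $\delta H_r$, is supported in $H_{r+1}$, since $H_{r+1}$ is the full upward neighbourhood of $H_r$. So it suffices to prove, for every $w\in H_{r+1}$,
\[
\sum_{t\in\Ter}\mathbf 1\brk{\delta H_r(t)(w)=1}\;-\;\mathbf 1\brk{\delta H_r(w)=1}\;\le\;2\,\Ex(w).
\]
Multiplying by $\norm{w}$ and summing over $H_{r+1}$ then gives $\norm{\delta H_r}\ge\sum_t\norm{\delta H_r(t)}-2\norm{\Ex_{r+1}}$. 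This is at least the right-hand side of \cref{ltg}, because $(k-r)\norm{\Ex_r}\ge 0$. If the pointwise bound turned out too optimistic, the fallback is to charge the extra contributions to the children $a\in\Ch(w)$ via $\Ex(a)$. Each $a\in H_r$ has exactly $k-r$ upward neighbours, so this charging produces precisely the $(k-r)\norm{\Ex_r}$ term, by \eqref{eq:norm-propagation}.

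To prove the pointwise bound, fix $w\in H_{r+1}$. For each terminal $t$, let $n_t(w)$ be the number of $a\in\Ch(w)$ with $t\in\Ter(a)$. Since $\Ch(w)=\set{a\in H_r:a\lessdot w}$, we have $\delta H_r(t)(w)\equiv n_t(w)$ and $\delta H_r(w)\equiv\abs{\Ch(w)}\pmod 2$. Moreover, $n_t(w)\ge1$ exactly when $t\in\Ter(w)=\bigcup_{a\in\Ch(w)}\Ter(a)$, since $w$ is not terminal. Hence at most $\abs{\Ter(w)}=1+\Ex(w)$ terminals contribute to the sum.

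We then split into two cases.

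\emph{Case $\Ex(w)=0$.} Then $\Ter(w)=\set{t_0}$. Every child $a$ has $\emptyset\ne\Ter(a)\subseteq\set{t_0}$, so every child lies in $H(t_0)$. Therefore $n_{t_0}(w)=\abs{\Ch(w)}$ and all other $n_t(w)$ vanish. The two sides agree exactly, and the difference is $0$.

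\emph{Case $\Ex(w)\ge1$.} The sum is at most $1+\Ex(w)\le 2\Ex(w)$, and the subtracted term is nonnegative.

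In both cases the pointwise inequality holds.

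The only delicate point is the bookkeeping of where coboundaries live. One must check that $\delta H_r(t)$ vanishes off $H_{r+1}$, which holds because $H_r(t)\subseteq H_r$. One must also check that the children of $w$ used in computing $\delta H_r(w)$ are exactly $\Ch(w)$. This holds because level $r$ is the initial heavy set, so no threshold test intervenes between $H_r$ and $H_{r+1}$. With these identifications in place, the argument is a direct summation and needs no expansion hypotheses at all. This matches the paper's remark that the substantive work lies in bounding $\norm{\Ex_r}$ and $\norm{\Ex_{r+1}}$ afterwards.
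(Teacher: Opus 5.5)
Your proof is correct, and it takes a different route from the paper's. The paper proceeds in two steps. First it writes $H_r=\sum_t H_r(t)+\sum_a\Ex(a)\set{a}$ over $\Ftwo$ and applies the triangle inequality, which compares $\norm{\delta H_r}$ with $\norm{\sum_t\delta H_r(t)}$ at a cost of $(k-r)\norm{\Ex_r}$ for level-$r$ vertices lying in several clusters. It then compares this $\Ftwo$-sum with $\sum_t\norm{\delta H_r(t)}$, using $n_b-(n_b\bmod 2)\le 2(n_b-1)_+$ and $n_b\le\Ex(b)+1$, at a cost of $2\norm{\Ex_{r+1}}$.

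You skip the intermediate $\Ftwo$-sum and compare $\delta H_r$ with the cluster coboundaries directly at each $w\in H_{r+1}$. What makes this work is your case $\Ex(w)=0$. There every child $a\in\Ch(w)$ has $\Ter(a)=\set{t_0}$, so overcounting at level $r$ is only visible at vertices with $\Ex(w)\ge 1$. At those vertices the crude bound $1+\Ex(w)\le 2\Ex(w)$ absorbs it.

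Consequently you prove the strictly stronger inequality $\norm{\delta H_r}\ge\sum_t\norm{\delta H_r(t)}-2\norm{\Ex_{r+1}}$. This shows the $(k-r)\norm{\Ex_r}$ term is redundant, so your fallback charging scheme is never needed.

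The paper's two-step argument has the expository merit of isolating the two error mechanisms it describes: overcounting at level $r$ and cancellation at level $r+1$. For the main theorem the difference is immaterial. There $\norm{\Ex_{r+1}}$ is bounded through \cref{thm:flag-norm-excess}, whose right-hand side already contains $(k-r)\norm{\Ex_r}$.
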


\begin{proof}
Notice that
\[
H_r
=
\sum_{t\in \Ter}H_r(t)
    +
    \sum_{a\in H_r}\Ex(a)\set{a}.
\]
where the sum is over \(\Ftwo\),  
because every
\(a\in H_r\) 
is counted $\abs{\Ter(a)}$ times in the first argument, and $\Ex(a)=\abs{\Ter(a)}-1$ in the second argument.

We first deal with the problem that a vertex $a \in H_r$ may have several terminals.
\begin{align}
\norm{\delta H_r}
\ge
\norm{\delta \sum_{t\in \Ter} H_r(t)}
-
\sum_{a\in H_r}\Ex(a)\norm{\delta\set{a}}
=
\norm{\sum_{t\in \Ter}\delta H_r(t)}
    -
    (k-r)\norm{\Ex_r},
    \label{ltg:first:error}
\end{align}
where the first inequality is the triangle inequality,
and the equality follows from
$\norm{\delta\set{a}}=(k-r)\norm{a}$ for $a\in H_r$.

We now deal with the problem that neighborhoods of vertices in $H_r$ may overlap. 
For \(b\in H_{r+1}\) and a terminal $t \in \Ter$, we say $t$ \emph{contributes} to $b$, if $b\in\delta H_r(t)$,
i.e., if $b$ belongs to the coboundary of the local piece $H_r(t)$. 
Let
\[
    n_b
    \defeq
    \abs{\set{t\in \Ter :b\in\delta H_r(t)}}.
\]
Thus \(n_b\) is the number of terminal-cluster coboundaries that contain
\(b\). 
Notice that
\[
    n_b\le\abs{\Ter(b)},
\]
and equality may not hold, because it is possible that $b$ is connected through heavy vertices to a terminal $t$, but $b$ cancels out in $\delta H_r(t)$.

In this terminology,
\[
\sum_{t\in \Ter}\norm{\delta H_r(t)}
=
\sum_{b \in H_{r+1}} n_b \norm{b}, \quad \text{and} \quad
\norm{\sum_{t\in \Ter}\delta H_r(t)}
=
\sum_{b \in H_{r+1}} (n_b \bmod 2) \norm{b}.
\]
Hence,
\begin{align*}
\sum_{t\in \Ter}\norm{\delta H_r(t)}
-
\norm{\sum_{t\in \Ter}\delta H_r(t)}
&=
\sum_{b \in H_{r+1}}(n_b- (n_b \bmod 2)) \norm{b} 
\leq \sum_{b \in H_{r+1}} 2(n_b-1)_+\norm{b}.
\end{align*}
where we used the inequality $n-(n \mod 2) \leq 2(n-1)_+$ which holds for every $n\in \mathbb{N}$. Now, substituting $n_b \le \abs{\Ter(b)} =\Ex(b)+1$ and rearranging, we get
\begin{equation}
    \sum_{t\in \Ter}\norm{\delta H_r(t)}
    \le
    \norm{\sum_{t\in \Ter}\delta H_r(t)}
    +
    2\sum_{b\in H_{r+1}} \Ex(b)\norm{b}
    =
    \norm{\sum_{t\in \Ter}\delta H_r(t)}
    +
    2\norm{\Ex_{r+1}}.
    \label{ltg:second:error}
\end{equation}
Combining \cref{ltg:first:error,ltg:second:error} proves the proposition.
\end{proof}

%% file: GTL.Excess.tex
\subsection{Excess}
\label{sec:excess}

We want to control the overall excess $\Ex_i$ through \emph{local} structures, where by local we mean that
we only consider vertices that are one layer or two away. We define:

\begin{itemize}
\item 
For \(a\in H_i\), \(i\ge1\), define the same-type excess
\begin{equation}
    \Same(a)
    \defeq
    \sum_{\abs{J}=i-1}
    \prn{\abs{\Ch(a)\cap H_J}-1}_+.
    \label{eq:same-type-excess}
\end{equation}

\item
If \(b\) and \(b'\) are children of \(a\) of distinct types, the cubical
completion property supplies a lower vertex completing their square. We say that the square is \emph{light} if its lower vertex is light, $b\meet[a]b'\in L$. We denote the number of light squares below $a$ by $\Light(a)$, i.e.,
\begin{equation}
    \Light(a)
    \defeq
    \abs{
        \set{
            \set{a;b,b';c} \in X(2) \big\vert
            b,b'\in\Ch(a), \quad c\in L
        }
    }.
    \label{eq:light-squares}
\end{equation}
\end{itemize}

We also set \(\Same(a)=\Light(a)=0\) for \(a\in H_0\), and
\(\Light(a)=0\) for \(a\in H_1\). These two quantities describe the two ways in which heavy branches can fail
to merge. The crux of the proof is that there are no other sources of terminal
excess.

\begin{lemma}[Recursive excess bound]
\label{lem:recursive-excess}
For every heavy vertex \(a\in H\),
\begin{equation}
    \Ex(a)
\le\Same(a)+\Light(a)
+
\sum_{x\in\Ch(a)}\Ex(x).
\label{eq:recursive-excess}
\end{equation}
\end{lemma}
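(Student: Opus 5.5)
The plan is to compare $\abs{\Ter(a)}$ with $\sum_{x\in\Ch(a)}\abs{\Ter(x)}$ by counting how many children are forced to share terminals. The trivial cases come first. If $a\in\Ter$ (in particular if $a\in H_0$), then $\Ex(a)=0$ and the right-hand side of \eqref{eq:recursive-excess} is nonnegative, so there is nothing to prove. Otherwise $\Ter(a)=\bigcup_{x\in\Ch(a)}\Ter(x)$ with $\Ch(a)\neq\emptyset$.

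Form the auxiliary graph $\mathcal G_a$ on the vertex set $\Ch(a)$, joining $x,x'$ whenever $\Ter(x)\cap\Ter(x')\neq\emptyset$, and let $\kappa$ be its number of connected components. I will prove the following union estimate. For a connected set $C$ of children, ordered so that each new child meets an earlier one,
\[
\Bigl|\bigcup_{x\in C}\Ter(x)\Bigr|\le\sum_{x\in C}\abs{\Ter(x)}-(\abs{C}-1),
\]
since each newly added child contributes at least one already-counted terminal. Summing over components gives
\[
\abs{\Ter(a)}\le\sum_{x\in\Ch(a)}(\Ex(x)+1)-\abs{\Ch(a)}+\kappa,
\]
that is,
\[
\Ex(a)\le\sum_{x\in\Ch(a)}\Ex(x)+(\kappa-1).
\]
So it remains to show $\kappa-1\le\Same(a)+\Light(a)$.

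The key geometric input is the following. Let $b,b'\in\Ch(a)$ have distinct types, and let $c=b\meet[a]b'$ be the lower vertex of their square, so that $c\lessdot b$ and $c\lessdot b'$. If $c$ is heavy, then $c\in\Ch(b)\cap\Ch(b')$. Since $\emptyset\neq\Ter(c)\subseteq\Ter(b)\cap\Ter(b')$, the children $b$ and $b'$ are adjacent in $\mathcal G_a$. Hence non-adjacent pairs of distinct types correspond injectively to light squares below $a$, by uniqueness of square completion. When $a\in H_1$ all children lie in $V_\emptyset$ and there are no such pairs, consistent with $\Light(a)=0$.

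Now fix one representative child of each type $J$ occurring in $\Ch(a)$. Every non-representative child can add at most one extra component, and the number of non-representatives is exactly $\Same(a)$. It therefore suffices to show that the representatives span at most $\Light(a)+1$ components in $\mathcal G_a$. If the representatives fall into $m$ components, then every pair of representatives taken from different components is non-adjacent. There are at least $\binom m2\ge m-1$ such pairs, each a pair of distinct types, and each yields a distinct light square. Hence $m-1\le\Light(a)$, so $\kappa-1\le\Same(a)+(m-1)\le\Same(a)+\Light(a)$, which is the claim.

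The step I expect to need the most care is the geometric one: checking that the completion vertex $c$ of two children of $a$ really lies one level below them, so that it is a heavy child of both whenever it is heavy. This holds even for $a\in H_{r+1}$, since heaviness of $c\in V_{i-2}$ is defined for all levels $i-2<r$. The rest is bookkeeping.
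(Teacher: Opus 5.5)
Your proof is correct. Its key input is the same as the paper's: if two children $b,b'$ of $a$ have distinct types and a heavy completion $c=b\meet[a]b'$, then $\emptyset\neq\Ter(c)\subseteq\Ter(b)\cap\Ter(b')$.

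The difference is in the bookkeeping. The paper fixes a single child $b^\star$ and compares every other child only with $b^\star$, writing
\[
\Ter(a)=\Ter(b^\star)\cup\bigcup_{b\neq b^\star}\prn{\Ter(b)\setminus\Ter(b^\star)}.
\]
Each $b$ then either has the type of $b^\star$, or closes a light square with $b^\star$, or shares a terminal with $b^\star$ and so contributes at most $\abs{\Ter(b)}-1$. This star-shaped union bound gives $\Ex(a)\le\abs{\Sametype_{a,b^\star}}+\abs{\Lightsquare_{a,b^\star}}+\sum_{x\in\Ch(a)}\Ex(x)$ in a few lines. It needs no auxiliary graph, no ordering within components, and no choice of type representatives.

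Your spanning-forest version is longer, but it isolates a clean conceptual statement: the new excess at $a$ is at most $\kappa-1$, the number of extra components of the terminal-intersection graph. It also has slack. The $m$ representative components force $\binom m2$ distinct light squares, and $(m-1)^2\le 2\binom m2$, so you actually obtain $\kappa-1\le\Same(a)+\sqrt{2\Light(a)}$. Neither approach's refinement is needed downstream, since \cref{thm:flag-norm-excess} only uses the linear bound.
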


\begin{proof}
If \(\Ch(a)=\emptyset\), then \(a\) is terminal and \(\Ex(a)=0\).
Otherwise, fix one heavy child \(b^\star\in\Ch(a)\), and partition the remaining children into three sets:

\begin{itemize}
\item Vertices in $\Ch(a)$ that have the same type as $b^\star$,
\item Vertices in $\Ch(a)$ that have a different type than $b^\star$, such that $b\meet[a]b^\star \in H$,
\item Vertices in $\Ch(a)$ that have a different type than $b^\star$, such that $b\meet[a]b^\star \in L$.
\end{itemize}
Accordingly, we define,
\begin{align*}
\Sametype_{a,b^\star}
&\defeq
\set{b\in\Ch(a)~:~b \neq b^\star, \quad \type(b)=\type(b^\star)},\\
\Lightsquare_{a,b^\star} 
&\defeq
\set{b\in \Ch(a):
\type(b) \neq \type(b^\star),
b \meet[a] b^\star \in L}.
\end{align*}
We claim that 
for every $b^\star \ne b \in \Ch(a)$,
\begin{equation}\label{eq:terminal-subtraction}
|\Ter(b)\setminus\Ter(b^\star)| 
 \leq (|\Ter(b)|-1) + 
\mathbf{1}_{b \in \Sametype_{a,b^\star}}
+ 
\mathbf{1}_{b \in \Lightsquare_{a,b^\star}}.
\end{equation}
To see this notice that 
$|\Ter(b)\setminus \Ter(b^\star)|
\le |\Ter(b)|$, and so if $b \in \Sametype_{a,b^\star}$,
or $b \in \Lightsquare_{a,b^\star}$ the inequality immediately holds.
Otherwise, the fourth vertex \(c=b\mathbin{\overset a\wedge}b^\star\) of the unique square completing \(a,b,b^\star\) is heavy. Observe that $\Ter(c) \subseteq \Ter(b) \cap \Ter(b^\star)$. It follows that
\[
\abs{\Ter(b)\setminus\Ter(b^\star)} \le \abs{\Ter(b)}-\abs{\Ter(c)} \leq \abs{\Ter(b)} - 1.
\] 
Again the inequality in \eqref{eq:terminal-subtraction} immediately holds.
As
\[
\Ter(a) 
=\Ter(b^\star)\cup 
\bigcup_{b^\star \neq b\in \Ch(a)} \prn{\Ter(b)\setminus \Ter(b^\star)},
\]
we have
\begin{align}
\label{eq:union-bound-excess}
\abs{\Ter(a)}
&\le   
\abs{\Ter(b^\star)}
+
\sum_{b^\star \ne b\in \Ch(a)}
\Ex(b)
+
|\Sametype_{a,b^\star}|+|\Lightsquare_{a,b^\star}|.
\end{align}
Clearly, $|\Sametype_{a,b^\star}| \le \Same(a)$ and
$|\Lightsquare_{a,b^\star}| \le \Light(a)$. Plugging in these bounds and subtracting one from both sides of \eqref{eq:union-bound-excess} gives
\begin{align*}
\Ex(a) &\le
\Same(a)+\Light(a)+\sum_{b \in \Ch(a)}
\Ex(b)
\end{align*}
as required.
\end{proof}

\begin{figure}[ht]
\centering
\begin{tikzpicture}[
    every node/.style={font=\small},
    blob/.style={draw=blue!45!black,fill=blue!5,line width=.9pt},
    vertex/.style={circle,fill=gray!75!black,inner sep=2.3pt},
    selected vertex/.style={circle,fill=red!75!black,inner sep=2.5pt},
    complex edge/.style={gray!60,line width=.75pt,line cap=round},
    selected complex edge/.style={red!75!black,line width=.95pt,line cap=round},
    blob label/.style={font=\Large\bfseries,text=blue!45!black}
]
    % Draw three equal lower blobs and a larger top blob, and record vertices.
    \begin{scope}[shift={(0,0)}]
        \path[blob]
            (-2.98,0)
            .. controls (-2.98,1.00) and (-2.05,1.17) .. (0,1.10)
            .. controls (2.05,1.17) and (2.98,1.00) .. (2.98,0)
            .. controls (2.98,-1.00) and (2.05,-1.17) .. (0,-1.10)
            .. controls (-2.05,-1.17) and (-2.98,-1.00) .. cycle;
        \node[vertex, shape=diamond, minimum width=6pt, minimum height=6pt] (e1) at (-.572,-.870) {};
        \node[vertex, shape=diamond, minimum width=6pt, minimum height=6pt] (e2) at (-1.297,.485) {};
        \node[vertex] (e3) at (.335,.873) {};
        \node[selected vertex, minimum width=13pt] (e4) at (-.176,.327) {};
        \node[vertex] (e5) at (-.288,-.296) {};
        \node[vertex] (e6) at (2.403,-.623) {};
        \node[vertex] (e7) at (-2.421,.739) {};
        \node[vertex] (e8) at (-2.074,-.305) {};
        \node[vertex] (e9) at (-.199,.873) {};
        \node[vertex] (e10) at (.910,.742) {};
    \end{scope}

    \begin{scope}[shift={(-3.95,3.60)}]
        \path[blob]
            (-2.98,0)
            .. controls (-2.98,1.00) and (-2.05,1.17) .. (0,1.10)
            .. controls (2.05,1.17) and (2.98,1.00) .. (2.98,0)
            .. controls (2.98,-1.00) and (2.05,-1.17) .. (0,-1.10)
            .. controls (-2.05,-1.17) and (-2.98,-1.00) .. cycle;
        \node[selected vertex, minimum width=15pt, minimum height=14pt] (a1) at (-2.472,-.210) {};
        \node[vertex] (a2) at (-.378,-.225) {};
        \node[vertex] (a3) at (.377,.781) {};
        \node[selected vertex] (a4) at (1.820,.218) {};
        \node[vertex] (a5) at (.583,-.818) {};
        \node[vertex] (a6) at (-1.321,-.202) {};
        \node[selected vertex, minimum width=15pt] (a7) at (1.337,.031) {};
        \node[vertex] (a8) at (.420,-.027) {};
        \node[selected vertex] (a9) at (2.44,0.15) {};
        \node[vertex] (a10) at (-.395,.582) {};
    \end{scope}

    \begin{scope}[shift={(3.95,3.60)}]
        \path[blob]
            (-2.98,0)
            .. controls (-2.98,1.00) and (-2.05,1.17) .. (0,1.10)
            .. controls (2.05,1.17) and (2.98,1.00) .. (2.98,0)
            .. controls (2.98,-1.00) and (2.05,-1.17) .. (0,-1.10)
            .. controls (-2.05,-1.17) and (-2.98,-1.00) .. cycle;
        \node[vertex] (b1) at (-1.302,-.378) {};
        \node[selected vertex] (b2) at (-2.456,.309) {};
        \node[vertex] (b3) at (2.382,-.648) {};
        \node[vertex] (b4) at (2.415,.606) {};
        \node[vertex] (b5) at (-2.157,-.763) {};
        \node[selected vertex, minimum width=15pt] (b6) at (1.567,.061) {};
        \node[vertex] (b7) at (-1.579,.487) {};
        \node[selected vertex] (b8) at (.187,-.570) {};
        \node[vertex] (b9) at (1.475,-.819) {};
        \node[vertex] (b10) at (.339,.271) {};
    \end{scope}

    \begin{scope}[shift={(0,7.25)}]
        \path[blob]
            (-3.75,0)
            .. controls (-3.75,1.10) and (-2.50,1.32) .. (0,1.22)
            .. controls (2.50,1.32) and (3.75,1.10) .. (3.75,0)
            .. controls (3.75,-1.10) and (2.50,-1.32) .. (0,-1.22)
            .. controls (-2.50,-1.32) and (-3.75,-1.10) .. cycle;
        \node[vertex] (t1) at (-3.023,.745) {};
        \node[vertex] (t2) at (1.797,.953) {};
        \node[selected vertex, shape=star, star point ratio=1.8] (t3) at (-.451,.409) {};
        \node[selected vertex] (t4) at (.784,-.943) {};
        \node[selected vertex, fill=gray!75!black] (t5) at (1.77,0.24) {};
        \node[selected vertex] (t6) at (2.880,-.100) {};
        \node[selected vertex, shape=regular polygon, regular polygon sides=3] (t7) at (-1.51,-0.38) {};
        \node[selected vertex, fill=gray!75!black] (t8) at (-1.774,.949) {};
        \node[selected vertex] (t10) at (0.91,-0.17) {};
        \node[selected vertex, shape=diamond, minimum width=6pt, minimum height=6pt] (t11) at (-.205,1.023) {};
        \node[selected vertex] (t12) at (0.33,0.13) {};
        \node[vertex] (t13) at (-3.132,-.311) {};
        \node[vertex] (t14) at (2.776,.878) {};
        \node[vertex] (t15) at (-2.302,.197) {};
    \end{scope}

    % Explicit gray edges.
    \draw[complex edge] (a1)--(e1);
    \draw[complex edge] (a1)--(e8);
    \draw[complex edge] (a4)--(e5);
    \draw[complex edge] (a7)--(e2);
    \draw[complex edge] (a7)--(e7);
    \draw[complex edge] (a9)--(e9);
    \draw[complex edge] (b2)--(e2);
    \draw[complex edge] (b6)--(e6);
    \draw[complex edge] (b6)--(e1);
    \draw[complex edge] (b8)--(e10);

    % Explicit red edges.
    \draw[selected complex edge] (a4)--(e4);
    \draw[selected complex edge] (a9)--(e4);
    \draw[selected complex edge] (b2)--(e4);
    \draw[selected complex edge] (b8)--(e4);
    \draw[selected complex edge] (a1)--(t7);
    \draw[selected complex edge] (a1)--(t11);
    \draw[selected complex edge] (a4)--(t3);
    \draw[selected complex edge] (a4)--(t12);
    \draw[selected complex edge] (a7)--(t3);
    \draw[selected complex edge] (a7)--(t7);
    \draw[selected complex edge] (a9)--(t4);
    \draw[selected complex edge] (a9)--(t10);
    \draw[selected complex edge] (b2)--(t4);
    \draw[selected complex edge] (b2)--(t3);
    \draw[selected complex edge] (b6)--(t6);
    \draw[selected complex edge] (b6)--(t11);
    \draw[selected complex edge] (b8)--(t12);
    \draw[selected complex edge] (b8)--(t10);

    % Labels face outward, away from all incidence edges.
    \node[blob label] at (0,-1.62) {$V_{\emptyset}$};
    \node[blob label,anchor=east] at (-7.15,3.60) {$V_{\{1\}}$};
    \node[blob label,anchor=west] at (7.15,3.60) {$V_{\{2\}}$};
    \node[blob label] at (0,8.85) {$V_{\{1,2\}}$};
\end{tikzpicture}
\caption{The four vertex sets of a two-dimensional square complex, arranged by level. An initial set of vertices $H_1$ is marked heavy (red) on level $1$. In the layer below, only vertices that have many heavy parents are heavy. In the layer above, all upward neighbors of the heavy vertices are marked heavy. Terminal vertices (heavy vertices without heavy children) are enlarged. Red vertices shaped like diamonds, stars, and triangles have excess. The excess of the triangular vertex is accounted for by $\Same_2$, whereas the excess of the diamond shaped vertex is accounted for by $\Light_2$. The excess of the star shaped vertex is accounted for by both functions.}
\label{fig:two-dimensional-square-complex}
\end{figure}

In words, the recursive lemma says that ambiguity can only come from
ambiguity already present one layer below, together with new same-type and
light-square charges created at level \(i\). We now unroll the recursion.

\begin{theorem}[Excess recursion]
\label{thm:flag-norm-excess}
For every \(1\le i\le r+1\),
\begin{equation}
\norm{\Ex_i}
\le
(k-i+1)\norm{\Ex_{i-1}}
+
\norm{\Same_i}
+
\norm{\Light_i}
\label{eq:flag-norm-excess}
\end{equation}
\end{theorem}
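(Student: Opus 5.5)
We multiply the recursive bound of \cref{lem:recursive-excess} by the weight $\norm{a}$ and sum over $a\in H_i$. By definition $\norm{\Same_i}=\sum_{a\in H_i}\Same(a)\norm{a}$ and $\norm{\Light_i}=\sum_{a\in H_i}\Light(a)\norm{a}$, so the first two terms on the right of \eqref{eq:recursive-excess} give exactly $\norm{\Same_i}+\norm{\Light_i}$. What remains is to bound the inherited term
\[
\sum_{a\in H_i}\norm{a}\sum_{x\in\Ch(a)}\Ex(x)
\]
by $(k-i+1)\norm{\Ex_{i-1}}$.

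Exchanging the order of summation, this inherited term equals $\sum_{x\in H_{i-1}}\Ex(x)\sum_{a\in H_i,\ x\lessdot a}\norm{a}$. The inner sum runs over heavy upward neighbours of $x$. Since all weights are nonnegative, we may drop the heaviness restriction and bound it by the total weight of all upward neighbours of $x$. By \eqref{eq:norm-propagation}, that total weight is
\[
\norm{\delta_{i-1}\mathbf 1_{\{x\}}}=(k-(i-1))\norm{x}=(k-i+1)\norm{x}.
\]
Each child $x\in H_{i-1}$ is therefore counted with total weight at most $(k-i+1)\norm{x}$. Summing gives $(k-i+1)\sum_{x\in H_{i-1}}\Ex(x)\norm{x}=(k-i+1)\norm{\Ex_{i-1}}$, which is \eqref{eq:flag-norm-excess}.

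There is no real obstacle. The only points to check are bookkeeping ones. First, the inequality of \cref{lem:recursive-excess} holds for every heavy $a$, including terminal ones, where both sides are compatible since $\Ex(a)=0$. Second, the per-direction weight identity $D_j/D_{I\cup\{j\}}=\norm{x}$ holds uniformly for each $j\notin\type(x)$, which is why the factor is exactly the number of upward directions $k-i+1$. For $i=r+1$, every vertex of $H_{r+1}$ has children in $H_r$, and the same computation applies verbatim.
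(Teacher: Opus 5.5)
Your proposal is correct and is essentially the paper's proof. Both weight the bound of \cref{lem:recursive-excess} by $\norm{a}$ and sum over $H_i$, swap the order of summation in the inherited term, enlarge the sum over heavy parents to all parents (valid since $\Ex\ge 0$), and apply \eqref{eq:norm-propagation} to obtain the factor $k-i+1$.
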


\begin{proof}
By \cref{lem:recursive-excess},
\begin{align*}
\sum_{a \in H_i}\norm{a}\Ex(a)
&\le
\sum_{a \in H_i}\norm{a}\prn{\Same(a)+\Light(a)}+\sum_{a \in H_i}\norm{a}\sum_{b \in \Ch(a)}\Ex(b).
\end{align*}
Rearranging the first sum, we have
\[
    \sum_{a \in H_i}\norm{a}\prn{\Same(a)+\Light(a)} = \norm{\Same_i} + \norm{\Light_i}.
\]For the second sum, changing the order of summation, we get
\begin{align*}
\sum_{a\in H_i}\norm{a}\sum_{b\in\Ch(a)}\Ex(b)
=
\sum_{b\in H_{i-1}}
\Ex(b)
\sum_{a:b\in\Ch(a)}\norm{a}
\le
\sum_{b\in H_{i-1}}\Ex(b)
\sum_{b\lessdot a}\norm{a},
\end{align*}
where the inequality holds because on the left-hand side we sum only over heavy parents, whereas on the right-hand side we sum over all parents. Next, we use the norm-identity on \cref{eq:norm-propagation} to obtain
\[
\sum_{b\in H_{i-1}}\Ex(b)
\sum_{b\lessdot a}\norm{a}
=
(k-i+1)\sum_{b\in H_{i-1}}\Ex(b)\norm{b}
= (k-i+1)\norm{\Ex_{i-1}}.
\]
Combining the equations above concludes the proof.
\end{proof}
Upon recalling that every $a\in H_0$ is terminal and has no excess, we get $\Ex_0 = 0$ and we can immediately deduce the following:
\begin{corollary}[Iterated excess bound]
\label{cor:iterated-excess}
For every \(0\le i\le r+1\),
\begin{equation}
\norm{\Ex_i}\le
\sum_{j=1}^{i}\frac{(k-j)!}{(k-i)!}
\prn{\norm{\Same_j} + \norm{\Light_j}}.
\label{eq:iterated-excess}
\end{equation}
\end{corollary}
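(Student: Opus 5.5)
This follows by unrolling the one-step recursion of \cref{thm:flag-norm-excess} and inducting on $i$. Throughout, abbreviate $\sigma_j \defeq \norm{\Same_j}+\norm{\Light_j}$.

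\textbf{Base case.} For $i=0$ the right-hand side of \eqref{eq:iterated-excess} is the empty sum, which is $0$. Every $a\in H_0$ has $\Ch(a)=\emptyset$, so it is terminal and $\Ex(a)=0$. Hence $\norm{\Ex_0}=0$ and the claim holds.

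\textbf{Inductive step.} Take $1\le i\le r+1$ and assume the bound for $i-1$. By \cref{thm:flag-norm-excess},
\[
\norm{\Ex_i}\le (k-i+1)\norm{\Ex_{i-1}}+\sigma_i .
\]
Substituting the inductive hypothesis for $\norm{\Ex_{i-1}}$ gives
\[
\norm{\Ex_i}\le (k-i+1)\sum_{j=1}^{i-1}\frac{(k-j)!}{(k-i+1)!}\,\sigma_j+\sigma_i .
\]
Since $(k-i+1)!=(k-i+1)\,(k-i)!$, we have
\[
(k-i+1)\cdot\frac{(k-j)!}{(k-i+1)!}=\frac{(k-j)!}{(k-i)!}.
\]
So each coefficient for $j<i$ becomes exactly the claimed $(k-j)!/(k-i)!$. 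The term $\sigma_i$ has coefficient $1$, which equals $(k-i)!/(k-i)!$, the $j=i$ term of the sum.

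\textbf{Potential obstacle.} The only place needing care is that all quantities are well-defined. For $1\le i\le r+1\le k$ we have $k-i\ge 0$, so the factorials make sense. The nonnegative coefficients preserve the direction of the inequalities when we substitute. There is no genuine difficulty beyond this bookkeeping.
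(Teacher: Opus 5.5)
Your proof is correct and follows the paper's argument: the paper notes that every vertex of $H_0$ is terminal, so $\norm{\Ex_0}=0$, and then unrolls the one-step recursion of \cref{thm:flag-norm-excess}. Your induction, using the identity $(k-i+1)\cdot\frac{(k-j)!}{(k-i+1)!}=\frac{(k-j)!}{(k-i)!}$, is exactly that unrolling written out.
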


%% file: GTL.BadEdges.tex
\subsection{Same Type Collisions}
\label{sec:same-type}
The previous subsection reduced terminal overlap to the level charges $\norm{\Same_i}$ and $\norm{\Light_i}$.
Our next task is to show that every one of these charges is small compared to the
norm of the set \(H_{i-1}\). There are two ingredients:
lossless expansion controls same-type collisions and additional imbalance assumptions control light squares. In this subsection we deal with the first type of error.

\begin{lemma}
\label{lem:same-norm-bound}
For every \(1\le i\le r+1\),
\begin{equation}
    \norm{\Same_i}
    \le
    \prn{\sum_{j=1}^k\eps_j}\norm{H_{i-1}}.
\end{equation}
\end{lemma}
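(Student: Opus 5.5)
The plan is to rewrite $\norm{\Same_i}$ as a sum over pairs (level-$(i-1)$ type $J$, direction $j\notin J$) and then apply the excess bound of \cref{lem:lossless-estimates} in each directional graph. Fix $a\in H_i$ of type $I$. Its children of a given type $J=I\setminus\set{j}$ are exactly its lower neighbors in direction $j$ that lie in $H_{J}$. Hence
\[
\abs{\Ch(a)\cap H_J}=\deg_{H_J}(a)
\]
in the bigraph $X_{J,j}$, and
\[
\Same(a)=\sum_{j\in I}\prn{\deg_{H_{I\setminus\set{j}}}(a)-1}_+ .
\]
For $i=r+1$ every lower neighbor in $H_r$ is a child. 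For $i\le r$ the children are all heavy lower neighbors, so the same identity holds.

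Weighting by $\norm{a}=D_I^{-1}$ and exchanging sums gives
\[
\norm{\Same_i}\le\sum_{\abs{J}=i-1}\sum_{j\notin J}\frac{1}{D_{J}D_j}\sum_{y\in V_{J\cup\set{j}}}\prn{\deg_{H_J}(y)-1}_+ .
\]
Restricting $a$ to heavy vertices only decreases the sum, since the summand is nonnegative. Now apply part~(\ref{lossless:excess}) of \cref{lem:lossless-estimates} to $A=H_J\subseteq V_J$ in the graph $X_{J,j}$. The standing assumption of this section guarantees $\norm{H_J}\le c_j\norm{V_J}$, equivalently $\abs{H_J}\le c_j\abs{V_J}$. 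This yields
\[
\sum_{y}\prn{\deg_{H_J}(y)-1}_+\le\eps_jD_j\abs{H_J}.
\]
Dividing by $D_JD_j$ turns the bound into $\eps_j\norm{H_J}$.

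Summing, we get
\[
\norm{\Same_i}\le\sum_{\abs{J}=i-1}\Bigl(\sum_{j\notin J}\eps_j\Bigr)\norm{H_J}\le\Bigl(\sum_{j=1}^k\eps_j\Bigr)\norm{H_{i-1}}.
\]

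The only delicate point is the identification of same-type children with same-direction neighbors, including at the boundary level $i=r+1$. There $H_{r+1}$ is defined as the full upward neighborhood of $H_r$ rather than by a threshold, but $\Ch$ is still defined as heavy lower neighbors, so the identity is unchanged. Beyond that, the proof is only bookkeeping of the weights $D_I^{-1}=D_J^{-1}D_j^{-1}$.
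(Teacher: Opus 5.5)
Your proposal is correct and follows essentially the same argument as the paper. Both rewrite $\Same(a)$ as $\sum_{j\in I}\prn{\deg_{H_{I\setminus\set{j}}}(a)-1}_+$, apply the excess bound of \cref{lem:lossless-estimates} in each directional graph $X_{J,j}$, and reindex by $J=I\setminus\set{j}$ so that $D_{J\cup\set{j}}^{-1}\eps_jD_j\abs{H_J}=\eps_j\norm{H_J}$.
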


\begin{proof}
Expand the definition of \(\norm{\Same_i}\) and group the sum by the type \(I\) of the heavy vertices at level \(i\):
\begin{align*}
\sum_{a\in H_i}\norm{a}\Same(a)
&=
\sum_{\abs{I}=i}\norm{I}\cdot \sum_{a \in H_I} \sum_{j\in I} \prn{\deg_{H_{I\setminus\set{j}}}(a)-1}_+ \\
&=
\sum_{\abs{I}=i}\norm{I}\cdot \sum_{j\in I}
\sum_{a \in H_I} 
\prn{\deg_{H_{I\setminus\set{j}}}(a)-1}_+\\
&\le
\sum_{\abs{I}=i}
\norm{I}\cdot \sum_{j\in I}
\eps_j D_j \abs{H_{I\setminus\set{j}}},
\end{align*}
where we have used \cref{lossless:excess} in \cref{lem:lossless-estimates}.

Each set $J$ of size $i-1$ appears in the summands corresponding to $I=J\cup\set{j}$ for $j\notin J$. Therefore, we can rewrite the sum as
\[
    \norm{\Same_i}
    \le
    \sum_{\abs{J}=i-1} \sum_{j\notin J} \norm{J\cup\set{j}} \eps_j D_j \abs{H_J}
    = 
    \sum_{\abs{J}=i-1} \sum_{j\notin J} \eps_j \norm{H_J}
    \leq
    \prn{\sum_{j=1}^k \eps_j} \norm{H_{i-1}}
\]
\end{proof}

%% file: GTL.LightSquares.tex
\subsection{Light Squares}
\label{sec:light:squares}

Each light square contributing to \(\Light_i\) lies between levels
\(i-2\), \(i-1\), and \(i\).  Instead of counting these squares from their
heavy upper vertices at level \(i\), we count them from their light lower
vertices at level \(i-2\). Throughout, we assume that every set to which lossless expansion is applied is small enough to be in the appropriate lossless range.

\begin{lemma}
\label{lem:light-squares}
For every \(2\le i\le r+1\),
\begin{equation}
\norm{\Light_i}
\le
(i-1)(k-i+1)\max_{s<t}\set{\eps_{t}D_{s}',\gamma_{i-2}}\norm{H_{i-1}}.
\label{eq:light-square-bound}
\end{equation}
\end{lemma}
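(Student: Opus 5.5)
The plan is to count light squares from their light lower vertex rather than from their heavy top vertex. Fix $2\le i\le r+1$. Every square counted in $\Light(a)$ for some $a\in H_i$ has the form $\set{a;b,b';c}$ with $c\in L\cap V_J$, $\abs{J}=i-2$, and $b\in H_{J\cup\set{s}}$, $b'\in H_{J\cup\set{t}}$ for distinct $s,t\notin J$. Conversely, by joint completion at $c$, each pair $(b,b')$ of heavy up-neighbours of $c$ in directions $s$ and $t$ determines exactly one square, hence one $a$. Therefore the number of light squares at the pair $(c,\set{s,t})$ is at most $d_s(c)\,d_t(c)$. These squares have weight $\norm{a}=\norm{c}/(D_sD_t)$, so
\[
\norm{\Light_i}\le \sum_{\abs{J}=i-2}\ \sum_{\set{s,t}\subseteq[k]\setminus J}\ \sum_{c\in L\cap V_J}\norm{c}\,u_s(c)\,u_t(c).
\]
Throughout, I label the unordered pair so that $s<t$.

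For each $J$ and pair $s<t$, I split $u_t(c)\le \eps_t+(u_t(c)-\eps_t)_+$ and bound the two resulting terms separately.

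\emph{First term.} Here I drop lightness and sum over all $c\in V_J$. Every vertex of $H_{J\cup\set{s}}$ has $D_s'$ down-neighbours in $V_J$, so $\sum_{c\in V_J} d_s(c)=D_s'\abs{H_{J\cup\set{s}}}$. This gives
\[
\eps_t\sum_c \norm{c}\,u_s(c)=\eps_t\frac{D_s'\abs{H_{J\cup\set{s}}}}{D_JD_s}=\eps_tD_s'\norm{H_{J\cup\set{s}}}.
\]

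\emph{Second term.} Here lightness is used. Since $c$ is light, $u_s(c)\le\sum_{j\notin J}u_j(c)<\gamma_{i-2}$. I then apply the large-degree bound (\cref{lossless:large:degree} of \cref{lem:lossless-estimates}) to $X_{J,t}$ with $B=H_{J\cup\set{t}}$, which gives $\sum_c(u_t(c)-\eps_t)_+\le\abs{H_{J\cup\set{t}}}/D_t$. Multiplying by $\norm{J}=D_J^{-1}$ bounds the second term by $\gamma_{i-2}\norm{H_{J\cup\set{t}}}$.

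Combining the two, the contribution of $(J,\set{s,t})$ is at most $\max_{s<t}\set{\eps_tD_s',\gamma_{i-2}}\prn{\norm{H_{J\cup\set{s}}}+\norm{H_{J\cup\set{t}}}}$.

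It remains to regroup by the level-$(i-1)$ type. Each type $K$ with $\abs{K}=i-1$ arises as $J\cup\set{s}$ (or $J\cup\set{t}$) from a choice of the removed element of $K$, giving $i-1$ options, and of the new direction outside $K$, giving $k-i+1$ options. Summing over these yields the factor $(i-1)(k-i+1)$ in front of $\norm{H_{i-1}}$, which is \eqref{eq:light-square-bound}.

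The step I expect to be delicate is the hypothesis of the large-degree bound. It needs the set of $c\in V_J$ with $d_t(c)>\eps_tD_t$ to have size at most $c_t\abs{V_J}$. I would handle this with the standing assumption of this section that all sets fed into lossless expansion are in range, guaranteed by the choice of $c$ in \eqref{eq:c}. If a direct argument is required instead, Markov's inequality applies: $\sum_c d_t(c)=D_t'\abs{H_{J\cup\set{t}}}$ bounds the number of such $c$ by $D_t'\abs{H_{J\cup\set{t}}}/(\eps_tD_t)$, which is small when $\norm{H_r}$, and hence every $\norm{H_{J\cup\set{t}}}$, is small enough.

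A second point to check is the orientation of the split. The bound must put $\eps_t$ against $D_s'$ with $s<t$. Since the two directions play symmetric roles in the count $d_s(c)\,d_t(c)$, the split can be made on whichever direction matches the labelling $s<t$.
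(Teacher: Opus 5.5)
Your proposal is correct and follows essentially the same route as the paper. You count light squares from their light bottom vertex $c$, linearize the product of normalized heavy degrees by splitting at $\eps_t$ and using lightness of $c$, and apply the large-degree bound and the biregular edge count before regrouping by level-$(i-1)$ type. The one cosmetic difference is that the paper linearizes $u_{m_t}(c)\,U_{<m_t}(c)$ against the whole prefix sum, using $U_{<m_t}(c)\le\gamma_{i-2}$, rather than pair by pair; this gives a slightly smaller intermediate coefficient but the same final constant $(i-1)(k-i+1)$. Like you, the paper relies on the standing in-range assumption to justify the large-degree bound.
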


\begin{proof}
Fix \(2\le i\le r+1\). By definition, \[
\norm{\Light_i}
=
\sum_{a\in H_i}\abs{\set{\set{a;b,b';c}\in X(2):b,b'\in \Ch(a),\ c \in L}}\cdot\norm{a}
= 
\sum_{a\in H_i} \sum_{\substack{\set{b,b'}\subseteq \Ch(a)\\ \type(b)\neq\type(b')}} \mathbf{1}_{b\meet[a]b' \in L} \norm{a}.
\]
Instead of counting the light squares from above, we change the order of summation and count them from below.
\[
\norm{\Light_i}
=
\sum_{c\in L_{i-2}}\sum_{\substack{\set{b,b'}\subseteq H_{i-1} \\ \type(b)\neq\type(b') \\c\lessdot b,b'}}\mathbf{1}_{b\meet[c]b' \in H_{i}}\norm{b\meet[c]b'}
\leq
\sum_{c\in L_{i-2}}\sum_{\substack{\set{b,b'}\subseteq H_{i-1}\\ \type(b)\neq\type(b') \\ c\lessdot b,b'}}\norm{b\meet[c]b'}.
\]
For a fixed \(c\in L_J\), unique square completion gives one square for
every choice of heavy parents in two directions $s<t$ outside $J$. Consequently,
\[
\begin{aligned}
\sum_{\substack{\set{b,b'}\subseteq H_{i-1} \\ \type(b)\neq\type(b') \\ c\lessdot b,b'}}\norm{b\meet[c]b'}
&=
\sum_{\substack{s<t\\s,t\notin J}}
\norm{J\cup\set{s,t}}
d_{s}(c)
d_{t}(c)\\
&=
\norm{J}
\sum_{\substack{s<t\\s,t\notin J}}u_s(c)u_t(c).
\end{aligned}
\]
Substituting this identity into the preceding inequality gives
\begin{equation}
\norm{\Light_i}
\le
\sum_{\abs{J}=i-2}
\norm{J}
\sum_{c\in L_J}
\sum_{\substack{s<t\\s,t\notin J}}u_s(c)u_t(c).
\label{eq:light-counted-from-below}
\end{equation}

For a fixed $J$, we analyze the inner two sums of \eqref{eq:light-counted-from-below}. List the directions outside \(J\) as
\(m_1<m_2<\cdots<m_{k-i+2}\) and rewrite the pair-product sum as
\[
\sum_{1\le s<t\le k-i+2}u_{m_s}(c)u_{m_t}(c)
=
\sum_{t=2}^{k-i+2}
u_{m_t}(c)U_{<m_t}(c), \qquad \text{for} \qquad U_{<m_t}(c)\defeq\sum_{s<t}u_{m_s}(c).
\]
Following \cite{PK22}, we supply a bound for the product $u_{m_t}(c)U_{<m_t}(c)$, which is linear in the upward heavy-degrees. We claim
\begin{equation}\label{eq:linearization}
    u_{m_t}(c)U_{<m_t}(c) \leq \gamma_{i-2}\prn{u_{m_t}(c)-\eps_{m_t}}_+ + \eps_{m_t}U_{<m_t}(c).
\end{equation}
Indeed, for $u_{m_t}(c) \leq \eps_{m_t}$, the second term on the right-hand side already dominates the left-hand side. For $u_{m_t}(c) > \eps_{m_t}$, the inequality follows by plugging in $\gamma_{i-2} \geq U_{<m_t}(c)$, which follows from the fact that $c$ is light.
Applying this bound pointwise for every $c$ in the summation and changing the order of summation, we obtain
\[
\sum_{c\in L_J}
\sum_{\substack{s<t\\s,t\notin J}}u_s(c)u_t(c)
\leq
\sum_{t=2}^{k-i+2}
\brk{\gamma_{i-2}\sum_{c\in L_J}\prn{u_{m_t}(c)-\eps_{m_t}}_+ + \eps_{m_t}\sum_{c\in L_J}U_{<m_t}(c)}.
\]
Now, once the expression is linear in the upward-degrees, we translate the summation to edge counts. \cref{lossless:large:degree} in \cref{lem:lossless-estimates} and biregularity give, for every 
\(m\notin J\),
\[
\sum_{c\in L_J}
\prn{u_m(c)-\eps_m}_+
\le
\frac{\abs{H_{J\cup\set{m}}}}{D_m}
, \qquad 
\sum_{c\in L_J}u_m(c)
\le
\frac{D_m'\abs{H_{J\cup\set{m}}}}{D_m}.
\]
Combining these two estimates with \eqref{eq:linearization}, we get
\begin{align*}
\sum_{c\in L_J}
\sum_{\substack{s<t\\s,t\notin J}}u_s(c)u_t(c) 
&\le
\sum_{t=2}^{k-i+2}
\left(
\gamma_{i-2}\frac{\abs{H_{J\cup\set{m_t}}}}{D_{m_t}}
+
\sum_{s < t}
\eps_{m_t}D_{m_s}'\frac{\abs{H_{J\cup\set{m_s}}}}{D_{m_s}}
\right).
\end{align*}
Rearranging the summation, the coefficient of $\abs{H_{J\cup\set{m_s}}}$ is
\[
    \frac{1}{D_{m_s}}\prn{\mathbf{1}_{s\ge2}\gamma_{i-2} + \sum_{t=s+1}^{k-i+2}\eps_{m_t}D_{m_s}'} \leq \frac{k-i+1}{D_{m_s}}\max_{1\le p<q\le k}\set{\eps_qD_p',\gamma_{i-2}}.
\]
Indeed, the expression in parentheses has at most $k-i+1$ nonzero terms, each at most $\max_{1\le p<q\le k}\set{\eps_qD_p',\gamma_{i-2}}$.
Substituting this bound into \eqref{eq:light-counted-from-below}, we obtain
\begin{align*}
\norm{\Light_i}
&\le
(k-i+1)\max_{s<t}\set{\eps_tD_s',\gamma_{i-2}}
\sum_{\abs{J}=i-2}\norm{J}
\sum_{m\notin J}
\frac{\abs{H_{J\cup\set{m}}}}{D_m} \\
&= (k-i+1)\max_{s<t}\set{\eps_tD_s',\gamma_{i-2}} \sum_{\abs{J} = i-2} \sum_{m\notin J} \norm{H_{J \cup \set{m}}}.
\end{align*}
Every type \(K\) of size \(i-1\) arises from exactly \(i-1\) pairs
\((J,m)\), one for each \(m\in K\).  Consequently,
\[
\norm{\Light_i}
\le
(i-1)(k-i+1)\max_{s<t}\set{\eps_{t}D_s',\gamma_{i-2}}\norm{H_{i-1}},
\]
as claimed.
\end{proof}

%% file: GTL.MassProp.tex
\subsection{Heavy-mass propagation}\label{sec:mass-prop}
In \cref{sec:same-type,sec:light:squares} we showed that $\norm{\Same_i}$ and $\norm{\Light_i}$ are small relative to $\norm{H_{i-1}}$. However, we want to compare the error with $\norm{H_r}$ rather than $\norm{H_{i-1}}$. In this section we relate the two quantities.
\begin{lemma}[Heavy-mass propagation]
For every \(1\le i\le r\), if $\gamma_{i-1} > \sum_{j\in\brk{k}}\eps_j$, then
\begin{equation}
    \norm{H_{i-1}}
    \le
    \frac{i}{\gamma_{i-1} - \sum_{j \in \brk{k}} \eps_j}
    \norm{H_i}.
    \label{eq:one-step-heavy-mass}
\end{equation}
The same holds for $i=r+1$ under the convention $\gamma_r = k-r$.
\end{lemma}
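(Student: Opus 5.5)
The plan is a double-counting argument: use the heavy-degree threshold to lower bound a weighted count of heavy edges between $H_{i-1}$ and $H_i$, then use the lossless edge bound to upper bound the same count. Fix $1\le i\le r+1$. For $i\le r$, every $x\in H_J$ with $\abs{J}=i-1$ satisfies $\sum_{j\notin J}u_j(x)\ge\gamma_{i-1}$ by definition of heaviness. For $i=r+1$ we use instead the observation that every $a\in H_r$ has $\sum_{j\notin\type(a)}u_j(a)=k-r$, which matches the convention $\gamma_r=k-r$. Multiplying by $\norm{x}$ and summing over $H_{i-1}$ gives
\[
\gamma_{i-1}\norm{H_{i-1}}
\le
\sum_{\abs{J}=i-1}\sum_{j\notin J}\sum_{x\in H_J}\norm{J}\,u_j(x)
=
\sum_{\abs{J}=i-1}\sum_{j\notin J}\frac{e(H_J,H_{J\cup\set{j}})}{D_JD_j},
\]
where $e(\cdot,\cdot)$ is taken in the directional graph $X_{J,j}$.

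Next, for each pair $(J,j)$, I apply the lossless edge bound \cref{lossless:edge} of \cref{lem:lossless-estimates} in $X_{J,j}$, with $A=H_J$ and $B=H_{J\cup\set{j}}$. This is legitimate by the standing assumption of this section that every set to which lossless expansion is applied lies in the lossless range. It gives
\[
e(H_J,H_{J\cup\set{j}})\le\abs{H_{J\cup\set{j}}}+\eps_jD_j\abs{H_J}.
\]
Dividing by $D_JD_j=D_{J\cup\set{j}}$ turns this into
\[
\frac{e(H_J,H_{J\cup\set{j}})}{D_JD_j}\le\norm{H_{J\cup\set{j}}}+\eps_j\norm{H_J}.
\]
This step is the crux. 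The naive bound $d_j(x)\le D_j$, or counting children of $y\in H_i$ (each has up to $D_j'$ of them in direction $j$), would introduce a factor of $D'$ that destroys the estimate. Losslessness is exactly what removes it, at the cost of the additive $\eps_j$ term.

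Finally, I sum over $(J,j)$. The $\eps$-terms contribute at most $\prn{\sum_{j\in[k]}\eps_j}\norm{H_{i-1}}$. In the main term, each type $K$ with $\abs{K}=i$ arises from exactly $i$ pairs $(J,j)$, namely $J=K\setminus\set{j}$ for $j\in K$, so the main term equals $i\norm{H_i}$. Therefore
\[
\gamma_{i-1}\norm{H_{i-1}}\le i\norm{H_i}+\prn{\sum_j\eps_j}\norm{H_{i-1}}.
\]
Rearranging, which uses the hypothesis $\gamma_{i-1}>\sum_j\eps_j$, yields \eqref{eq:one-step-heavy-mass}.
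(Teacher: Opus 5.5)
Your proposal is correct and follows the paper's proof: lower-bound the summed normalized heavy degrees by the threshold, bound each directional edge count $e(H_J,H_{J\cup\set{j}})$ with the lossless edge estimate, and double count so that each type of size $i$ appears $i$ times. Weighting by $\norm{x}$ before summing, rather than dividing by $D_I$ afterwards, is only cosmetic. Your explicit handling of $i=r+1$, where $u_j(a)=1$ for every $a\in H_r$, spells out what the paper leaves to the convention $\gamma_r=k-r$.
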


\begin{proof}
Fix $1\leq i\leq r$ and \(I\subseteq\brk{k}\) with \(\abs{I}=i-1\). The
definition of heaviness gives
\(\sum_{j\notin I}u_j(a)\ge\gamma_{i-1}\) for every \(a\in H_I\). Summing over \(a\in H_I\), we obtain
\begin{equation}
    \gamma_{i-1}\abs{H_I}
    \le
    \sum_{j\notin I}\frac{1}{D_j}
    \sum_{a\in H_I}d_j(a).
    \label{eq:heavy-mass-lower-bound}
\end{equation}
For each \(j\notin I\), the lossless edge estimate gives
\[
    \sum_{a\in H_I}d_j(a)
    =
    e\prn{H_I,H_{I\cup\set{j}}}
    \le
    \abs{H_{I\cup\set{j}}}
    +
    \eps_jD_j\abs{H_I}.
\]
Substituting these estimates into
\eqref{eq:heavy-mass-lower-bound} and using
\(\sum_{j\notin I}\eps_j\le\sum_{j\in\brk{k}}\eps_j\) yields
\begin{equation}
    \prn{\gamma_{i-1}-\sum_{j\in\brk{k}}\eps_j}\abs{H_I}
    \le
    \sum_{j\notin I}
    \frac{\abs{H_{I\cup\set{j}}}}{D_j}.
    \label{eq:heavy-mass-by-type}
\end{equation}
Divide both sides by $D_I$ and sum over all
types \(I\) of size \(i-1\).  Every type \(K\) of size \(i\) occurs
exactly \(i\) times on the right, once for each choice of \(j\in K\).
Consequently,
\[
    \prn{\gamma_{i-1}-\sum_{j\in\brk{k}}\eps_j}\norm{H_{i-1}}
    \le
    i\norm{H_i}.
\]
Dividing both sides by \(\gamma_{i-1}-\sum_{j\in\brk{k}}\eps_j\) gives the desired inequality.
\end{proof}

The iterated one-step comparison tells us the total price of pulling
heaviness down from level \(r\) to level \(i-1\). Thus, for every
\(1\le i\le r+1\), including the empty-product case \(i=r+1\), we obtain
\begin{corollary}\label{cor:heavy-mass-prop}
    Assume that \(\gamma_{s-1}>\sum_{j\in\brk{k}}\eps_j\) for every \(i\le s\le r\). Then,
\begin{equation}
    \norm{H_{i-1}}
    \le
    \prod_{s=i}^{r}\frac{s}{\gamma_{s-1} - \sum_{j \in \brk{k}} \eps_j}
    \norm{H_r}
    \label{eq:heavy-mass-propagated}
\end{equation}    
\end{corollary}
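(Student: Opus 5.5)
The plan is to iterate the one-step heavy-mass propagation lemma, by downward induction on $i$ from $i=r+1$ to the desired value.

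\textbf{Base case.} For $i=r+1$ the product over $s$ from $r+1$ to $r$ is empty. The claimed inequality is then $\norm{H_r}\le\norm{H_r}$, which holds trivially.

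\textbf{Inductive step.} Suppose the bound holds for $i+1$, with $i\le r$, so that
\[
\norm{H_i}\le\prod_{s=i+1}^{r}\frac{s}{\gamma_{s-1}-\sum_{j\in\brk{k}}\eps_j}\norm{H_r}.
\]
The hypothesis of \cref{cor:heavy-mass-prop} for $i$ includes the case $s=i$, namely $\gamma_{i-1}>\sum_j\eps_j$. This allows the one-step lemma to be applied at index $i$ (note $1\le i\le r$), giving
\[
\norm{H_{i-1}}\le\frac{i}{\gamma_{i-1}-\sum_j\eps_j}\norm{H_i}.
\]
The remaining hypotheses, for $i+1\le s\le r$, are exactly what the induction hypothesis at $i+1$ requires. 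All the factors are positive, so substituting the inductive bound for $\norm{H_i}$ yields the product from $s=i$ to $r$, as claimed.

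\textbf{Possible obstacle.} There is little difficulty here; the only points to check are bookkeeping. First, the product runs over $s\le r$, so the convention $\gamma_r=k-r$ (the case $i=r+1$ of the lemma) is never invoked; that case corresponds only to the empty product. Second, each application of the lemma implicitly requires lossless expansion to apply to the sets $H_{I}$. This is guaranteed by the section's standing assumption that every set to which lossless expansion is applied lies in the lossless range.
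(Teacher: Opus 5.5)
Your proof is correct and matches the paper's argument: the corollary is just the one-step heavy-mass propagation lemma iterated from level $r$ down to level $i-1$, with $i=r+1$ the empty-product case. Your bookkeeping remarks are also accurate: the factors are positive, the convention $\gamma_r=k-r$ is never used, and the argument relies on the section's standing assumption that the lossless estimates apply.
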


%% file: GTL.parameters.tex
\subsection{Choice of parameters and final error bound}
\label{sec:parameters}

\paragraph{Imbalance}
Throughout, we assume that the complex is imbalanced enough, so that
\begin{equation}\label{eq:parameter-imbalance}
    \max_{1\le s<t\le k}\set{\eps_tD_s'} \le \eps_\star, \qquad
    \text{for} \qquad \eps_\star \defeq \sum_{j=1}^k \eps_j.
\end{equation}

\paragraph{Initial norm bound}
Lossless expansion in direction $j\notin J$ applies to sets $A\subseteq V_J$ with $\abs{A}\leq c_j\abs{V_J}$, equivalently $\norm{A}\leq c_j\norm{V_J}$. To ensure that all sets used in the lossless estimates lie in this range, put
\begin{equation}\label{eq:c}
    c = \frac{\min_j c_j}{2^k k! D'_{\brk{k}}D_{\brk{k}}}.
\end{equation}
The reader may verify that this conservative choice of \(c\) guarantees that all sets used in the lossless expansion estimates satisfy \(\norm{A}\leq c_j\norm{V_J}\). One can also improve this constant by following the heavy-mass propagation through the levels of the complex, but we do not pursue this here.

\paragraph{Heaviness thresholds}
For \(1\le s\le r\), define
\begin{equation}
    \gamma_{s-1}
    \defeq
    \eps_\star+\eps_\star^{\nicefrac{1}{2^s}}.
    \label{eq:threshold-choice}
\end{equation}
The thresholds satisfy \(\gamma_{s-1}>\eps_\star\) for \(1\le s\le r\), as required for the heavy-mass
propagation corollary, \cref{cor:heavy-mass-prop}.

\paragraph{The main theorem}

\begin{theorem}
\label{prop:small-set-cosystolic-expansion}
    Let $X$ be a lossless $k$-dimensional square complex with parameters $\prn{D_i, D_i', c_i, \eps_i}_{i\in\brk{k}}$. Assume that the complex is imbalanced enough, so that \cref{eq:parameter-imbalance} holds. For $0\leq r\leq k-1$, let $H_r:V_r\to \Ftwo$ be a locally minimal cochain, of norm at most $c \norm{V_r}$, for $c$ as in \cref{eq:c}. Then,
    \[
        \norm{\delta H_r} \geq \prn{m_{k,r-1}-O_k\prn{\eps_\star^{\nicefrac{1}{2^r}}}}\norm{H_r}.
    \]
\end{theorem}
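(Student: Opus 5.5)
Combine the local-to-global comparison (\cref{ltg}) with the link expansion theorem (\cref{thm:link-expansion-local}), and bound the two excess error terms using \cref{cor:iterated-excess}, \cref{lem:same-norm-bound}, \cref{lem:light-squares} and \cref{cor:heavy-mass-prop}.

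\textbf{Step 1: main term.} Build the heavy hierarchy from $H_r$ with thresholds \eqref{eq:threshold-choice}. For each terminal $t\in\Ter$, say $t\in V_T$ with $|T|=s\le r$, the piece $H_r(t)$ is supported in $\Lk(t)$. Write $H_r(t)=\iota_t f_t$ with $f_t\in C^{r-s-1}(\Lk(t))$. Then $\delta H_r(t)=\iota_t\delta f_t$, so $\norm{\delta H_r(t)}=D_T^{-1}\norm{\delta f_t}_t$. Apply \cref{thm:link-expansion-local} in the link of $t$, which has dimension $d=k-s$, at level $r-s-1$.

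Local minimality of $H_r$ gives $\dist(f_t)$ close to $\norm{f_t}_t$. The difficulty is that $H_r(t)$ is only part of $H_r$. I would argue that replacing $f_t$ by $f_t+\delta h$ inside $\Lk(t)$ changes $H_r$ by $\delta\iota_t h$. So local minimality gives $\norm{H_r+\delta\iota_th}\ge\norm{H_r}$. The only vertices where this differs from the same computation for $H_r(t)$ alone are those of $H_r\cap\Lk(t)$ outside $H_r(t)$, and those of $H_r(t)$ also lying in other clusters. I expect these to be charged to excess, or else absent: if $a\in H_r\cap\Lk(t)$ is not in the cluster of $t$, that is a failure of the heavy structure that should again be controlled by the same/light mechanisms.

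The terminal $t$ is light-with-respect-to-children, i.e.\ it has no heavy children. Since $t$ is heavy, $f_t$ has norm at least about $\gamma_s$ in its link, but I only need $\dist(f_t)\ge\norm{f_t}_t-(\text{overlap})$. This gives
\[\norm{\delta H_r(t)}\ge m_{k-s,r-s-1}\norm{H_r(t)}-O_k(\eps_\star)\norm{t}-\text{(overlap terms)}.\]
The additive $O_k(\eps_\star)\norm{t}$ is summed over terminals, and $\sum_t\norm{t}\le\sum_i\norm{H_i}$. For $s<r$, heaviness of $t$ gives $\norm{H_r(t)}\gtrsim\gamma_s\norm{t}$, so the additive error is relatively $O(\eps_\star/\gamma_s)=O(\eps_\star^{1-2^{-(s+1)}})$, which is fine. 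For $s=r$, $f_t$ is a level $-1$ cochain and the bound is exact. Summing, and using $m_{k-s,r-s-1}\ge m_{k,r-1}$ (to be checked from \eqref{eq:m}, the case $s=r$ giving $k-r\ge m_{k,r-1}$) together with $\sum_t\norm{H_r(t)}\ge\norm{H_r}$, the main term is at least $m_{k,r-1}\norm{H_r}-O_k(\eps_\star^{1/2^r})\norm{H_r}$.

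\textbf{Step 2: error terms.} By \cref{cor:iterated-excess}, $\norm{\Ex_r}$ and $\norm{\Ex_{r+1}}$ are at most $O_k(1)\sum_{j\le r+1}(\norm{\Same_j}+\norm{\Light_j})$. Using \cref{lem:same-norm-bound}, \cref{lem:light-squares} and \eqref{eq:parameter-imbalance}, each term is at most $O_k(\max(\eps_\star,\gamma_{j-2}))\norm{H_{j-1}}$. By \cref{cor:heavy-mass-prop}, $\norm{H_{j-1}}\le\prod_{s=j}^r s/\eps_\star^{1/2^s}\,\norm{H_r}$. The product of $\gamma_{j-2}\approx\eps_\star^{1/2^{j-1}}$ with $\prod_{s\ge j}\eps_\star^{-1/2^s}$ has exponent $2^{-(j-1)}-\sum_{s=j}^r2^{-s}=2^{-r}$. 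This choice of thresholds is exactly what makes the error $O_k(\eps_\star^{1/2^r})\norm{H_r}$. The $\eps_\star$ term gives an even better exponent. The range condition for the lossless estimates is supplied by \eqref{eq:c}.

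\textbf{Main obstacle.} The hard step is in Step 1: transferring local minimality of the global $H_r$ to near-minimality of each cluster restriction $f_t$, given that clusters overlap and $\Lk(t)$ may contain heavy vertices outside $H(t)$. I would first try to show that every $a\in H_r\cap\Lk(t)$ lying above $t$ through a path avoiding heavy vertices contributes little. I would bound such contributions by light-square and same-type counts, so that the defect is $O(\norm{\Ex_r})$ plus the already-bounded charges.
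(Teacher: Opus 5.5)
\textbf{Gap in Step~1.} The gap is at the point you flag as the main obstacle. You never establish that $f_t$ is (nearly) minimal in $\Lk(t)$, and the charging scheme you sketch would not close it. A vertex $a\in(H_r\cap\Lk(t))\setminus H_r(t)$ may reach $t$ only through light intermediate vertices while lying in a single other cluster. Such a vertex need not carry any excess, same-type or light-square charge, and nothing bounds the total norm of such vertices by a small multiple of $\norm{H_r(t)}$. (Vertices of $H_r(t)$ that also lie in other clusters are irrelevant, since they belong to both $H_r$ and $H_r(t)$.)

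\textbf{The missing observation.} The discrepancy always has the favourable sign, so no error term is needed: local minimality is inherited by sub-cochains. For any $S$ and $g$ one has $\norm{S+g}-\norm{S}=\norm{g\setminus S}-\norm{g\cap S}$, and enlarging $S$ can only decrease the right-hand side. Taking $S=H_r(t)\subseteq H_r$ and $g=\delta(\iota_t h)$ gives
\[
\norm{H_r(t)+g}-\norm{H_r(t)}\;\ge\;\norm{H_r+g}-\norm{H_r}\;\ge\;0 .
\]
Here $\delta(\iota_t h)=\iota_t(\delta h)$ with $\delta h$ the link coboundary, and $\norm{\iota_t h'}_X=D_T^{-1}\norm{h'}_t$. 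So $f_t$ is exactly minimal in $\Lk(t)$, hence $\dist(f_t)=\norm{f_t}_t$, and \cref{thm:link-expansion-local} applies with no overlap terms at all. This one-line argument is how the paper handles the step.

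\textbf{Secondary slip.} The problem is in converting the additive error $O_k(\eps_\star)\norm{t}$ into a relative one. Heaviness of $t$ gives $\norm{H_{s+1}(t)}\ge\gamma_s\norm{t}$, since every heavy parent of $t$ lies in the cluster of $t$. It does not give $\norm{H_r(t)}\ge\Omega(\gamma_s)\norm{t}$: each further level $q$ costs a factor of about $\eps_\star^{1/2^q}$ in the available lower bound.

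Running the one-step heavy-mass propagation inside the cluster fixes this. It is legitimate because every heavy parent of a vertex of $H(t)$ is again in $H(t)$, and it gives $\norm{H_r(t)}\ge\frac{s!}{r!}\eps_\star^{2^{-s}-2^{-r}}\norm{t}$. The relative error is then $O_k(\eps_\star^{1-2^{-s}+2^{-r}})$; this agrees with your $\eps_\star/\gamma_s$ only when $s=r-1$. It is still $O_k(\eps_\star^{1/2^r})$, which is what the paper uses.

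Alternatively, your own first remark already suffices. Over terminals below level $r$ one has $\sum_t\norm{t}\le\sum_{i<r}\norm{H_i}$. The global \cref{cor:heavy-mass-prop} then bounds each $\eps_\star\norm{H_i}$ by $O_k(\eps_\star^{1-2^{-i}+2^{-r}})\norm{H_r}$.

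\textbf{Step~2.} This step is correct and matches the paper's bookkeeping exactly.
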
 

\begin{proof}
    Set the thresholds as in \eqref{eq:threshold-choice}.
    Invoking \cref{ltg}, we get
    \begin{equation}\label{eq:ltg-prop}
        \norm{\delta H_r} \geq \sum_{t\in \Ter} \norm{\delta H_r(t)} - (k-r)\norm{\Ex_r} - 2\norm{\Ex_{r+1}}.
    \end{equation}
    Then, the excess norm is bounded using \cref{cor:iterated-excess}, giving us
    \begin{equation}\label{eq:excess-bound-prop}
        (k-r)\norm{\Ex_r} + 2\norm{\Ex_{r+1}} \leq O_k\prn{\norm{\Same} + \norm{\Light}}.
    \end{equation}
    The imbalance assumption gives us that for all $2\leq i\leq r+1$,
    \[
        \max_{1\le s<t\le k}\set{\eps_tD_s'} \le \eps_\star \leq \gamma_{i-2}.
    \]
    Furthermore, we may assume that $\eps_\star$ is sufficiently small in terms of $k$, since otherwise the claimed lower bound is nonpositive. Hence, all conditions of \cref{lem:same-norm-bound,lem:light-squares} are satisfied. We get
    \begin{align*}
        \norm{\Same_i} &= O_k\prn{\eps_\star}\norm{H_{i-1}}, \qquad 1\leq i\leq r+1, \\
        \norm{\Light_i} &= O_k\prn{\gamma_{i-2}}\norm{H_{i-1}}, \qquad 2\leq i\leq r+1.
    \end{align*}
    Also, $\Light_1=0$, since there are no squares with upper vertex at level $1$.
    Plugging the chosen thresholds into \cref{cor:heavy-mass-prop}, we translate the above bounds to depend on the initial set $H_r$, rather than on $H_{i-1}$. For $1\leq i\leq r+1$, including the empty-sum case $i=r+1$, we have
    \[
    \norm{H_{i-1}} = O_k\prn{\eps_\star^{-\sum_{s=i}^r 2^{-s}}}\norm{H_r} = O_k\prn{\eps_\star^{-\nicefrac{1}{2^{i-1}}+\nicefrac{1}{2^r}}}\norm{H_r}.
    \]
    Consequently, for $1\leq i\leq r+1$,
    \begin{align*}
        \norm{\Same_i} &= O_k\prn{\eps_\star^{1-\nicefrac{1}{2^{i-1}}+\nicefrac{1}{2^r}}}\norm{H_r} \leq O_k\prn{\eps_\star^{\nicefrac{1}{2^r}}}\norm{H_r}, && 1 \leq i \leq r+1, \\
        \norm{\Light_i} &= O_k\prn{\gamma_{i-2}\eps_\star^{-\nicefrac{1}{2^{i-1}}+\nicefrac{1}{2^r}}}\norm{H_r} = O_k\prn{\eps_\star^{\nicefrac{1}{2^r}}}\norm{H_r}, && 2 \leq i \leq r+1.
    \end{align*}
    Summing over these ranges gives a factor of at most $r+1$, which is absorbed into the $O_k$ notation, so that
    \[
        \norm{\Same} = O_k\prn{\eps_\star^{\nicefrac{1}{2^r}}}\norm{H_r} \qquad \text{and} \qquad \norm{\Light} = O_k\prn{\eps_\star^{\nicefrac{1}{2^r}}}\norm{H_r}.
    \]
    Combining these bounds with \cref{eq:excess-bound-prop,eq:ltg-prop} then gives us
    \begin{equation}\label{eq:ltg-prop-2}
        \norm{\delta H_r} \geq \sum_{t\in \Ter} \norm{\delta H_r(t)} - O_k\prn{\eps_\star^{\nicefrac{1}{2^r}}}\norm{H_r}.
    \end{equation}
    Fix a terminal vertex $t\in V_I$ and put $i=\abs{I}$. Viewed as a
    cochain on $\Lk(t)$, the cluster $H_r(t)$ lies at link level $r-i-1$.
    This cochain is minimal in the link. Indeed, a link coboundary extends to an ambient coboundary of a cochain supported in $\Lk(t)$. Since $H_r(t)\subseteq H_r$, any such correction decreasing the norm of $H_r(t)$ would also decrease the norm of $H_r$, contradicting local minimality.
    Applying \cref{thm:link-expansion-local} at this level and
    converting link norms to ambient norms by the factor $D_I^{-1}$ gives
    \[
        \norm{\delta H_r(t)}
        \geq
        m_{k-i,r-i-1}\norm{H_r(t)}-D_I^{-1}O_k\prn{\eps_\star} \geq m_{k,r-1}\norm{H_r(t)}-D_I^{-1}O_k\prn{\eps_\star}.
    \]
    We now show that the additive term is negligible compared with the mass of
    the terminal cluster. The heavy mass propagates, and \cref{cor:heavy-mass-prop} invoked with $H(t)$ instead of $H$ gives
    \begin{equation}
        \norm{H_r(t)}
        \geq
        \frac{i!}{r!}
        \eps_\star^{\nicefrac{1}{2^i}-\nicefrac{1}{2^r}}\norm{t}.
        % \label{eq:terminal-cluster-lower-bound}
    \end{equation}
    As $\norm{t}=D_I^{-1}$, the error term in $D_I^{-1}O_k\prn{\eps_\star}$ is bounded by
    \[
        D_I^{-1}O_k\prn{\eps_\star}
        \leq
        O_k\prn{\eps_\star^{\,1-\nicefrac{1}{2^i}+\nicefrac{1}{2^r}}}\norm{H_r(t)}
        \leq
        O_k\prn{\eps_\star^{\,\nicefrac{1}{2^r}}}\norm{H_r(t)}.
    \]
    Hence, we can replace the additive error by a multiplicative error and obtain
    \[
        \norm{\delta H_r(t)}\geq \prn{m_{k,r-1}-O_k\prn{\eps_\star^{\nicefrac{1}{2^r}}}}\norm{H_r(t)}.
    \]
    Summing over all terminals yields
    \[
        \sum_{t\in\Ter}\norm{\delta H_r(t)}
        \geq
        \prn{m_{k,r-1}-O_k\prn{\eps_\star^{\nicefrac{1}{2^r}}}}
        \sum_{t\in\Ter}\norm{H_r(t)}
        \geq
        \prn{m_{k,r-1}-O_k\prn{\eps_\star^{\nicefrac{1}{2^r}}}}
        \norm{H_r},
    \]
    where the last inequality follows because every $x\in H_r$ belongs to
    the cluster of at least one terminal. Plugging this into
    \cref{eq:ltg-prop-2} concludes the proof.
\end{proof}

\begin{proof}[Proof of \cref{thm:arbitrary-rank-expansion}]
With $c$ as in \eqref{eq:c}, \cref{prop:small-set-cosystolic-expansion}
implies that, at every level $0\leq r<k$, every locally minimal cochain
$H_r$ of norm at most $c\norm{V_r}$ expands by a factor of at least
$m_{k,r-1}-O_k\prn{\eps_\star^{\Omega_k(1)}}$. Choosing $a = \min_{r}m_{k,r-1}$ concludes the proof.
\end{proof}

%% file: Qltc.tex
\section{qLTC}

In this section we use \cref{prop:small-set-cosystolic-expansion} to show that a sufficiently imbalanced four-dimensional decorated square complex gives rise to a quantum locally testable code. We start with the required definitions and then describe the construction and prove its qLTC properties. Throughout this section, the degree and lossless-expansion parameters are fixed independently of the code length.

\subsection{Definitions}

One way to characterize a linear code is by a parity-check matrix for it, $\ker (H)=\calC$. Such a matrix is not unique. Any matrix having the same row-span over $
\FF_2$ characterizes the same code. Consider the following probabilistic procedure for testing whether a given word $w\in\FF_2^n$ is in the code: choose a row $R$ uniformly at random from $H$ and check if $R\cdot w^t=\sum_{i=1}^n R_iw_i$ is $0$ over $\FF_2$. If it is -- accept, announcing $w\in\calC$. If it is $1$ -- reject, announcing $w\notin\calC$. A codeword will never be rejected, but the rejection probability of a non-codeword depends on $H$. Ideally the distance of the given $w$ from $\calC$ should be reflected by the rejection probability of the testing procedure. Codes which allow for such a choice of $H$ are called locally testable codes.

For the codes arising from our complexes, we use the following weighted formulation of local testability. Words and syndromes carry the norms of their respective levels in the underlying complex; for the reversed complex, we use its reversed norms.

\begin{definition}[Classical Locally Testable Code (LTC) {\cite[Definition 11]{GS02,leverrier2022towardsqltc}}] \label{def:ltc}
    For $s>0$, a classical linear code $\calC\leq\FF_2^n$ is $s$-locally testable if it has a parity-check matrix $H\in\FF_2^{m\times n}$ ($\ker(H)=\calC$) with rows and columns of weight bounded by a constant such that for every word $w\in\FF_2^n$,
    \[\norm{Hw}\geq s\cdot\dist(w,\calC)\]
    where, recall,  $\dist(x,\calC)=\min_{c\in\calC}\norm{x-c}$.
\end{definition}

The parameter $s$ is the soundness of the test, while its locality is the maximum row weight of $H$.

The weighted norms in \cref{def:ltc} are comparable to Hamming weight by \cref{rem:hamming-vs-norm}. For our codes, biregularity also bounds the ratio $m/n$ above and below by positive constants depending only on $k,D_i,D_i'$. Consequently, the weighted formulation is equivalent, up to constant factors in soundness, to the usual inequality $\frac{1}{m}\abs{Hw}\geq\frac{s}{n}\dist_H(w,\calC)$, where $\dist_H$ denotes Hamming distance. These factors depend only on $k,D_i,D_i'$.

\begin{definition}[Quantum CSS code \cite{calderbank1996goodquantumeccexist,steane1996qecc}]
    A quantum CSS code $\calC(\calC_X,\calC_Z)$ is specified by two classical linear
codes $\calC_X,\calC_Z\leq\FF_2^n$ satisfying
$\calC_X^\perp\leq\calC_Z$ (and thus also $\calC_Z^\perp\leq\calC_X$).
\end{definition}

Using parity-check matrices $H_Z\in\FF_2^{m_Z\times n},H_X\in\FF_2^{m_X\times n}$ for $\calC_Z,\calC_X$ respectively, with code length $n$ and number of $Z$-checks $m_Z$ and $X$-checks $m_X$, the definition is equivalent to the following being a cochain complex (composition of maps is zero): $\FF_2^{m_X}\xrightarrow{H_X^t}\FF_2^n\xrightarrow{H_Z}\FF_2^{m_Z}$. We say the middle space $\FF_2^n$ hosts the qubits.

Using this notation, the distance of the CSS code is the minimum $\min\{d_Z,d_X\}$ with $d_Z\coloneq\min\{\abs{w}:w\in\ker(H_Z)\setminus\im(H_X^t)\}$ and $d_X=\min\{\abs{w}:w\in\ker(H_X)\setminus\im(H_Z^t)\}$.
The rate $\rho$ is the ratio of log-dimension of the code to the number of qubits $n$ (its length). It is lower bounded by $\rho\geq\frac{n-m_Z-m_X}{n}=1-\frac{m_Z+m_X}{n}$.

\begin{definition}[Quantum Locally Testable CSS Code (qLTC) {\cite[Fact 17]{AE17,EH17}\protect\footnotemark}]
    A quantum CSS code $(\calC_X,\calC_Z)$ is $s$-locally testable if both $\calC_X$ and $\calC_Z$ are.
\end{definition}

\footnotetext{The definition in \cite{AE17} is more general. With the conventional normalization, \cite[Fact 17]{EH17} showed that a CSS code $(\calC_X,\calC_Z)$ for which both codes are $s$-locally testable is itself $s$-locally testable in the general definition. On the other hand they showed that if the CSS code is $s$-locally testable, then both $\calC_Z,\calC_X$ must be $s/2$-locally testable. In this work we only care about the existence of a constant soundness for quantum local testability so we define the CSS code's local testability soundness directly in terms of the soundness parameters of the two classical codes.}

\subsection{Construction and Proof}
For the quantum locally testable code we will use a $(4,2)$-cubical complex. The code uses the middle three layers as a CSS cochain complex, i.e. $C^1(X)\xrightarrow{\delta_1}C^2(X)\xrightarrow{\delta_2}C^3(X)$, with $\calC_Z = \ker(\delta_2)$ and $\calC_X = \ker(\delta_1^t)$ and we use the coboundary and boundary maps as the parity-check matrices $H_Z\coloneq\delta_2$ and $H_X\coloneq\delta_1^t$. To show this is indeed a qLTC we follow Lin and Hsieh's arguments in \cite{HL22a,HL22b} which used and generalized Sipser and Spielman's decoder of \cite{SS96}. With the adjustments described below, their argument applies in our setting. At a high level, in \cite{HL22b} they show that, using necessary imbalance of the blocks, a $(2,2)$-cubical complex as a CSS code constitutes an asymptotically good quantum LDPC code, and in \cite{HL22a} they show that using such a complex (with different imbalance) and putting the bits in the bottom layer constitutes an asymptotically good classical LTC. We use the former for the sub-complex $C^1(X)\xrightarrow {\delta_1}C^2(X)\xrightarrow{\delta_2}C^3(X)$ to show the code has constant relative distance, and the latter for the sub-complexes $C^2(X)\xrightarrow{\delta_2}C^3(X)\xrightarrow{\delta_3}C^4(X)$ and $C^0(X)\xleftarrow{\delta_0^t}C^1(X)\xleftarrow{\delta_1^t}C^2(X)$ to show each of the codes $\calC_Z$ and $\calC_X$ has a constant local testability soundness. Constant rate is shown by another imbalance condition.

The main generalization that requires an adjustment for our setting is the notion of locality: local minimality in this work requires considering any cochain in the previous level that is supported in any one link of the bottom layer. This is the adjustment we introduce into the decoder algorithm \cite[Definition 16]{HL22a}.

Recall that for $v\in V_0$, a cochain in its link $h\in C^{i-2}(\Lk(v))$ corresponds to the layer $V_{i-1}$ of $X$, and that $\iota_v h \in C^{i-1}(X)$ is its corresponding cochain.

\begin{algorithm}[H]
    \caption{Greedy flipping, adapted from \cite[Def. 16]{HL22a}}
    \label{def:flipping-algo}
    \renewcommand{\algorithmicrequire}{\textbf{Input:}}
    \begin{algorithmic}[1]
        \Require $f\in C^i(X)$, with $1\leq i\leq k$.
        \While{there exist $v\in V_0$ and $h\in C^{i-2}(\Lk(v))$ such that
            $\norm{f+\delta(\iota_vh)}<\norm{f}$}
            \State $f\gets f+\delta(\iota_vh)$
        \EndWhile
        \State \Return $f$
    \end{algorithmic}
\end{algorithm}

\begin{lemma} \label{lem:flipping-algo-local-minimality}
    For an input cochain $f$, \Cref{def:flipping-algo} stops after at most $D_{[k]}\cdot\norm{f}$ iterations and returns a locally-minimal cochain.
\end{lemma}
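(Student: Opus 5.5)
The proof splits into two parts: a potential argument for termination, and a reading of the stopping condition for local minimality.

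\emph{Termination.} Each iteration replaces $f$ by $f+\delta(\iota_vh)$ and strictly decreases $\norm{f}$. By the definition of the norm in \eqref{eq:norm}, every vertex weight has the form $D_I^{-1}$ with $I\subseteq[k]$. Since $D_{[k]}=D_I\,D_{[k]\setminus I}$, each weight is an integer multiple of $D_{[k]}^{-1}$. Hence $D_{[k]}\norm{f}$ is always a nonnegative integer, and every iteration lowers it by at least one. The loop therefore stops after at most $D_{[k]}\norm{f}$ iterations, where $f$ here denotes the input; the norm only decreases along the run, so this bound uses the initial norm.

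\emph{Local minimality.} Let $v\in V_s$ with $0\le s\le i-1$ and $h\in C^{i-s-2}(\Lk(v))$, as in \cref{def:minimality}. We must show $\norm{f}\le\norm{f+\delta(\iota_vh)}$ for the output $f$. The algorithm only guarantees this for $v\in V_0$, so the task is to reduce an arbitrary $v$ to a bottom-level vertex. Pick any $w\in V_0$ with $w\le v$, for instance by following downward edges from $v$ along one direction of $\type(v)$ at a time. Since $\le$ is transitive, $\Lk(v)\subseteq\Lk(w)$. The cochain $\iota_vh$ lives on level $i-1$ of $X$ and is supported on $\Lk(v)\cap V_{i-1}\subseteq\Lk_{i-2}(w)$. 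So there is $h'\in C^{i-2}(\Lk(w))$ with $\iota_wh'=\iota_vh$, namely $h$ extended by zero inside $\Lk(w)$.

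The stopping condition of the loop, applied to $w$ and $h'$, then gives $\norm{f+\delta(\iota_vh)}=\norm{f+\delta(\iota_wh')}\ge\norm{f}$. This is exactly local minimality. The identity $\iota_wh'=\iota_vh$ holds because both are extensions by zero of the same function, and the ambient $\delta$ does not depend on which link we view it through.

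The only subtle point is this reduction to $V_0$. It requires checking that the level indices match, in the sense that level $i-s-2$ of $\Lk(v)$ sits inside level $i-2$ of $\Lk(w)$. Both are subsets of $V_{i-1}$ by the level convention of \cref{def:augmented-link}. With that verified, the statement follows.
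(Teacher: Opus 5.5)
Your proof is correct and follows essentially the same route as the paper. Termination comes from $D_{[k]}\norm{f}$ being a nonnegative integer (each weight $D_I^{-1}$ is a multiple of $D_{[k]}^{-1}$) that strictly drops at every iteration. Local minimality comes from choosing a bottom ancestor $w\in V_0$ with $\Lk(v)\subseteq\Lk(w)$, so every correction allowed in \cref{def:minimality} is already ruled out by the stopping condition. Your explicit level check, $\Lk_{i-s-2}(v)\subseteq\Lk_{i-2}(w)$ as the level-$(i-1)$ parts of the two links, is left implicit in the paper.
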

\begin{proof}
By the stopping condition, no correction supported in a bottom-vertex link decreases the output's norm. Every vertex $v$ has a bottom ancestor $u\in V_0$, with $\Lk(v)\subseteq\Lk(u)$. Thus every correction allowed by \cref{def:minimality} is also supported in a bottom-vertex link, so the output is locally minimal.

Every cochain norm is an integer multiple of $\alpha\coloneq D_{[k]}^{-1}$, since $D_I$ divides $D_{[k]}$ for every $I\subseteq[k]$. Each iteration strictly decreases the norm, hence decreases it by at least $\alpha$. Thus the algorithm stops after at most $\tfrac1\alpha\norm{f}=D_{[k]}\norm{f}$ iterations for an input cochain $f$.
\end{proof}

To obtain constant rate, we require the combined size of the two adjacent layers to be smaller than that of the middle layer by a constant fraction. For that, we will assume a minimal stretch/shrinkage of the direction-graphs in the complex. Define the stretch/shrinkage factor of direction $i$ by $\mu_i\coloneq\frac{D_i}{D'_i}$.

\begin{theorem}[qLTC from a $(4,2)$-cubical complex]\label{cor:qltc}
Let $X$ be a $(4,2)$-cubical complex. Assume $X$ and $X^{\mathrm{rev}}$ satisfy the assumptions in \cref{prop:small-set-cosystolic-expansion} with $\varepsilon_\star$ sufficiently small\footnote{$\varepsilon_\star$ needs to be small so that the inequality in \cref{prop:small-set-cosystolic-expansion} is meaningful.} and with stretch/shrinkage factors  $\mu_1,\mu_2\geq 4$, and $
\mu_3,\mu_4\leq\tfrac14$.

Then $\mathcal{C}(\ker{\delta^t_1},\ker{\delta_2}) $ is a quantum error-correcting code with constant rate, relative distance, and local testability soundness.
\end{theorem}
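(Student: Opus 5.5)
The plan is to verify the three properties, rate, distance and soundness, separately, each reduced to \cref{prop:small-set-cosystolic-expansion}. Expansion is applied to $X$ for the $Z$-side and to $X^{\mathrm{rev}}$ for the $X$-side. By \cref{lem:reversal}, $\delta_1^t=\partial_2^X=\delta_2^{X^{\mathrm{rev}}}$ and $\delta_0^t=\partial_1^X=\delta_3^{X^{\mathrm{rev}}}$. So every statement about $\calC_X=\ker\delta_1^t$ is the same statement about a coboundary in the reversed complex. The reversed norms differ from the norms of $X$ by factors depending only on the $D_i,D_i'$, hence only by constants.

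\textbf{Rate.} Biregularity gives $|V_{I\cup\{i\}}|=\mu_i|V_I|$, so $|V_I|=|V_\emptyset|\prod_{i\in I}\mu_i$. The rate is at least $1-(|V_1|+|V_3|)/|V_2|$. I would bound every level-1 and level-3 block by a fraction of the blocks in level 2, using $\mu_1,\mu_2\ge4$ and $\mu_3,\mu_4\le\frac14$:
\begin{itemize}
\item $|V_{\{1\}}|\le\frac14|V_{\{1,2\}}|$ and $|V_{\{2\}}|\le\frac14|V_{\{1,2\}}|$;
\item $|V_{\{1,2,3\}}|,|V_{\{1,2,4\}}|\le\frac14|V_{\{1,2\}}|$;
\item $|V_{\{3\}}|\le|V_{\{3,4\}}|\cdot$ (a bounded factor), and the blocks $V_{\{1,3,4\}},V_{\{2,3,4\}}$ are charged to $V_{\{1,3\}},V_{\{2,3\}}$, etc.
\end{itemize}
A short optimization over the $\mu_i$ should show that the ratio is bounded away from $1$; the extreme case $\mu=(4,4,\frac14,\frac14)$ gives $17/20.06$. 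This bookkeeping is routine but must be checked for all admissible $\mu$, and I expect it to be the fiddliest step.

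\textbf{Distance.} Take $w\in\ker\delta_2\setminus\im\delta_1$ and replace it by a minimum-norm representative of its class. Such a $w$ is minimal, hence locally minimal. If $\norm{w}\le c\norm{V_2}$, \cref{prop:small-set-cosystolic-expansion} at $r=2$ gives $0=\norm{\delta_2w}\ge(m_{4,1}-O(\eps_\star^{1/4}))\norm{w}>0$, a contradiction. So $\norm{w}>c\norm{V_2}$, and \cref{rem:hamming-vs-norm} converts this to linear Hamming distance $d_Z$. The same argument in $X^{\mathrm{rev}}$ at level $2$ bounds $d_X$.

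\textbf{Soundness.} For $\calC_Z$, take $w\in C^2(X)$ and put $s=\delta_2w\in C^3(X)$.
\begin{enumerate}
\item If $\norm{s}\ge c\norm{V_3}$, soundness is trivial, since $\dist(w,\calC_Z)\le\norm{V_2}=O(\norm{V_3})$.
\item Otherwise, run \cref{def:flipping-algo} on $s$. By \cref{lem:flipping-algo-local-minimality} it outputs a locally minimal $s'=s+\delta_2h$ after at most $D_{[4]}\norm{s}$ iterations.
\item Each iteration adds $\iota_vh_v$, supported in a single link, whose norm is bounded by a constant (\cref{cor:link-ub}). Hence $\norm{h}\le C\norm{s}$, and the norm never increases, so $\norm{s'}\le\norm{s}\le c\norm{V_3}$.
\item Now $s'=\delta_2(w+h)$, so $\delta_3s'=0$. \cref{prop:small-set-cosystolic-expansion} at $r=3$ forces $s'=0$.
\item Therefore $w+h\in\calC_Z$ and $\dist(w,\calC_Z)\le\norm{h}\le C\norm{\delta_2w}$.
\end{enumerate}
For $\calC_X$, apply the identical argument in $X^{\mathrm{rev}}$ at level $3$, which is level $1$ of $X$, with $\delta_1^t,\delta_0^t$ in place of $\delta_2,\delta_3$. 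Locality holds because rows and columns of $\delta_2$ and $\delta_1^t$ have weight at most $\sum_iD_i+\sum_iD_i'$. Finally, the weighted soundness translates into the usual Hamming formulation with constant loss, as remarked after \cref{def:ltc}.
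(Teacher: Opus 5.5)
Your proposal is correct and follows the paper's proof essentially step by step: degree counts give locality, $|V_I|=|V_\emptyset|\prod_{i\in I}\mu_i$ gives the rate, \cref{prop:small-set-cosystolic-expansion} at level $2$ gives the distance, and the same proposition at level $3$ gives soundness once greedy flipping has used $O(\norm{\delta_2 w})$ corrections, each of bounded norm. The $X$-side is handled via $X^{\mathrm{rev}}$, and taking a globally minimal representative for the distance (instead of running the flipping algorithm) or splitting cases before rather than after flipping are only cosmetic differences. One small slip in the rate bookkeeping: charge $V_{\{3\}}$ to $V_{\{1,3\}}$, since $|V_{\{3\}}|=|V_{\{1,3\}}|/\mu_1\le\frac14|V_{\{1,3\}}|$, not to $V_{\{3,4\}}$, since $|V_{\{3\}}|/|V_{\{3,4\}}|=1/\mu_4\geq4$ is not a bounded factor.
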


\begin{figure}[h]
        \centering
        \input{figures/boolean-cube-rate}
        \caption{An illustration of the size of the blocks in the four-dimensional Boolean cube for $\mu_1=\mu_2=4$ and $\mu_3=\mu_4=\tfrac14$. Each block shows the ratio $\abs{V_I}/\abs{V_\emptyset}$.}
        \label{fig:boolean-cube-rate}
    \end{figure}
\begin{proof}
    Every vertex in $V_3$ and in $V_1$ has at most a constant number of neighbors in $V_2$. Specifically, every row of $H_Z=\delta_2$ has weight at most $3\max_{i\in[4]}D_i'$, and every row of $H_X=\delta_1^t$ has weight at most $3\max_{i\in[4]}D_i$. Thus the checks have constant weight, at most $3\max_{i\in[4]}\{D_i,D_i'\}$. Similarly, every vertex in $V_2$ has at most $\sum_{i=1}^k D_i+D_i'$ neighbors in both $V_1,V_3$, meaning $H_Z,H_X$ have constant weight columns, or equivalently, every qubit participates in a constant number of checks.

    For the rate, by biregularity, $\abs{V_{I\cup\{i\}}}=\mu_i\abs{V_I}$ for $i\notin I$, so $\abs{V_I}=\bigl(\prod_{i\in I}\mu_i\bigr)\abs{V_\emptyset}$, as illustrated in \cref{fig:boolean-cube-rate}.
    To bound the rate, count qubits and checks
    \[
        \rho \geq  1 - \frac{\abs{V_1} + \abs{V_3}}{\abs{V_2}} > 0.1.
    \]
    To bound both the local testability soundness and the relative distance, we apply the Greedy Flipping Algorithm, \cref{def:flipping-algo}, using both \cref{lem:flipping-algo-local-minimality} and \cref{prop:small-set-cosystolic-expansion}, to show the parameters are bounded below by positive constants. We give the full arguments for both parameters considering the $Z$-codes, $\calC_Z=\ker(\delta_2)$ and $\calC_Z/\calC_X^\perp=\ker(\delta_2)/\im(\delta_1)$, respectively. The same arguments applied to the reversed complex give the local testability soundness and relative distance of the $X$-codes $\calC_X=\ker(\delta^{\mathrm{rev}}_2)=\ker(\delta_1^t)$ and $\calC_X/\calC_Z^\perp=\ker(\delta^{\mathrm{rev}}_2)/\im(\delta^{\mathrm{rev}}_1)=\ker(\delta_1^t)/\im(\delta_2^t)$.

    For distance, let $f\in\ker(\delta_2)\setminus\im(\delta_1)$ represent a nonzero element in $\calC_Z/\calC_X^\perp$. Run the greedy flipping algorithm on it. Its output is $f'=f+\sum_{i=1}^r\delta_1(h_i)$, with $\norm{f'}\leq\norm{f}$, and is locally minimal by \cref{lem:flipping-algo-local-minimality}. Since $f'-f\in\im(\delta_1)=\calC_X^\perp$, the cochains $f$ and $f'$ represent the same nonzero element, so $f'\neq0$. As $\delta_2(f')=0$, \cref{prop:small-set-cosystolic-expansion} gives $\norm{f}\geq\norm{f'}\geq c\norm{V_2}$.

    For the local testability of $\calC_Z=\ker(\delta_2)$, look at its syndrome layer $V_3$. Let $w\in C^2(X)$ be some word (not necessarily a codeword in the kernel of $\delta_2$). By adding the closest codeword to it with no change to its syndrome we may assume its closest codeword is $0$. We wish to show that its norm is bounded above by a constant multiple of the norm of its syndrome. Let $f_{\mathrm{syn}}\coloneq \delta_2(w)$ be its syndrome. Run the greedy flipping algorithm on $f_{\mathrm{syn}}$ and obtain $f_{\mathrm{syn}}'=\delta_2(w)+\sum_{i=1}^r\delta_2(h_i)$ which has at most the same norm and is locally minimal. As $C^2(X)\to C^3(X)\to C^4(X)$ is a cochain complex, $\delta_3(f_{\mathrm{syn}}')=\delta_3(\delta_2(w+\sum_{i=1}^r h_i))=0$. By \cref{prop:small-set-cosystolic-expansion} either $f_{\mathrm{syn}}'=0$ or it has large norm $\norm{f_{\mathrm{syn}}'}\geq c\norm{V_3}$. Divide into these two cases:

    \begin{enumerate}
    \item $\norm{f_{\mathrm{syn}}'}\geq c\norm{V_3}$. The algorithm never increases the norm, so $\norm{f_{\mathrm{syn}}}\geq\norm{f_{\mathrm{syn}}'}$. Also, $\dist(w,\calC_Z)\leq\norm{w}\leq\norm{V_2}$. By biregularity, $\beta\coloneq\norm{V_3}/\norm{V_2}$ is a positive constant depending only on the degree parameters. Thus,
    \[
        \norm{f_{\mathrm{syn}}}
        \geq \norm{f_{\mathrm{syn}}'}
        \geq c\norm{V_3}
        =c\beta\norm{V_2}
        \geq c\beta\cdot\dist(w,\calC_Z).
    \]
    This gives the desired soundness bound with $s=c\beta>0$ in this case.
    \item $f_{\mathrm{syn}}'=0$. Set $u=\sum_{i=1}^r h_i$. Since $\delta_2(w+u)=f_{\mathrm{syn}}'=0$, the corrected word $w+u$ belongs to $\calC_Z$. Each change $h_i\in C^2(X)$ is supported on $\Lk_1(v)$ for some $v\in V_0$. This link level has six blocks, each of ambient norm at most $1$; hence $\norm{h_i}\leq 6$. By \cref{lem:flipping-algo-local-minimality}, $r\leq D_{[k]}\cdot\norm{f_{\mathrm{syn}}}$. Therefore,
    \[
        \dist(w,\calC_Z)
        \leq \norm{u}
        \leq \sum_{i=1}^r\norm{h_i}
        \leq 6r
        \leq {6D_{[k]}}\norm{f_{\mathrm{syn}}}.
    \]
    Rearranging gives $\norm{f_{\mathrm{syn}}}\geq s\cdot\dist(w,\calC_Z)$ with the constant $s=\frac{1}{6D_{[k]}}>0$.
    \end{enumerate}
    Together, the two cases give soundness $s_Z=\min\{c\beta,\frac{1}{6D_{[k]}}\}>0$.
\end{proof}

The construction generalizes to any dimension $k\geq4$ and any three consecutive layers $V_{\ell-1},V_\ell,V_{\ell+1}$ where $2\leq\ell\leq k-2$. To obtain constant rate, it is enough to assume $\mu_i\geq2k$ for $i\leq\ell$ and $\mu_i\leq1/(2k)$ for $i>\ell$. This makes \(V_{[\ell]}\), a block in the middle layer, larger than all other blocks combined by a constant factor.

\subsection{Linear Time Decoder}

Continuing to follow Lin and Hsieh and specifically \cite{HL22b} and Dinur, Hsieh, Lin and Vidick \cite{dinur2023goodquantumldpccodesstoc}, it is possible to implement a version of the greedy flipping algorithm \cref{def:flipping-algo} in linear time. We consider the $Z$-code $\calC_Z/\calC_X^\perp=\ker(\delta_2)/\im(\delta_1)$ here, but the same works for the $X$-code $\calC_X/\calC_Z^\perp=\ker(\delta_1^t)/\im(\delta_2^t)$, using the reversed complex's levels. In order to implement the linear decoder we need to maintain a list of vertices at the bottom layer $Q\subset V_0$, each $v\in Q$ a potential root for a local change, i.e. a cochain in its link $h\in C^{1}(\Lk(v))$ whose addition reduces the current norm of the syndrome. The main observation is that whenever such a cochain is added, only a constant number of new root-vertices that were already examined should be reexamined, while reducing the current syndrome's norm by at least a constant. This makes the algorithm run in linear time in the norm of the syndrome, and thus linear in the number of qubits as well.

For a set of vertices $A\subseteq V$, denote the set of vertices in the bottom layer $V_0$ whose links overlap with $A$ by $\Roots(A)\coloneq\{v\in V_0:\exists u\in A, v\leq u\}$. In the following algorithm, we use this notation to refer to the roots of new possible changes after adding a change $x\in C^3(X)$ to the current syndrome: all roots that may need reexamination belong to $\Roots(\supp(x))$.

\begin{algorithm}[H]
    \caption{Linear time greedy flipping, adapted from \cite[arxiv's Alg. 3]{HL22b, dinur2023goodquantumldpccodesstoc}}
    \label{alg:lin-time-flipping-algo}
    \renewcommand{\algorithmicrequire}{\textbf{Input:}}
    \begin{algorithmic}[1]
        \Require $f\in C^3(X)$.
        \State Initialize $Q\gets \Roots(\supp(f))$; $r\gets0$ 
        \While{$Q\neq\emptyset$ and $f\neq0$}\label{alg:linear-flipping line:while}
        \State Let $v\in Q$ and update $Q\gets Q\setminus\{v\}$
          \For{$h\in C^{1}(\Lk(v))$}
          \If{$\norm{f+\delta(\iota_vh)}<\norm{f}$}
            \State $f\gets f+\delta_2(\iota_vh)$; $h_{r+1}\gets h$ ; $v_{r+1}\gets v$ ; $r\gets r+1$ \label{alg:linear-flipping line:add-change}
            \State $Q\gets Q\cup\Roots(\supp(\delta_2(\iota_v h)))$ \label{alg:linear-flipping line:add-roots}
          \EndIf
          \EndFor
        \EndWhile
        \State \Return $\sum_{i=1}^r \iota_{v_i} h_i$
    \end{algorithmic}
\end{algorithm}

\begin{proposition}
    Let $X$ be a $(4,2)$-cubical complex satisfying the assumptions for \cref{cor:qltc}.

    Denote by $c,\alpha$ the constants in the hypothesis and the conclusion of \cref{prop:small-set-cosystolic-expansion}, respectively. Let $e\in C^2(X)$ be a cochain close to $\ker(\delta_2)$ i.e. $\dist(e,\ker(\delta_2))\leq\min\{\frac{c}{2}\norm{V_3},\frac{c}{2}\norm{V_2}-12,\frac{\alpha}{2}(\frac{c}{2}\norm{V_2}-12)\}$. Then running \cref{alg:lin-time-flipping-algo} on $\delta_2(e)\in C^3(X)$ takes linear time in $\norm{\delta_2(e)}$ (and thus linear in the number of qubits $\abs{V_2}$) and outputs a change $\sum_{i=1}^r h_i$ such that $e+\sum_{i=1}^r h_i\in\ker(\delta_2)$ and there is a boundary $x\in\im(\delta_1)$ such that $e+\sum_{i=1}^r h_i+x$ is the closest cochain in $\ker(\delta_2)$ to $e$.
\end{proposition}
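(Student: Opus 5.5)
The plan is to follow the linear-time decoder analysis of \cite{HL22b,dinur2023goodquantumldpccodesstoc}, split into three parts: a running-time bound, a termination/correctness invariant, and a final distance argument that identifies the output up to a boundary.

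\textbf{Running time.} Each successful change strictly decreases $\norm{f}$. As in \cref{lem:flipping-algo-local-minimality}, every decrease is at least $D_{[k]}^{-1}$, so there are at most $D_{[k]}\norm{\delta_2(e)}$ successful changes. Each successful change adds to $Q$ at most $|\Roots(\supp\delta_2(\iota_vh))|$ vertices. Since $\iota_vh$ is supported in $\Lk_1(v)$, this number is bounded by a constant depending only on the degrees. The initial queue has size at most $|\supp\delta_2(e)|\cdot\prod_i D_i'$, which is $O(\norm{\delta_2(e)})$ by \cref{rem:hamming-vs-norm}. Each pop costs at most $2^{|\Lk_1(v)|}$ norm evaluations, again a constant. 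Hence the total work is $O(\norm{\delta_2(e)})=O(|V_2|)$.

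\textbf{Termination state.} Maintain the invariant that every root $u\in V_0$ not in $Q$ admits no improving $h\in C^1(\Lk(u))$ for the current $f$. This holds initially for $u\notin\Roots(\supp f)$: then $f$ vanishes on $\Lk(u)$, so adding $\delta(\iota_uh)$ cannot lower the norm. It is preserved because a change at $v$ only alters $f$ on $\supp\delta_2(\iota_vh)$, and all roots whose links meet that support are re-enqueued. When the loop ends with $Q=\emptyset$, the current $f=\delta_2(e+u)$, with $u=\sum\iota_{v_i}h_i$, is locally minimal in the sense of \cref{def:minimality}, using the bottom-ancestor argument of \cref{lem:flipping-algo-local-minimality}. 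It also satisfies $\delta_3f=0$.

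To conclude $f=0$, bound its norm. Since $\norm{f}\le\norm{\delta_2 e}$ and, writing $e=e^\ast+z$ with $e^\ast\in\ker\delta_2$ closest, $\norm{\delta_2 e}=\norm{\delta_2 z}\le 3\cdot\frac{\norm{V_3}}{\norm{V_2}}\norm{z}$, we get a bound that is below $c\norm{V_3}$ by the hypothesis $\dist(e,\ker\delta_2)\le\frac c2\norm{V_3}$. This step may need adjusting the constant via \eqref{eq:norm-propagation}. \cref{prop:small-set-cosystolic-expansion} at level $3$ then forces $f=0$. Hence $e+u\in\ker\delta_2$.

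\textbf{Identifying the output up to a boundary.} This is the main obstacle. Set $y=(e+u)+e^\ast\in\ker\delta_2$. Then
\[
\norm{y}\le\norm{u}+\norm{z}\le 6D_{[k]}\norm{\delta_2 e}+\norm{z},
\]
using $\norm{h_i}\le6$ as in \cref{cor:qltc}. I would then run \cref{def:flipping-algo} on $y$ to obtain a locally minimal $y'$ in the same class modulo $\im\delta_1$. If $y\notin\im\delta_1$, then $y'\ne0$ and \cref{prop:small-set-cosystolic-expansion} at level $2$ gives $\norm{y'}\ge c\norm{V_2}$. The hypotheses with $-12$ and the factor $\alpha/2$ are designed to contradict this. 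The idea is to compare instead against the cocycle-level expansion $\norm{\delta_2 z}\ge\alpha\norm{z}$ applied to a locally minimal representative of $z$. This converts the syndrome norm into a bound on $\norm{z}$, so that $\norm{u}$ is controlled by $O(\dist(e,\ker\delta_2))$ rather than by $D_{[k]}$. The $-12$ absorbs two link-sized corrections of norm at most $6$ each. If the constants do not close, fall back to requiring $\dist$ smaller by a constant factor, which the stated bound presumably already encodes. Once $y\in\im\delta_1$, take $x=y$, so that $e+u+x=e^\ast$ is the closest codeword, as claimed.
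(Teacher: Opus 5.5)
\textbf{The gap is in the last step.} Your running-time and termination arguments are fine and essentially the paper's. For the syndrome bound, use \eqref{eq:norm-propagation} directly: $\norm{\delta_2\mathbf 1_{\{v\}}}=2\norm v$ for $v\in V_2$, so $\norm f\le\norm{\delta_2 z}\le 2\norm z\le c\norm{V_3}$. This is exactly why the hypothesis contains $\frac c2\norm{V_3}$.

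Your one-shot estimate gives
\[
\norm y\le(12D_{[k]}+1)\norm z\le(12D_{[k]}+1)\tfrac{\alpha}{2}\cdot\tfrac{c}{2}\norm{V_2}.
\]
Nothing in the hypotheses makes this at most $c\norm{V_2}$: $D_{[k]}$ is a large constant and $\alpha$ does not compensate. Your proposed repair does not help. Applying expansion to a locally minimal representative of the \emph{initial} $z$ only re-bounds $z$. It says nothing about the accumulated correction $u$, whose norm really can be of order $D_{[k]}\norm{\delta_2 e}$, the number of flips. The stated bound does not hide a factor $1/D_{[k]}$. The $-12$ is also not two link corrections of norm $6$.

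\textbf{The missing idea.} Never bound $\norm u$; instead track the coset of the running error. Shift so that $e^\ast=0$. Let $e_i'$ be the error after $i$ iterations of the while loop. Let $e_i$ be a minimum-norm representative of $e_i'+\im\delta_1$. Then $e_i$ is minimal, hence locally minimal, and $\delta_2 e_i=f_i$ is the current syndrome.

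Prove by induction that $\norm{e_i}\le\frac c2\norm{V_2}-12$. The base case is the hypothesis. For the step, one iteration adds a cochain $g_i$ supported in a single $\Lk_1(v)$, of norm at most $12$. Hence $\norm{e_i}\le\norm{e_{i-1}+g_i}\le\frac c2\norm{V_2}$. So $e_i$ lies in the small-set range, and level-$2$ expansion, together with the fact that the syndrome norm never increases, gives
\[
\norm{e_i}\le\tfrac1\alpha\norm{f_i}\le\tfrac1\alpha\norm{f_0}\le\tfrac2\alpha\norm{e_0}\le\tfrac c2\norm{V_2}-12 .
\]
This is precisely where the $-12$ and the factor $\frac\alpha2$ are used. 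The $-12$ keeps a single step inside the expansion range. The factor $\frac\alpha2$ restores the margin through the syndrome.

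At termination $f_t=0$, so $\norm{e_t}\le\alpha^{-1}\norm{f_t}=0$. Thus $e+u\in e^\ast+\im\delta_1$, which is the claim.
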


In \cref{prop:small-set-cosystolic-expansion} the conclusion's constant depends on the level for which it is applied. We will be applying it both to level $2$ and $3$, so to make notation easier take $\alpha$ to be the minimum between the two constants.

\begin{proof}
    Let $w\in \ker(\delta_2)$ be a closest one to $e$. We may replace $e$ with $e+w$ without loss of generality, by linearity and by $w$ not changing the syndrome. Then $0$ is a closest word in $\ker(\delta_2)$ to $e$ and we may use the norm as the relevant quantity to be zeroed-out by an equivalent word in $\calC_Z/\calC_X^\perp$ in order to show correctness. 
    
    Throughout this proof we will use notations to talk about the errors and syndromes after several iterations. Denote by $ e'_0\coloneq e$ the initial error and by $e_i'=e+\sum_{j=1}^ig_j$ the error after $i$ iterations of the while loop (line \ref{alg:linear-flipping line:while}), meagning $g_j$ is the sum of the changes $h_\ell$ whose cobaoundary were added during the $j$'th iteration. Denote by $e_i$ the minimal cochain in the class of $e'_i$ , i.e. $e_i\coloneq e'_i+\delta_1(x_i)$ for $x_i$ which minimizes its norm. Clearly $\norm{e_i}\leq\norm{e'_i}$. Denote the syndrome at that time by $f_i\coloneq\delta_2(e_i)=
    \delta_2(e'_i)$. The $e_i$-s are helpful for analysis, but the algorithm does not try to compute them at any point. It only accumulates the $e'_i$-s. 

    First we show the algorithm's linear time. Each iteration of the while loop (line \ref{alg:linear-flipping line:while}) is bounded by a constant number of steps, as the number of cochains in $C^1(\Lk(v))$ is bounded by a constant regardless of the vertex $v\in V_0$. We need to show the algorithm stops after linear number of iterations. $Q$ is initialized to have linear number of items in $\norm{f_0}$, say at most $K_1\norm{f_0}$. In each iteration either a vertex is taken out or the norm of the syndrome is reduced. In the latter case the norm is reduced by at least $D_{[k]}^{-1}$, a constant, and $Q$ is enlarged by at most another constant $K_2=\sum_{\substack{i_1\neq j_1\in[k]\\ i_2\neq j_2\in[k]}}D_{i_1}D_{j_1}D'_{i_2}D'_{j_2}$. Thus, in a linear number of steps (at most $(K_1+D_{[k]}K_2)\norm{f_0}$) we have either an empty $Q$ or a zero $f$, each of which stops the while loop and ends the algorithm.   
    
    Next we show that the syndrome at the end is $0$. Let $t$ be the number of iterations made before the algorithm stopped.
    The syndrome is never increased by the algorithm $\norm{f_t}\leq\norm{f_0}$. Additionally, by \cref{eq:norm-propagation}, $\norm{f_0}\leq2\norm{e'_0}$ and by the proposition's hypothesis $\norm{e'_0}\leq \frac{c}{2}\norm{V_3}$. We will show $f_t$ must be locally minimal, which implies by \cref{prop:small-set-cosystolic-expansion} that it expands $0=\norm{\delta_3(\delta_2(e_t))}=\norm{\delta_3(f_t)}\geq\alpha\norm{f_t}$ showing $f_t=0$. To see $f_t$ is locally minimal note the algorithm, before each iteration at line \ref{alg:linear-flipping line:while}, conserves the invariant: for every cochain $h\in C^2(X)$ that can reduce the syndrome $\norm{f_i+\delta_2(h)}<\norm{f_i}$ and is supported inside a single link $C^1(\Lk(v_h))$ for $v_h\in V_0$, the vertex $v_h$ is in $Q$.\footnote{There can be more than one such vertex -- for each such cochain assign one arbitrarily} This invariant clearly holds just before the first iteration, by the initialization. Assume the invariant holds before the $i$'th iteration. During this iteration the algorithm takes out one $v_i\in Q$ and examines all the cochains in $V_2$ supported inside its link. $f_i$ is the syndrome at the end of this iteration. Let $h'\in C^2(X)$ be a cochain that reduces $f_i$'s norm and that is supported on a single link $\Lk(v_{h'})$. Either $\delta_2(h')$ overlaps with the coboundary $\delta_2(h)$ of one of the cochains $h$ that were added to the overall change in line \ref{alg:linear-flipping line:add-change} during this iteration, or it does not overlap with any of them (including the case of having none during this iteration). If it does, then $v_{h'}\in\Roots(\supp(\delta_2(h)))$ and $v_{h'}$ was added to $Q$ in line \ref{alg:linear-flipping line:add-roots} during the iteration. Otherwise, its coboundary does not overlap with any bit that was changed from $f_{i-1}$ to $f_i$ and hence must decrease the norm of $f_{i-1}$ as well. In this case $v_{h'}\in Q$ before this iteration and $v_{h'}\neq v_i$ so it was not taken out from $Q$, showing it to still be in $Q$ after the iteration. 
    
    We will use this now to show the error at the end is also $0$. By regularity and the normalization of the norm, $\norm{f_i}\leq2\norm{e_i}$. On the other hand, for $\norm{e_i}\leq c\norm{V_2}$ we get by \cref{prop:small-set-cosystolic-expansion} $\norm{e_i}\leq \frac1\alpha\norm{f_i}$ for a constant $\alpha$. By induction we'll see the condition is satisfied for all $i$, showing the final error to be $0$. The induction will show and use something stronger: that $\norm{e_i}\leq \frac{c}{2}\norm{V_2}-\beta$ for the constant  $\beta\coloneq 12$ bounding the norm of all vertices in $V_2$ in a single link ($1$ for each monotone path of directions from $V_0$ to $V_2$, i.e. for each ordered pair $i\neq j\in[4]$), which is the maximal change in the error in a single iteration. The induction basis is given by the hypothesis of the proposition. For the step, by the induction hypothesis $\norm{e_{i-1}}\leq \frac{c}{2}\norm{V_2}-\beta$ and the fact that at most $\beta$ bits are changed from $e_{i-1}$ to $e_{i-1}+g$ for $g\in C^2(X)$ supported in a single link, which is equivalent to the minimal $e_i$, we get that $\norm{e_i}\leq\norm{e_{i-1}+g}\leq \frac{c}{2}\norm{V_2}$. Thus by \cref{prop:small-set-cosystolic-expansion} we have $\norm{e_i} \leq \frac1\alpha\norm{f_i} \leq \frac1\alpha\norm{f_0} \leq \frac2\alpha\norm{e_0} \leq \frac2\alpha\frac{\alpha}{2}(\frac{c}{2}\norm{V_2}-\beta)=\frac{c}{2}\norm{V_2}-\beta$ where the second inequality is by the norm of the syndrome never increasing in iterations, and the last inequality is by the proposition's hypothesis.  
\end{proof}

%% file: figures/boolean-cube-rate.tex
% The Boolean 4-cube for mu_1 = mu_2 = 4, mu_3 = mu_4 = 1/4.
% Disk sizes indicate the ordering of relative block sizes qualitatively.
\begin{tikzpicture}[
    block/.style={circle, fill=blue!12, draw=blue!45!black,
        line width=.5pt, inner sep=0pt, outer sep=0pt},
    block label/.style={fill=white, inner sep=1.5pt, align=center,
        font=\small, anchor=north},
    cube edge/.style={draw=black!40, line width=.55pt}
]
    \coordinate (v0) at (0,0);
    \coordinate (v1) at (-4.5,2.1);
    \coordinate (v2) at (-1.5,2.1);
    \coordinate (v3) at (1.5,2.1);
    \coordinate (v4) at (4.5,2.1);
    \coordinate (v12) at (-6.25,4.2);
    \coordinate (v13) at (-3.75,4.2);
    \coordinate (v14) at (-1.25,4.2);
    \coordinate (v23) at (1.25,4.2);
    \coordinate (v24) at (3.75,4.2);
    \coordinate (v34) at (6.25,4.2);
    \coordinate (v123) at (-4.5,6.3);
    \coordinate (v124) at (-1.5,6.3);
    \coordinate (v134) at (1.5,6.3);
    \coordinate (v234) at (4.5,6.3);
    \coordinate (v1234) at (0,8.4);

    % Draw the 32 edges first so that labels remain unobstructed.
    \foreach \a/\b in {
        0/1,0/2,0/3,0/4,
        1/12,1/13,1/14,2/12,2/23,2/24,
        3/13,3/23,3/34,4/14,4/24,4/34,
        12/123,12/124,13/123,13/134,14/124,14/134,
        23/123,23/234,24/124,24/234,34/134,34/234,
        123/1234,124/1234,134/1234,234/1234}
        \draw[cube edge] (v\a) -- (v\b);

    % Diameters range from 0.75 to 1.95 cm, in steps of 0.30 cm per factor of 4.
    % Ratios are inside the disks; block names are above the upper two ranks
    % and below the remaining ranks.
    \foreach \id/\subset/\ratio/\value in {
        0/\emptyset/1/1,
        1/{\{1\}}/4/4, 2/{\{2\}}/4/4,
        3/{\{3\}}/{\frac14}/{1/4}, 4/{\{4\}}/{\frac14}/{1/4},
        12/{\{1,2\}}/16/16,
        13/{\{1,3\}}/1/1, 14/{\{1,4\}}/1/1,
        23/{\{2,3\}}/1/1, 24/{\{2,4\}}/1/1,
        34/{\{3,4\}}/{\frac{1}{16}}/{1/16},
        123/{\{1,2,3\}}/4/4, 124/{\{1,2,4\}}/4/4,
        134/{\{1,3,4\}}/{\frac14}/{1/4},
        234/{\{2,3,4\}}/{\frac14}/{1/4},
        1234/{\{1,2,3,4\}}/1/1}
    {
        \pgfmathsetmacro{\diameter}{1.35 + 0.30*ln(\value)/ln(4)}
        \node[block, minimum size=\diameter cm] (b\id) at (v\id) {};
    }
    % Place labels after all disks, above the connecting edges.
    \foreach \id/\subset/\ratio in {
        0/\emptyset/1,
        1/{\{1\}}/4, 2/{\{2\}}/4,
        3/{\{3\}}/{\frac14}, 4/{\{4\}}/{\frac14},
        12/{\{1,2\}}/16,
        13/{\{1,3\}}/1, 14/{\{1,4\}}/1,
        23/{\{2,3\}}/1, 24/{\{2,4\}}/1,
        34/{\{3,4\}}/{\frac{1}{16}},
        123/{\{1,2,3\}}/4, 124/{\{1,2,4\}}/4,
        134/{\{1,3,4\}}/{\frac14}, 234/{\{2,3,4\}}/{\frac14},
        1234/{\{1,2,3,4\}}/1}
        {
            \node[font=\small, inner sep=0pt] at (v\id) {$\ratio$};
            \ifnum\id>100
                \node[block label, anchor=south] at (b\id.north) {$V_{\subset}$};
            \else
                \node[block label] at (b\id.south) {$V_{\subset}$};
            \fi
        }
\end{tikzpicture}